\providecommand{\U}[1]{\protect\rule{.1in}{.1in}}
\newtheorem{theorem}{Theorem}
\newtheorem{corollary}[theorem]{Corollary}
\newtheorem{definition}[theorem]{Definition}
\newtheorem{lemma}[theorem]{Lemma}
\newtheorem{proposition}[theorem]{Proposition}
\newtheorem{remark}[theorem]{Remark}
\newenvironment{proof}[1][Proof]{\noindent\textbf{#1.} }{\ \rule{0.5em}{0.5em}}
\def\be{\begin{equation}}
\def\ee{\end{equation}}
\def\reff#1{(\ref{#1})}
\def\eps{\varepsilon}
\def\oF{{\overline{F}}} 
\def\orho{{\overline{\rho}}} 
\def\cD{{\cal D}} 
\def\cB{{\cal B}}     
\def\cU{{\cal U}}        
\def\cH{{\cal H}}
\def\cN{{\cal N}}
\def\cE{{\cal E}}
\DeclareMathOperator{\tr}{Tr}  
\def\tI{{\widetilde{I}}}
\def\tomega{{\widetilde{\omega}}}
\let\originalleft\left
\let\originalright\right
\renewcommand{\left}{\mathopen{}\mathclose\bgroup\originalleft}
\renewcommand{\right}{\aftergroup\egroup\originalright}
\def\mean{average }
\def\Mean{Average }
\begin{document}

\title{\textbf{One-shot lossy quantum data compression}}
\author{Nilanjana Datta\\\textit{Statistical Laboratory, University of Cambridge,}\\\textit{Cambridge CB3 0WB, United Kingdom}\\
\and Joseph M.~Renes and Renato Renner\\\textit{Institute for Theoretical Physics, ETH Zurich,}\\\textit{8093 Z\"urich, Switzerland}\\
\and Mark M. Wilde\\\textit{School of Computer Science, McGill University,}\\\textit{Montreal, Quebec H3A 2A7, Canada}}
\maketitle

\begin{abstract}
We provide a framework for one-shot quantum rate distortion coding,
in which the goal is to determine the minimum number of qubits required to compress
quantum information as a function of the
probability that the distortion incurred upon decompression
exceeds some specified level.
We obtain a one-shot
characterization of the minimum qubit compression size for an
entanglement-assisted quantum rate-distortion code
in terms of the smooth max-information,
a quantity previously employed in the
one-shot quantum reverse Shannon theorem.
Next, we show
how this characterization converges to the known expression for
the entanglement-assisted quantum rate distortion function for
asymptotically many copies of a memoryless
quantum information source. Finally, we give a tight, finite blocklength
characterization for the entanglement-assisted minimum qubit compression
size of a memoryless isotropic qubit source subject to an average symbol-wise
distortion constraint.
\end{abstract}

\section{Introduction}

The reliable compression of data is essential for the efficient use
of available storage or communication resources.
In one of the first breakthroughs of
quantum information theory, Schumacher~\cite{Schumacher:1995dg} proved
that the von Neumann
entropy of a memoryless
quantum information source is the optimal rate
at which we can compress it. This data compression limit was evaluated under the
requirement that the compression-decompression scheme is
asymptotically {\em{lossless}}, in the sense that the information emitted
by the source is recovered with arbitrarily good accuracy in the
limit of asymptotically many copies of the source.

However, one could envisage scenarios in which
some imperfection in the recovered information
would be tolerable or even necessary.
The characterization of the trade-off between
an allowed distortion
and the compression rate is the subject of quantum rate
distortion theory. Its
classical counterpart was developed by
Shannon~\cite{Shannon}, and the trade-off is given
by a {\em{rate-distortion function}}, which is defined
as the minimum rate of compression for a given
distortion, with respect to a suitably defined distortion measure.
To our knowledge, there are at least
two important reasons for developing the theory of
lossy quantum data compression:
\begin{enumerate}
\item One might need to compress a quantum information source at a rate smaller
than its von Neumann entropy. This is necessary, for example,
in the case where there is insufficient storage available, or if one needs
to transmit information emitted by a source over a channel
whose quantum capacity is smaller than the von Neumann entropy of
the source. The strong converse to Schumacher's
theorem implies that there is no
trade-off possible between the rate of compression and
the error incurred in recovery in the asymptotic limit
(see Theorem~I.19 of \cite{W99thesis}). That is,
there cannot be a ``rate-error'' trade-off because if one compresses
at a rate below the von Neumann entropy, then the
fidelity between the initial and recovered state approaches
zero exponentially in the number of copies of the source.
In spite of this ``no-go'' theorem,
the theory of quantum rate distortion shows that there can be
a fundamental trade-off between rate and distortion for a
suitably defined distortion measure.

\item  Allowing a finite distortion in the recovered
data is essential for some continuous-variable
quantum information sources (see~\cite{WPGCRSL11}
and references therein) for which the requirement of
arbitrarily good accuracy becomes meaningless.\footnote{An
important exception here is the case of a bosonic thermal source, which has a
discrete representation in the orthonormal photon-number basis. Thus,
 Schumacher compression of a bosonic thermal source is indeed possible,
 even though its representation in the coherent-state basis is continuous.} That is,
we would like to have a theory that characterizes the
compression of analog quantum information into digital quantum information
along with the distortion incurred in doing so.
\end{enumerate}

The first paper to discuss rate distortion in the quantum realm
was by Barnum~\cite{B00}. He introduced a definition of the quantum
rate-distortion function as the lowest rate at which a sender
can compress a memoryless quantum source under some distortion constraint.
The main result of his paper is a lower bound
on the quantum rate distortion function in terms of
a well-known entropic quantity, namely, the coherent
information. Even though Barnum's result was the first in
quantum rate distortion theory, it is unsatisfactory
since the bound is obviously loose---the coherent information
can be negative, whereas the quantum rate distortion function is
defined operationally to be non-negative. Tighter, non-negative lower bounds were found
in later work, by allowing for assisting resources such
as entanglement assistance \cite{DHW11}
or a side classical channel \cite{DWHW12}.

Even though classical rate distortion theory has been an
area of active research, its quantum analogue had received
very little attention, there being only a few results
on it since Barnum's work~\cite{Devetak:2002it,CW08,LD09}. 
In the past few years, however, there has been a revival of interest in
quantum rate distortion theory, and quite a few new results have been
obtained~\cite{DHW11,DHWW12,DWHW12}.
These later works found various expressions
for quantum rate distortion functions,
both in the absence and presence of auxiliary resources,
which can be exploited in the data compression task.

In all prior work on quantum rate distortion theory,
the rate-distortion functions were evaluated in the limit
of asymptotically many copies of a memoryless quantum information
source. Since the data compression rates in those works
were achieved using block codes,
this corresponds to the limit $n \to \infty$, where $n$
denotes the length of the block code. These results then give useful
bounds in an idealized setting, but they are not particularly helpful
in characterizing the rate-distortion trade-off for more realistic
settings, such as the finite blocklength setting or one in which
the source is not memoryless.

A more fundamental problem, of both theoretical and practical
interest, is to find bounds on rate distortion functions
for a given distortion $D \geq 0$ and an ``excess-distortion''
probability $\eps > 0$. For example, consider the classical case.
Let a source be described by a random variable $X$ taking
values in a finite alphabet $\mathcal{X}$.
We would like to find the minimum number of bits to which we
can compress this source, such
that the probability of exceeding a distortion level $D$ is no larger than some
small~$\eps>0$:
\begin{equation}
\Pr \{ d(X, (\cD \circ \cE)(X)) > D\} := \sum_{x \in \mathcal{X}} p_X(x)
\, I\{d(x, (\cD \circ \cE)(x)) > D\} \leq \eps, \label{eq:classical-case}
\end{equation}
where $I\{\cdot\}$ denotes the indicator function, $d(\cdot,\cdot)$ is a distortion measure, and $\cE$ and
$\cD$ are the respective encoder and decoder for the
scheme.
We could then evaluate such a bound for a source that
is invoked a finite number of times. In the classical case, in
certain applications, relatively short blocklengths are in
fact common, both due to delay and complexity constraints, and
we would expect similar constraints to apply in the quantum case.
In this vein, Kostina and Verd\'u recently obtained bounds on
the minimum achievable rate of classical data compression
 as a function of blocklength $n$ and excess
distortion probability $\eps$ \cite{KV12}.

\section{Overview of Results}

In this paper, we contribute the following results:
\begin{itemize}

\item We first establish a framework for one-shot quantum rate-distortion
theory. This includes some basic definitions and the notion of
an {\it excess-distortion projector}, which is derived from
a {\it distortion observable}.\footnote{A distortion observable is a generalization
of the distortion measure used in classical rate-distortion theory.}
The definitions apply in settings where either there is no assisting resource
or entanglement assistance is available.

\item We obtain two lower (converse) 
bounds (Propositions~\ref{prop:EA-QRD-converse} and
\ref{prop:EA-QRD-converse-alt})
on the {\it minimum qubit compression size},
which is the minimum number of qubits needed to
compress the source state such that a receiver can
recover it up to some specified excess-distortion probability.
The bounds apply in the entanglement-assisted setting, and as such,
they apply in the unassisted case as well.
These bounds are given
in terms of quantities defined in the smooth-entropy framework
of one-shot information theory (see~\cite{Renner2005, T12, DKFRR12, DH11},
and references therein) and
are proved by employing ideas from quantum hypothesis testing
(see, e.g.,~\cite{DKFRR12} and references therein). One of
our converse bounds (Proposition~\ref{prop:EA-QRD-converse-alt})
can in fact be viewed as a generalization
of a converse bound proved in the classical case by Kostina
and Verd\'u~\cite{KV12}.

\item Achievability bounds in
Sections~\ref{sec:sim-emb-states} and \ref{sec:max-ent-sim}
are proved using
a one-shot version of the quantum reverse Shannon
(channel simulation) theorem~\cite{BCR11}.
A channel simulation theorem provides bounds on the minimum
number of qubits that a sender (say, Alice) needs to send to a
receiver (say, Bob) in order
to simulate a quantum channel up to a finite accuracy.
A channel simulation strategy then leads to bounds on the
one-shot entanglement-assisted quantum rate-distortion function
by choosing the simulated channel to depend on the distortion measure.

\item Theorem~\ref{thm:unify-one-shot-EA-QRD} unifies
the above results, demonstrating that the
smooth max-information from~\cite{BCR11} provides a characterization of the one-shot
entanglement-assisted rate distortion function up to logarithmic correction terms.


\item The bounds obtained in the one-shot setting readily yield
bounds on the minimum qubit compression rate for
finite blocklength, for a memoryless quantum information source.
In the limit of asymptotically large blocklength ($n \to \infty$), these bounds converge independently to the known single-letter expression for the entanglement-assisted
quantum rate distortion function~\cite{DHW11}, given in terms of the quantum mutual information.
 
\item We demonstrate how a good channel simulation protocol, in which the
simulated channel depends on the distortion measure, leads to
a rate distortion protocol that performs well with respect
to the excess-distortion probability criterion (see Lemma~\ref{lem:chan-sim-QRD-excess-dist}
for details).

\item Our final contributions in Section~\ref{sec:isotropic}
are 1) to evaluate
one of the aforementioned converse bounds
for the special case of an isotropic qubit source and an entanglement-fidelity based
distortion measure
and 2) to outline a quantum teleportation strategy that nearly meets
this converse bound in the finite blocklength regime. Even though this latter strategy
is rather simple, it represents the
first example in quantum rate distortion theory where
a strategy other than channel simulation is used to achieve
non-trivial compression rates. 

\end{itemize}

This paper is organized as follows. In the next section,
we introduce necessary notation and definitions, especially
for the entropic quantities arising in the statements
of the theorems. The rest of the paper proceeds in the order of
the results mentioned above, and then we end with a conclusion that
summarizes our results and points to open questions for future research.

\section{Notation and Definitions}
\label{sec:prelim}

Let ${\cal B}(\cH)$ denote the algebra of linear operators acting on a
finite-dimensional Hilbert space $\cH$, let~$\cB(\cH)_+$
denote the set of positive semi-definite operators on $\cH$, and let
${\cD}({\cH}) \subset {\cal B}(\cH)_+$ denote the set of density operators (or states), i.e., 
positive semi-definite operators of unit trace. Furthermore, we define the set of subnormalized states ${\cD}_{\le}({\cH}):= \{ \rho \in \cB(\cH)_+\,:\, \tr \rho \le 1\}$.  
Throughout this paper, for simplicity, we
restrict our considerations to finite-dimensional Hilbert spaces,
and we denote the dimension of a Hilbert space $\cH_A$ as
$|A|$. \footnote{However, note that none of our bounds depend on the dimension
of the input space, so that our results may easily be
generalized to cases where the data to be compressed is infinite-dimensional.
We leave this consideration for future work, where one should be able to use the methods
from \cite{FAR11}.}

For states $\rho, \sigma \in \cD (\cH)$, the quantum fidelity is defined as
\begin{equation}\label{fidelity}
F(\rho, \sigma):= 
||{\sqrt{\rho}\sqrt{\sigma}}||_1,
\end{equation}
where $\Vert A \Vert_1 = \tr (\sqrt{A A^\dagger}).$
Uhlmann characterized the fidelity as the maximal overlap between any
two purifications $\left\vert \phi_{\rho}\right\rangle$ and
$\left\vert \phi_{\sigma}\right\rangle$ of $\rho$ and $\sigma$, respectively \cite{U73}:%
\[
F\left(  \rho,\sigma\right)  =\max_{\left\vert \phi_{\rho}\right\rangle
,\left\vert \phi_{\sigma}\right\rangle }\left\vert \left\langle \phi_{\rho
}|\phi_{\sigma}\right\rangle \right\vert.
\]
Thus, the {\it square} of the fidelity has an operational interpretation as the optimal
probability with which a purification of $\rho$ would pass a test for being a
purification of $\sigma$ \cite{J94}. Since all purifications are related by an isometric
operation on the purifying system, Uhlmann's characterization is equivalent to the
following one:%
\begin{equation}
F\left(  \rho,\sigma\right)  =\max_{U}\left\vert \left\langle \phi_{\rho
}|\left(  U\otimes I_{\mathcal{H}}\right)  |\phi_{\sigma}\right\rangle
\right\vert,\label{eq:uhlmann-thm}%
\end{equation}
where $\left\vert \phi_{\rho}\right\rangle $ and $\left\vert \phi_{\sigma
}\right\rangle $ are now two fixed purifications of $\rho$ and $\sigma$,
respectively, and the optimization is over all isometries acting on the
purifying system. The fact that (\ref{fidelity}) is equal to
(\ref{eq:uhlmann-thm}) is known as Uhlmann's theorem. The trace distance between two states
$\rho$ and $\sigma$ is defined as follows:
$
\Vert \rho - \sigma \Vert_1 ,
$
and the fidelity and trace distance are related by the Fuchs-van-de-Graaf
inequalities~\cite{FvG99}:%
\begin{equation}
1-F\left(  \rho,\sigma\right)  \leq \tfrac{1}{2}\left\Vert \rho
-\sigma\right\Vert _{1}\leq\sqrt{1-(F\left(  \rho,\sigma\right))^2  }%
.\label{eq:FvG-ineqs}%
\end{equation}

Moreover, for $\rho, \sigma \in \cD_{\leq} (\cH)$ let 
$\oF(\rho, \sigma)$ denote the generalized fidelity~\cite{TCR10}:
\be
\oF(\rho, \sigma) = F(\rho, \sigma) + \sqrt{(1- \tr \rho) (1- \tr \sigma)}.
\ee
Observe that the generalized fidelity reduces to the 
standard fidelity in \reff{fidelity} if at least one of the two states is normalized.
The purified distance quantifies the distance between any two subnormalized states
$\rho, \sigma \in \cD_{\leq}(\cH)$~\cite{TCR10}:
\be
P(\rho, \sigma) := \sqrt{ 1 - \left(\oF(\rho, \sigma)\right)^2}.
\ee

We denote a quantum channel, i.e., a completely positive trace-preserving (CPTP)
map $\cE: \cB({\cH}_A) \mapsto \cB({\cH}_B)$ simply as
$\cE_{A\to B}$. Similarly, we denote an isometry $U: {\cH}_A \mapsto {\cH}_B \otimes {\cH}_C$ simply as $U_{A\to BC}$.

The von Neumann entropy of a state $\rho\in\cD(\cH_A)$ is
given by $H(A)_{\rho} := - \tr (\rho\log \rho)$. Throughout
this paper we take the logarithm to base $2$. For a bipartite
state $\rho_{AB} \in \cD(\cH_{AB})$, the conditional entropy
of $A$ given $B$, and the quantum mutual information between
$A$ and $B$ are respectively given by:
\begin{align}
H(A|B)_\rho & := H(AB)_\rho - H(B)_\rho,\\
I(A;B)_\rho & := H(A)_\rho + H(B)_\rho - H(AB)_\rho,
\end{align}
where $ H(A)_\rho$ denotes the von Neumann entropy of the reduced state $\rho_A = \tr_B (\rho_{AB})$. Furthermore, for $\rho \in \cD(\cH)$ and $\sigma\in \cB_+(\cH)$, such that ${\rm{supp }}\, \rho \subseteq {\rm{supp }}\, \sigma$, the quantum relative entropy is defined as
\be\label{qrel}
D(\rho||\sigma) = \tr \left(\rho \log \rho\right) - \tr 
\left(\rho \log \sigma\right).
\ee

We also make use of several other entropic quantities having their origin 
in the work of Renner~\cite{Renner2005}.
The max-relative entropy of a subnormalized state $\rho\in \cD_{\leq}(\cH)$
and an operator $ \sigma \in \cB(\cH)_+$
is defined as \cite{datta-2008-2}
\begin{equation}\label{dmax}
D_{\max}(\rho||\sigma):= \min \{\gamma:\rho\leq 2^\gamma \sigma\}.
\end{equation}
For any $\eps >0$, the smooth max-relative entropy is given by
$$D_{\max}^\eps(\rho||\sigma):= \min_{\orho \in \cB^\eps(\rho)}D_{\max}(\orho||\sigma), $$
where $\cB^\eps(\rho)$ denotes a ball of subnormalized states around $\rho \in \cD_{\leq}(\cH)$:
\be\label{ball}
\cB^\eps(\rho):= \{ \orho \in \cD_{\leq}(\cH) \,:\, P(\rho, \orho) \le \eps\}.
\ee
The conditional min-entropy of $A$ given $B$ for $\rho_{AB} \in \cD_{\leq} (\cH_{AB})$ is defined as
\be
H_{\min}(A|B)_\rho := - \min_{\sigma_B \in \cD(\cH_B)} D_{\max}(\rho_{AB}||I_A \otimes \sigma_B).
\ee
If the system $B$ is trivial, then this reduces to $H_{\min}(A)_\rho = - \log ||\rho_A||_\infty$, where $||\cdot ||_\infty$ denotes the operator norm.
The max-information that $B$ has about $A$ for $\rho_{AB} \in \cD_{\leq}(\cH_{AB})$ is
defined as \cite{BCR11}
\be
I_{\max}(A;B)_\rho = \min_{\sigma_B \in \cD(\cH_B)}D_{\max}(\rho_{AB}||\rho_A \otimes \sigma_B). 
\ee

For any $\eps \ge 0$, the smooth versions of the above quantities are defined as follows: 
\begin{align}
H^\eps_{\min}(A|B)_\rho & := \max_{\orho_{AB} \in \cB^\eps(\rho_{AB})}H_{\min}(A|B)_\orho, \label{1hmin}\\
I^\eps_{\max}(A;B)_\rho & := \min_{\orho_{AB} \in \cB^\eps(\rho_{AB})}I_{\max}(A;B)_\orho.\label{1imax}
\end{align}

For sequences of tensor power states, the (conditional) von Neumann entropy
and the quantum mutual information are equal to the smooth
entropy quantities defined above in an asymptotic
limit \cite{TCR09,T12}. That is, for a sequence of
states $\{\rho_{A^nB^n}\}_{n\ge 1}$, where 
$\rho_{A^nB^n} = \rho_{AB}^{\otimes n} \in \cD\left(\cH_{AB}^{\otimes n}\right)$,
it is known that, for $\eps\in(0,1)$,
\begin{align}
\lim_{n\to \infty} \frac{1}{n} H_{\min}^\eps(A^n|B^n)_\rho &= H(A|B)_\rho \label{as_hmin},\\
\lim_{n\to \infty}  \frac{1}{n}  I_{\max}^\eps(A^n;B^n)_\rho &= I(A;B)_\rho\label{as_imax} .
\end{align}
Furthermore, for any $\eps \ge 0$ and $\rho \in \cD(\cH_A)$, we define
\be\label{1h0}
H^\eps_0(A)_\rho := \min_{\orho \in \cB^\eps(\rho)} H_0(A)_\orho,
\ee
where $H_0(A)_\orho = \log {\rm{rank }}\,\orho$ denotes the R\'enyi entropy
of order zero. It is also known that for a sequence of states 
$\{\rho_{A^n}\}_{n\ge 1}$, with $ \rho_{A^n}= \rho_{A}^{\otimes n}$, and for every
$\eps \in (0,1)$
\be\label{as_h0}
\lim_{n\to \infty}  \frac{1}{n}  H_{0}^\eps(A^n)_{\rho} = H(A)_\rho.
\ee
(The above result is in \cite{W99thesis}. It also follows from (4.2) and Result~6 of \cite{T12}.) 

We shall also make use of the {\em{hypothesis testing relative entropy}}. 
First, let $\beta_{\varepsilon}\left(  \rho||\sigma\right)  $
denote the optimal type II\ error probability in a quantum hypothesis test
that distinguishes between  $\rho$ and
some other state $\sigma$, when the type I\ error probability is fixed to be
less than $\varepsilon$:%
\begin{align}
	\label{eq:errorbeta}
\beta_{\varepsilon}\left(  {\rho}||\sigma\right)  := 
\min_{\Lambda}\ \left\{  \text{Tr}\left\{  \Lambda\sigma
\right\}  :0\leq\Lambda\leq I,\ \text{Tr}\left\{  \Lambda
{\rho}\right\}  \geq1-\varepsilon\right\}  .
\end{align}
Wang and Renner \cite{WR12} define the hypothesis testing relative entropy as 
\begin{equation}\label{eq:DH1}
D_H^\eps(\rho||\sigma) = -\log \beta_\eps(\rho||\sigma).
\end{equation}
Various properties of this quantity were explored
in \cite{DKFRR12}, one of which is the following useful lemma:
\begin{lemma}
 \label{lem:DH2}
Let $\rho \in \cD(\cH)$, $\sigma \in \cB_+(\cH)$ and $0< \eps \le 1$. Then
\be 
 D_{\max}^{\sqrt{2 (1-\eps)}}(\rho || \sigma)  + \log \left(\frac{1}{1-\eps} \right)
 \le D_H^\eps(\rho || \sigma) \le D_{\max}(\rho||\sigma) + \log \left(\frac{1}{1-\eps}\right).
\ee
\end{lemma}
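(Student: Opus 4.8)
The plan is to establish the two inequalities separately, treating each as a comparison between the hypothesis-testing quantity $D_H^\eps$ and an appropriate max-relative entropy via the single-parameter definitions \reff{dmax} and \reff{eq:errorbeta}.

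First I would prove the upper bound $D_H^\eps(\rho||\sigma) \le D_{\max}(\rho||\sigma) + \log\frac{1}{1-\eps}$. Let $\gamma = D_{\max}(\rho||\sigma)$, so that $\rho \le 2^\gamma\sigma$. The natural candidate test operator is $\Lambda = I$, which gives $\tr\{\Lambda\rho\} = 1 \ge 1-\eps$ trivially, but yields $\tr\{\Lambda\sigma\} = \tr\sigma$, which need not be small. A better choice is to take a suitably scaled projector or simply to use $\Lambda = I$ together with the operator inequality: since $\rho \le 2^\gamma\sigma$, we have $\tr\sigma \ge 2^{-\gamma}\tr\rho = 2^{-\gamma}$, which is the wrong direction. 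Instead, the cleaner route is to observe that for \emph{any} feasible $\Lambda$ in the definition of $\beta_\eps$, one can lower bound $\tr\{\Lambda\sigma\}$: from $\sigma \ge 2^{-\gamma}\rho$ we get $\tr\{\Lambda\sigma\} \ge 2^{-\gamma}\tr\{\Lambda\rho\} \ge 2^{-\gamma}(1-\eps)$, and minimizing over $\Lambda$ gives $\beta_\eps(\rho||\sigma) \ge 2^{-\gamma}(1-\eps)$, hence $D_H^\eps = -\log\beta_\eps \le \gamma + \log\frac{1}{1-\eps}$. This direction is essentially immediate once one writes down the operator inequality.

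Next I would prove the lower bound $D_H^\eps(\rho||\sigma) \ge D_{\max}^{\sqrt{2(1-\eps)}}(\rho||\sigma) + \log\frac{1}{1-\eps}$. Here the strategy is to take a near-optimal test $\Lambda$ achieving $\beta := \beta_\eps(\rho||\sigma)$ with $\tr\{\Lambda\rho\}\ge 1-\eps$, and to use it to construct a subnormalized state $\orho$ that is $\sqrt{2(1-\eps)}$-close to $\rho$ in purified distance and satisfies $\orho \le 2^{\gamma'}\sigma$ for $\gamma' = -\log\beta - \log\frac{1}{1-\eps}$. The natural candidate is something like $\orho = \frac{1}{1-\eps}\sqrt{\Lambda}\,\rho\,\sqrt{\Lambda}$ (or a variant using $\Lambda$ itself), chosen so that $\orho \le \frac{1}{(1-\eps)}\cdot\frac{1}{\text{something}}\,\sigma$ using $\sqrt{\Lambda}\,\rho\,\sqrt{\Lambda} \le \|\sqrt{\Lambda}\rho\sqrt{\Lambda}\|$-type bounds—more precisely, using that $\sqrt\Lambda\,\rho\,\sqrt\Lambda \le$ (operator bound) relating to $\tr\{\Lambda\sigma\} = \beta$. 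One then checks the closeness: $\tr\orho$ is close to $1$ because $\tr\{\Lambda\rho\}\ge 1-\eps$, and the fidelity $F(\rho,\orho)$ (or generalized fidelity $\oF$) is large because $\orho$ is obtained from $\rho$ by a gentle measurement; the gentle measurement lemma then yields $P(\rho,\orho)\le\sqrt{2(1-\eps)}$ or a comparable bound, after which $D_{\max}^{\sqrt{2(1-\eps)}}(\rho||\sigma) \le D_{\max}(\orho||\sigma) \le \gamma'$ gives the claim.

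The main obstacle I anticipate is in the lower bound: getting the smoothing parameter to come out as exactly $\sqrt{2(1-\eps)}$ requires choosing the auxiliary state $\orho$ carefully and applying the gentle-measurement estimate in its sharpest form (the bound $\|\rho - \sqrt\Lambda\rho\sqrt\Lambda\|_1 \le 2\sqrt{1-\tr\{\Lambda\rho\}}$, or the corresponding purified-distance version $P(\rho, \sqrt\Lambda\rho\sqrt\Lambda/\tr\{\cdot\})$), together with keeping track of the $\frac{1}{1-\eps}$ normalization factor so that the $D_{\max}$ bound $\orho \le 2^{\gamma'}\sigma$ holds with the stated $\gamma'$. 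The arithmetic of combining the measurement disturbance bound with the subnormalization and relating it to $P(\rho,\orho)\le\eps'$ with $\eps' = \sqrt{2(1-\eps)}$ is the delicate part; everything else is a direct consequence of the operator-inequality definitions. Since the lemma is quoted from \cite{DKFRR12}, I would also check whether invoking it directly, with a brief indication of these steps, suffices for the paper's purposes.
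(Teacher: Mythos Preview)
The paper does not prove Lemma~\ref{lem:DH2}; it is quoted as a known result from \cite{DKFRR12}. Your closing instinct---to simply invoke that reference---is exactly what the paper does.

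Your upper-bound argument is correct and standard. The lower-bound sketch, however, has a genuine gap. The candidate $\orho=\tfrac{1}{1-\eps}\sqrt{\Lambda}\,\rho\,\sqrt{\Lambda}$ fails on two counts. First, the trace condition $\tr\{\Lambda\sigma\}=\beta$ does not yield any operator inequality of the form $\sqrt{\Lambda}\rho\sqrt{\Lambda}\le c\,\sigma$, which is what a $D_{\max}$ bound requires; you cannot pass from a scalar constraint on $\tr\{\Lambda\sigma\}$ to domination by $\sigma$. Second, the gentle-measurement estimate goes the wrong way: from $\tr\{\Lambda\rho\}\ge 1-\eps$ one obtains $P(\rho,\sqrt{\Lambda}\rho\sqrt{\Lambda})$ of order $\sqrt{\eps}$, not $\sqrt{1-\eps}$, so the smoothing radius $\sqrt{2(1-\eps)}$ cannot arise from your construction.

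A construction that does work uses the Neyman--Pearson structure of the optimal test and the \emph{complementary} projector. Take $\Pi$ to project onto $\{\rho\ge\mu\sigma\}$ at the threshold $\mu$ with $\tr\{\Pi\rho\}=1-\eps$, and set $\orho=(I-\Pi)\rho(I-\Pi)$. On the complement of $\Pi$ one has the operator inequality $\rho\le\mu\sigma$, hence $\orho\le\mu\sigma$ and $D_{\max}(\orho\|\sigma)\le\log\mu$; the relation $1-\eps=\tr\{\Pi\rho\}\ge\mu\,\tr\{\Pi\sigma\}=\mu\beta$ then gives $\log\mu\le D_H^\eps(\rho\|\sigma)-\log\tfrac{1}{1-\eps}$. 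For the distance, $F(\rho,\orho)\ge\tr\{(I-\Pi)\rho\}=\eps$, so $P(\rho,\orho)\le\sqrt{1-\eps^2}\le\sqrt{2(1-\eps)}$, which is the required smoothing radius.
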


\smallskip
We also make use of the gentle operator lemma \cite{W99,ON07} and another lemma
that follows from a variational characterization of the trace distance:
\begin{lemma}[Gentle Operator] Consider a density operator
$\rho$ and a measurement operator $\Lambda$ satisfying $0\leq\Lambda\leq I$.
Suppose that the
measurement operator $\Lambda$\ has a high probability of detecting the state
$\rho$:
\begin{equation}
\text{\emph{Tr}}\left\{  \Lambda\rho\right\}  \geq1-\eps,
\label{eq:gm-condition}
\end{equation}
where $1\geq\eps>0$ (the probability is high only if $\eps$ is close
to zero). Then the subnormalized state $\sqrt{\Lambda}\rho\sqrt{\Lambda}$
is $2\sqrt{\eps}$-close
to the original state $\rho$ in trace distance:
\begin{equation}
\left\Vert \rho-\sqrt{\Lambda}\rho\sqrt{\Lambda}\right\Vert _{1}\leq
2\sqrt{\eps}.
\end{equation}
\end{lemma}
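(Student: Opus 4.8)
The final statement is the standard gentle operator lemma, so the plan is to reconstruct its proof. The cleanest route passes through a purification, reducing the claim about the operators $\rho$ and $\sqrt{\Lambda}\,\rho\,\sqrt{\Lambda}$ on $\cH$ to an elementary two-vector computation on a larger space.

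First I would fix a purification $|\phi\rangle_{RA}$ of $\rho$ on $\cH_R\otimes\cH_A$ and set $|\theta\rangle_{RA}:=(I_R\otimes\sqrt{\Lambda}_A)\,|\phi\rangle_{RA}$, which is in general subnormalized. Since $\tr_R\!\big(|\theta\rangle\langle\theta|\big)=\sqrt{\Lambda}\,\rho\,\sqrt{\Lambda}$ and the partial trace is a CPTP map, monotonicity of the trace norm gives
\begin{equation}
\big\Vert \rho - \sqrt{\Lambda}\,\rho\,\sqrt{\Lambda}\,\big\Vert_1 \;\le\; \big\Vert\, |\phi\rangle\langle\phi| - |\theta\rangle\langle\theta|\,\big\Vert_1 ,
\end{equation}
so it suffices to bound the right-hand side.

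Next I would record the three relevant overlaps: $\langle\phi|\phi\rangle=1$, $q:=\langle\theta|\theta\rangle=\tr(\Lambda\rho)\ge 1-\eps$, and $\langle\phi|\theta\rangle=\tr(\sqrt{\Lambda}\,\rho)$, which is real and nonnegative. Because $0\le\Lambda\le I$ implies $\sqrt{\Lambda}\ge\Lambda$ (the function $x\mapsto\sqrt{x}$ dominates $x\mapsto x$ on $[0,1]$, and $\sqrt{\Lambda},\Lambda$ commute), we get $\langle\phi|\theta\rangle\ge\tr(\Lambda\rho)=q$. The operator $M:=|\phi\rangle\langle\phi|-|\theta\rangle\langle\theta|$ is supported on $\mathrm{span}\{|\phi\rangle,|\theta\rangle\}$, a space of dimension at most two; writing $|\theta\rangle=\alpha|\phi\rangle+\beta|\phi^\perp\rangle$ with $\alpha=\langle\phi|\theta\rangle$ and $|\alpha|^2+|\beta|^2=q$, a direct $2\times2$ computation gives $\tr M=1-q$ and $\det M=-|\beta|^2\le 0$. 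Since the two eigenvalues of $M$ then have opposite signs,
\begin{equation}
\big\Vert M\big\Vert_1 \;=\; \sqrt{(\tr M)^2-4\det M} \;=\; \sqrt{(1-q)^2+4|\beta|^2}.
\end{equation}
Finally, $|\beta|^2=q-\alpha^2=q-\big(\tr\sqrt{\Lambda}\rho\big)^2\le q-q^2=q(1-q)$, so with $t:=1-q\in[0,\eps]$ one has $(1-q)^2+4|\beta|^2\le t^2+4t(1-t)=4t-3t^2\le 4t\le 4\eps$, hence $\Vert M\Vert_1\le 2\sqrt{\eps}$, which combined with the first inequality is the claim.

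The argument is entirely elementary, so I do not expect a genuine obstacle; the only two points needing a little care are (i) the estimate $\langle\phi|\theta\rangle\ge\tr(\Lambda\rho)$, which rests on $\sqrt{\Lambda}\ge\Lambda$ for $0\le\Lambda\le I$, and (ii) the degenerate case where $|\theta\rangle$ is parallel to $|\phi\rangle$ (then $\beta=0$ and $M$ has a single nonzero eigenvalue $1-q$), which is consistent with the formulas above. If one prefers to avoid even the two-dimensional bookkeeping, one can instead invoke a Fuchs--van de Graaf--type bound for subnormalized vectors directly, but the explicit computation is short enough to spell out.
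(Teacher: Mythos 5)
Your proof is correct: the purification step, the reduction to a rank-two computation via monotonicity of the trace norm, the exact formula $\Vert M\Vert_1=\sqrt{(\tr M)^2-4\det M}$, and the estimates $\langle\phi|\theta\rangle=\tr(\sqrt{\Lambda}\rho)\ge\tr(\Lambda\rho)$ (from $\sqrt{\Lambda}\ge\Lambda$ on $0\le\Lambda\le I$) and $|\beta|^2\le q(1-q)$ all check out and deliver the sharp constant $2\sqrt{\eps}$. The paper does not prove this lemma but merely cites it (Winter; Ogawa--Nagaoka), and your argument is essentially the standard purification-based proof found in those references, so there is nothing further to compare.
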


\begin{lemma}
\label{lemma:trace-inequality}%
Suppose we have two quantum states$~\rho$ and $\sigma$ and an operator $\Lambda$
where $0\leq\Lambda\leq I$. Then%
\begin{equation}
\text{\emph{Tr}}\left\{  \Lambda\rho\right\}  \geq\text{\emph{Tr}}\left\{
\Lambda\sigma\right\}  -\tfrac12 \left\Vert \rho-\sigma\right\Vert _{1}.
\label{eq-dm:lemma-trace-inequality}%
\end{equation}
\end{lemma}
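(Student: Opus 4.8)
The plan is to invoke the variational characterization of the trace distance, which is the tool the lemma is said to follow from. For Hermitian operators the trace norm satisfies
\[
\tfrac12\Vert \rho-\sigma\Vert_1 \;=\; \max_{0\le P\le I}\tr\{P(\rho-\sigma)\}
\]
whenever $\rho$ and $\sigma$ have the same trace---in particular, when both are density operators---the maximum being over all operators $P$ with $0\le P\le I$. Granting this, the lemma is immediate: rearranging, one must show $\tr\{\Lambda(\sigma-\rho)\}\le \tfrac12\Vert\rho-\sigma\Vert_1$, and since $\Lambda$ itself obeys $0\le\Lambda\le I$ it is a feasible point of the maximization defining $\tfrac12\Vert\sigma-\rho\Vert_1=\tfrac12\Vert\rho-\sigma\Vert_1$, so the inequality follows at once.

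For completeness I would include the short proof of the variational bound in this direction. Write $\sigma-\rho=\Delta_+-\Delta_-$ for the Jordan--Hahn decomposition into positive and negative parts $\Delta_\pm\ge 0$ with orthogonal supports, so that $\Vert\rho-\sigma\Vert_1=\tr\{\Delta_+\}+\tr\{\Delta_-\}$. Since $\rho$ and $\sigma$ are states, $\tr\{\sigma-\rho\}=0$, whence $\tr\{\Delta_+\}=\tr\{\Delta_-\}=\tfrac12\Vert\rho-\sigma\Vert_1$. Then
\[
\tr\{\Lambda(\sigma-\rho)\}=\tr\{\Lambda\Delta_+\}-\tr\{\Lambda\Delta_-\}\le\tr\{\Lambda\Delta_+\}\le\tr\{\Delta_+\}=\tfrac12\Vert\rho-\sigma\Vert_1,
\]
where the first inequality holds because $\tr\{\Lambda\Delta_-\}\ge0$ (the trace of a product of two positive semi-definite operators) and the second because $\Lambda\le I$ implies $\tr\{\Lambda\Delta_+\}\le\tr\{\Delta_+\}$.

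There is essentially no obstacle here. The only subtlety worth flagging is that the factor of $\tfrac12$ relies on $\sigma-\rho$ being traceless, i.e., on $\rho$ and $\sigma$ having equal trace; for genuinely subnormalized operators the bound would have to be adjusted accordingly. Since the lemma is invoked elsewhere only with normalized states, the stated form suffices.
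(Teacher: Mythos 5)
Your proof is correct, and it follows exactly the route the paper indicates: the lemma is introduced there as one ``that follows from a variational characterization of the trace distance,'' with no further details given, and your Jordan--Hahn argument supplies precisely those details. The remark that the factor of $\tfrac12$ hinges on $\rho$ and $\sigma$ having equal trace is also well taken.
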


\section{One-Shot Quantum Rate Distortion Coding}
\label{sec:one-shot}

In this section, we establish definitions for the most general ``one-shot'' setting of
quantum rate distortion coding, in which there are no assumptions about the structure
of the source state or the distortion observable (defined below). Throughout this paper,
we work in a communication paradigm, in which a sender Alice has access to a quantum information
source, and the goal is for her to use as few noiseless qubit channels as possible
to transmit a compressed version of the source so that a receiver Bob can recover it
up to some distortion. This section also establishes definitions
for the entanglement-assisted setting, in which Alice and Bob share entanglement
and can exploit this resource in their compression-decompression task.

\subsection{Unassisted One-Shot Quantum Rate-Distortion Code}

A quantum source is described by some density operator $\rho\in\mathcal{D}%
\left(  \mathcal{H}_{A}\right)  $. A lossy quantum data compression code
consists of an encoding map $\mathcal{E}:\mathcal{B}\left(  \mathcal{H}%
_{A}\right)  \rightarrow\mathcal{B}\left(  \mathcal{H}_{M}\right)  $,
which we denote as  $\mathcal{E}_{A\rightarrow M}$ for short, where
$\mathcal{H}_{M}$ is a compressed space spanned by an orthonormal basis $\left\{
\left\vert 1\right\rangle ,\ldots,\left\vert M\right\rangle \right\}  $. The
decoding map is defined as $\mathcal{D}:\mathcal{B}\left(
\mathcal{H}_{M}\right)  \rightarrow\mathcal{B}\left(  \mathcal{H}_{B}\right)
$ and denoted by $\mathcal D_{M\rightarrow A}$. 
Let $\mathcal{H}_{R}$ be a purifying
Hilbert space, so that $\left\vert \varphi^{\rho}\right\rangle
_{RA}\in\mathcal{H}_{R}\otimes\mathcal{H}_{A}$ is a purification of the source
state $\rho$. The joint state of the reference and the output after
the action of the encoding and decoding maps is%
\[
\left(  \text{id}_{R}\otimes\left(  \mathcal{D}_{M\rightarrow B}%
\circ\mathcal{E}_{A\rightarrow M}\right)  \right)  \left(  \varphi_{RA}^{\rho
}\right)  .
\]

A \textit{distortion observable} $\Delta_{RB}$ is some 
operator in $\mathcal{B}_{+}\left(  \mathcal{H}_{R}\otimes\mathcal{H}_{B}\right)
$ that quantifies the performance of a lossy quantum compression code
\cite{WA01,CW08,DHWW12,DWHW12}. Since $\Delta_{RB}$ is positive semi-definite, it has
a spectral decomposition of the following form:%
\[
\Delta_{RB}=\sum_{z}d_{z}\left\vert \phi_{z}\right\rangle \left\langle
\phi_{z}\right\vert _{RB},
\]
where $d_{z}\geq0$ for all $z$. In this paper, we assume a finite bound on the maximum
eigenvalue of the distortion observable $\Delta_{RB}$:
\[
d_{\max} := \Vert \Delta_{RB} \Vert_{\infty} <\infty .
\]
In order for a distortion observable
to quantify the deviation
of a protocol's output state from the source state, it
should depend on the source state in
some way.

Let $\left(  \Pi_{>D}\right)
_{RB}$ denote the \textit{excess-distortion projector} associated to $\Delta_{RB}$.
It is equal to the
projection onto a subspace of $\mathcal{H}_{R}\otimes\mathcal{H}_{B}$\ spanned
by eigenvectors of $\Delta_{RB}$ whose corresponding eigenvalues are larger
than some non-negative number $D$:%
\begin{equation}
\left(  \Pi_{>D}\right)  _{RB}:= \sum_{z\ :\ d_{z}>D}\left\vert \phi
_{z}\right\rangle \left\langle \phi_{z}\right\vert _{RB}.
\label{eq:excess-dist-obs-def}%
\end{equation}
The excess-distortion projector generalizes
the indicator function used to define the excess-distortion probability in the classical case
(where the indicator function selects the event in which the distortion exceeds~$D$,
as in \eqref{eq:classical-case}) \cite{G68,M74,KV12}.

We have the following definition of a quantum rate-distortion code with
performance measured by the excess-distortion probability:

\begin{definition}
\label{def:excess-dist-code}An $\left(  M,D,\varepsilon\right)  $ quantum rate
distortion code for $\left\{  \mathcal{H}_{A},\mathcal{H}_{B},\rho,\Delta
_{RB}\right\}  $ is a code with $\left\vert \mathcal{H}_{M}\right\vert =M$ and
such that%
\begin{equation}
\emph{Tr}\left\{  \left(  \Pi_{>D}\right)  _{RB}\left(  \emph{id}_{R}%
\otimes\left(  \mathcal{D}_{M\rightarrow B}\circ\mathcal{E}_{A\rightarrow
M}\right)  \right)  \left(  \varphi_{RA}^{\rho}\right)  \right\}
\leq\varepsilon. \label{eq:excess-distortion-def}%
\end{equation}
The minimum achievable code size at excess-distortion probability
$\varepsilon$ and distortion $D$ is defined by%
\[
M^{\ast}\left(\rho, \Delta_{RB},  D,\varepsilon\right)  := \min\left\{  M:\exists \text{ an } \left(
M,D,\varepsilon\right)  \text{ code for } \left\{  \mathcal{H}_{A},\mathcal{H}_{B},\rho,\Delta
_{RB}\right\} \right\}  .
\]
We refer to the quantity $\log (M^{\ast}\left(\rho, \Delta_{RB},  D,\varepsilon\right))$
as the minimum qubit compression size.\footnote{The minimum qubit compression size
should really be defined as
$\lceil\log (M^{\ast}\left( \rho, \Delta_{RB}, D,\varepsilon\right))\rceil$,
but we will omit the inclusion of the ``ceiling'' in the rest of the paper for simplicity.}

\end{definition}
The minimum achievable code size is a quantity that is difficult to compute, and one of
the goals of the present paper is to provide useful bounds on it.

The special case $D=0$ and $\Delta_{RB}=I_{RB}-\left\vert \varphi\right\rangle
\left\langle \varphi\right\vert _{RB}$ corresponds to almost lossless quantum
data compression (one-shot Schumacher compression). Indeed, such a choice leads to
the condition in (\ref{eq:excess-distortion-def}) becoming%
\[
\left\langle \varphi\right\vert _{RB}\left(  \text{id}_{R}\otimes\left(
\mathcal{D}_{M\rightarrow B}\circ\mathcal{E}_{A\rightarrow M}\right)  \right)
\left(  \varphi_{RA}^{\rho}\right)  \left\vert \varphi\right\rangle _{RB}%
\geq1-\varepsilon,
\]
which is the usual entanglement-fidelity based criterion employed in
Schumacher compression~\cite{Schumacher:1995dg}.

Definition~\ref{def:excess-dist-code} captures the critical
idea behind formulating a good one-shot
framework for quantum rate distortion:\ the output of the protocol is allowed to deviate
beyond a distortion specified by $D$, but only with a probability less than
$\varepsilon$.

We could also use a mean distortion criterion, which corresponds to the more
traditional formulation in prior work on quantum rate distortion
coding \cite{B00,Devetak:2002it,DHW11}.
For a given distortion observable $\Delta_{RB}$,
the {\em{mean distortion}} of the source state under a
CPTP map $\cN_{A \to B}$ is defined as follows:
\be\label{eq:mean}
\delta_{\text{mean}}(\rho, \cN_{A \to B}, \Delta_{RB}):=\tr\left( \Delta_{RB}\,\omega_{RB}\right),
\ee 
where
$$
\omega_{RB} := \left( \text{id}_{R}\otimes \cN_{A \to B} \right)\varphi^\rho_{RA}.
$$
\begin{definition}
\label{def:QRD-code}An $\left\langle M,D\right\rangle $ quantum rate
distortion code for $\left\{  \mathcal{H}_{A},\mathcal{H}_{B},\rho,\Delta
_{RB}\right\}  $ is a code
with $\left\vert \mathcal{H}_{M}\right\vert=M$ and
mean distortion
$$
\delta_{\emph{mean}}\left(\rho,
\mathcal{D}_{M\rightarrow B}\circ\mathcal{E}_{A\rightarrow M},
\Delta_{RB}\right)\leq D.
$$
The minimum achievable code size at mean distortion $D$ is defined by%
\[
M^{\ast}\left(\rho, \Delta_{RB},  D\right)  := \min\left\{  M:\exists \text{ an }\left\langle
M,D\right\rangle \text{ code for } \left\{  \mathcal{H}_{A},\mathcal{H}_{B},\rho,\Delta
_{RB}\right\}\right\}  .
\]
The minimum qubit compression size is equal to
$\log (M^{\ast}\left(\rho, \Delta_{RB},  D\right))$.
\end{definition}

The excess-distortion probability is a stronger criterion for
quantum rate distortion
coding in the sense of the following lemma:\footnote{Note, however, that
the excess-distortion probability and mean distortion criteria become essentially
equivalent in the independent and identically
distributed (i.i.d.) setting. This follows from
Lemma~\ref{lem:excess-implies-avg} and Lemma~\ref{lem:chan-sim-QRD-excess-dist}.}

\begin{lemma}\label{lem:excess-implies-avg}Suppose that there exists an $\left(
M,D,\varepsilon\right)  $ quantum rate distortion code
for  $\left\{  \mathcal{H}_{A},\mathcal{H}_{B},\rho,\Delta
_{RB}\right\}$. Then this code is also
an $\left\langle M,D+d_{\max}\varepsilon\right\rangle $ quantum
rate distortion code for $ \left\{  \mathcal{H}_{A},\mathcal{H}_{B},\rho,\Delta
_{RB}\right\}$.
\end{lemma}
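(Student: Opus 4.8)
The plan is to show that an $(M,D,\varepsilon)$ code, which controls the probability that the distortion exceeds $D$, automatically controls the \emph{mean} distortion, losing only the additive term $d_{\max}\varepsilon$. The essential idea is to split the mean distortion $\tr(\Delta_{RB}\,\omega_{RB})$ according to the complementary projectors $(\Pi_{>D})_{RB}$ and $(\Pi_{\le D})_{RB} := I_{RB} - (\Pi_{>D})_{RB}$, where $\omega_{RB} = (\mathrm{id}_R \otimes (\cD_{M\to B}\circ\cE_{A\to M}))(\varphi^\rho_{RA})$ is the output state of the given code.

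First I would write
\be
\tr\!\left(\Delta_{RB}\,\omega_{RB}\right)
= \tr\!\left(\Delta_{RB}\,(\Pi_{\le D})_{RB}\,\omega_{RB}\,(\Pi_{\le D})_{RB}\right)
+ \tr\!\left(\Delta_{RB}\,(\Pi_{>D})_{RB}\,\omega_{RB}\,(\Pi_{>D})_{RB}\right),
\ee
using that $\Delta_{RB}$ commutes with both projectors (they are built from its eigenvectors) and that $(\Pi_{\le D})_{RB} + (\Pi_{>D})_{RB} = I_{RB}$, so the cross terms vanish. For the first term, on the support of $(\Pi_{\le D})_{RB}$ the operator $\Delta_{RB}$ is bounded above by $D\,(\Pi_{\le D})_{RB}$, hence that term is at most $D\,\tr((\Pi_{\le D})_{RB}\,\omega_{RB}) \le D$. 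For the second term, on the support of $(\Pi_{>D})_{RB}$ we have $\Delta_{RB} \le d_{\max}\,(\Pi_{>D})_{RB}$, so that term is at most $d_{\max}\,\tr((\Pi_{>D})_{RB}\,\omega_{RB})$, and the defining inequality \eqref{eq:excess-distortion-def} of an $(M,D,\varepsilon)$ code says exactly that $\tr((\Pi_{>D})_{RB}\,\omega_{RB}) \le \varepsilon$. Combining, $\delta_{\text{mean}}(\rho, \cD_{M\to B}\circ\cE_{A\to M}, \Delta_{RB}) \le D + d_{\max}\varepsilon$, and since $|\cH_M| = M$ is unchanged, this is precisely an $\langle M, D + d_{\max}\varepsilon\rangle$ code.

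This argument is essentially a one-line operator inequality once the decomposition is set up, so I do not anticipate a genuine obstacle; the only point requiring a little care is the commutativity used to kill the cross terms and to bound $\Delta_{RB}$ sandwiched between the projectors — this is where the fact that $(\Pi_{>D})_{RB}$ is the \emph{spectral} projector of $\Delta_{RB}$ (rather than an arbitrary projector) is used. One could alternatively phrase the whole computation directly in the eigenbasis $\{|\phi_z\rangle\}$ of $\Delta_{RB}$, writing $\tr(\Delta_{RB}\,\omega_{RB}) = \sum_z d_z \langle\phi_z|\omega_{RB}|\phi_z\rangle$ and splitting the sum into $d_z \le D$ and $d_z > D$; this makes the bound completely transparent and avoids any manipulation of non-commuting operators. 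I would likely present it that way for cleanliness.
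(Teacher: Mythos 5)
Your proof is correct and follows essentially the same route as the paper: the paper establishes the single operator inequality $\Delta_{RB}\leq D\,I_{RB}+d_{\max}(\Pi_{>D})_{RB}$ by splitting the spectral sum at $D$ and then traces against the output state, which is exactly the content of your projector decomposition (and of the eigenbasis rephrasing you suggest at the end). No gap.
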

\begin{proof}
The proof of this statement easily follows by
exploiting the following operator inequality:%
\begin{align*}
\Delta_{RB}  &  =\sum_{z}d_{z}\left\vert \phi_{z}\right\rangle \left\langle
\phi_{z}\right\vert _{RB}\\
&  =\sum_{z\ :\ d_{z}\leq D}d_{z}\left\vert \phi_{z}\right\rangle \left\langle
\phi_{z}\right\vert _{RB}+\sum_{z\ :\ d_{z}>D}d_{z}\left\vert \phi
_{z}\right\rangle \left\langle \phi_{z}\right\vert _{RB}\\
&  \leq D\ I_{RB}+d_{\max}\left(  \Pi_{>D}\right)  _{RB}.
\end{align*}
This then leads to the statement of the lemma:%
\begin{align*}
&  \text{Tr}\left\{  \Delta_{RB}\left(  \text{id}_{R}\otimes\left(
\mathcal{D}_{M\rightarrow B}\circ\mathcal{E}_{A\rightarrow M}\right)  \right)
\left(  \varphi_{RA}^{\rho}\right)  \right\} \\
&  \leq D+d_{\max}\text{Tr}\left\{  \left(  \Pi_{>D}\right)  _{RB}\left(
\text{id}_{R}\otimes\left(  \mathcal{D}_{M\rightarrow B}\circ\mathcal{E}%
_{A\rightarrow M}\right)  \right)  \left(  \varphi_{RA}^{\rho}\right)
\right\} \\
&  \leq D+d_{\max}\varepsilon .
\end{align*}\end{proof}

Thus, the excess-distortion probability is an alternative
performance criterion related to the traditional mean distortion criterion,
but more importantly, it leads to a meaningful
one-shot extension of the traditional framework.

\subsection{Entanglement-Assisted One-Shot Quantum Rate-Distortion Code}
\label{sec:one-shot-eaq}
An entanglement-assisted quantum rate distortion code is defined similarly to
an unassisted one, but the sender (Alice) and receiver (Bob) are allowed to share
entanglement before the protocol begins \cite{DHW11}. Let $\Psi_{T_{A}T_{B}}$
denote the entangled state that they share, where Alice possesses
system~$T_{A}$ and Bob
possesses system $T_{B}$, and note that the state can be an arbitrary
entangled state.
The protocol begins with Alice and
Bob combining their systems $T_{A}$ and $T_{B}$ with the source state $\varphi_{RA}^{\rho}$, to produce %
$$
\varphi_{RA}^{\rho}\otimes\Psi_{T_{A}T_{B}}.
$$
(This is a trivial \textquotedblleft appending\textquotedblright\ CPTP\ map.)
Alice then acts with an encoding map $\mathcal{E}_{AT_{A}\rightarrow M}$, and
Bob acts with a decoding map $\mathcal{D}_{MT_{B}\rightarrow B}$, resulting in
the state%
\begin{equation}
\left(  \text{id}_{R}\otimes\left(  \mathcal{D}_{MT_{B}\rightarrow B}%
\circ\mathcal{E}_{AT_{A}\rightarrow M}\right)  \right)  \left(  \varphi
_{RA}^{\rho}\otimes\Psi_{T_{A}T_{B}}\right)  . \label{eq:EA-output-state}%
\end{equation}
We can write the combined action of appending, encoding, and decoding as
some CPTP\ map$~\mathcal{N}_{A\rightarrow B}^{ea}$:%
\begin{equation}
\mathcal{N}_{A\rightarrow B}^{ea}\left(  \sigma_{A}\right)
:= \left(  \mathcal{D}_{MT_{B}\rightarrow B}\circ\mathcal{E}_{AT_{A}%
\rightarrow M}\right)  \left(  \sigma_{A}\otimes\Psi_{T_{A}T_{B}}\right)  ,
\label{eq:EA-QRD-map}%
\end{equation}
for any input density operator $\sigma_{A}$. An $\left(  M,D,\varepsilon
\right)  $ entanglement-assisted
quantum rate distortion code and an $\left\langle M,D\right\rangle $ entanglement-assisted
quantum rate distortion code are then defined analogously as in
Definitions~\ref{def:excess-dist-code}\ and \ref{def:QRD-code}, respectively,
with respect to the state in (\ref{eq:EA-output-state}). The minimum
achievable code sizes and minimum qubit compression sizes are defined analogously as well.

\section{Converse Bounds for One-Shot Entanglement-Assisted Quantum Rate Distortion Codes}
\label{sec:converse}

This section provides two general converse bounds that
apply to one-shot entanglement-assisted
quantum rate distortion codes. The first converse provides a bound in terms of
$D_{H}^{1-\varepsilon^{\prime}}$ (where $\eps^{\prime}$ is related to the
excess-distortion probability) and thus is related to
$D_{\max}^{\sqrt{2 \varepsilon^{\prime}}}$ by Lemma~\ref{lem:DH2}.
We show in Section~\ref{sec:IID-limit-converse}
that in the i.i.d.~limit, the expression in this first converse is bounded from below 
by the known quantity for the entanglement-assisted quantum rate distortion
(EA-QRD) function
from \cite{DHW11}.

The second converse in this section provides a bound in terms of
$D_{H}^{\varepsilon}$ and can be seen as a direct quantum generalization
of the Kostina-Verd\'u bound from \cite{KV12}. We apply this bound in
Section~\ref{sec:finite-blocklength-converse-isotropic-qubit}
to give a tight finite-blocklength characterization of the
i.i.d.~entanglement-assisted quantum
rate distortion function for an isotropic qubit source. Although this second converse
gives a tight characterization for this example, it is unclear to us if this converse
generally converges in the i.i.d.~limit to the known quantity from \cite{DHW11}
for all quantum information sources.

Of course, since these converses provide lower bounds on the minimum qubit
compression sizes of entanglement-assisted
quantum rate distortion codes, they provide lower bounds for unassisted codes as well.

\subsection{A One-Shot Converse Bound}

\begin{proposition}
\label{prop:EA-QRD-converse}Let $\rho$ be the
density operator characterizing a quantum information source,
and let $\vert\varphi^\rho\rangle_{RA}$ be a purification of it. For any
$\left(  M,D,\varepsilon\right)  $ entanglement-assisted quantum rate
distortion code for $\left\{  \mathcal{H}_{A},\mathcal{H}_{B},\rho,\Delta
_{RB}\right\}$, we have the following lower bound on its minimum
qubit compression size:%
\begin{multline}
\log M^{\ast}\left(\rho, \Delta_{RB},  D,\varepsilon\right)
\\ \geq 
\frac{1}{2}\min_{\mathcal{N}_{A\rightarrow B}}\max_{\sigma_{RA}}%
\min_{\psi_{B}}\left[  D_{H}^{1-\varepsilon^{\prime}}\left(  \left(
\emph{id}_{R}\otimes\mathcal{N}_{A\rightarrow B}\right)  \left(  \varphi
_{RA}^{\rho}\right)  ||\sigma_{R}\otimes\psi_{B}\right)  -D_{H}^{1-\varepsilon
^{\prime\prime}}\left(  \varphi_{RA}^{\rho}||\sigma_{RA}\right)  \right]  ,
\label{eq:EA-QRD-converse-main}
\end{multline}
where $\varepsilon^{\prime}\geq2\varepsilon$,  $\varepsilon^{\prime\prime}%
:= \varepsilon^{\prime}\left(  \frac{\varepsilon^{\prime}}{2}%
-\varepsilon\right)  $, the minimization over states $\psi_B$ may
be performed over pure states, and the outermost minimization is with respect
to maps $\mathcal{N}_{A\rightarrow B}$ such that%
\[
\emph{Tr}\left\{  \left(  \Pi_{\leq D}\right)  _{RB}\left(  \emph{id}%
_{R}\otimes\mathcal{N}_{A\rightarrow B}\right)  \left(  \varphi_{RA}^{\rho
}\right)  \right\}  \geq1-\varepsilon.
\]
\end{proposition}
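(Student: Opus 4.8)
The plan is to show that any $(M,D,\varepsilon)$ entanglement-assisted code already realises the quantity on the right-hand side, so that the optimal $M^\ast$ does too. A code fixes a shared state $\Psi_{T_AT_B}$, an encoder $\mathcal E_{AT_A\to M}$ and a decoder $\mathcal D_{MT_B\to B}$, hence a CPTP map $\mathcal N^{ea}_{A\to B}$ as in \eqref{eq:EA-QRD-map}; its output $\omega_{RB}:=(\mathrm{id}_R\otimes\mathcal N^{ea}_{A\to B})(\varphi^\rho_{RA})$ satisfies $\mathrm{Tr}\{(\Pi_{\le D})_{RB}\,\omega_{RB}\}\ge 1-\varepsilon$, so $\mathcal N^{ea}_{A\to B}$ lies in the feasible set of the outermost minimisation. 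It therefore suffices to prove, for this particular $\mathcal N^{ea}_{A\to B}$ and for every $\sigma_{RA}$, that there is a $\psi_B$ with $2\log M\ge D_H^{1-\varepsilon'}(\omega_{RB}\|\sigma_R\otimes\psi_B)-D_H^{1-\varepsilon''}(\varphi^\rho_{RA}\|\sigma_{RA})$; one then minimises over $\psi_B$, maximises over $\sigma_{RA}$, and bounds the outer minimisation over $\mathcal N$ by its value at $\mathcal N^{ea}_{A\to B}$.

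The heart of the argument is a dimension bound coming from the $M$-dimensional bottleneck $\mathcal H_M$. Write $\sigma'_{RB}:=(\mathrm{id}_R\otimes\mathcal N^{ea}_{A\to B})(\sigma_{RA})$ and $\theta'_{RMT_B}:=\mathcal E_{AT_A\to M}(\sigma_{RA}\otimes\Psi_{T_AT_B})$, so $\sigma'_{RB}=\mathcal D_{MT_B\to B}(\theta'_{RMT_B})$. Since Alice's encoder acts only on $AT_A$, the reference $R$ stays uncorrelated with Bob's share $T_B$, i.e. $\theta'_{RT_B}=\sigma_R\otimes\Psi_{T_B}$. Dividing out $\sigma_R\otimes\Psi_{T_B}$ reduces the claim $I_{\max}(R;MT_B)_{\theta'}\le 2\log M$ to the elementary fact that a positive operator $Z$ on $\mathcal H_N\otimes\mathcal H_M$ with $\mathrm{Tr}_M Z=I_N$ obeys $\|Z\|_\infty\le|M|$ (apply $\mathrm{Tr}_M$ to $Z\ge\|Z\|_\infty|w\rangle\langle w|$ for the top eigenvector $w$, and use that its marginal on $N$ has rank at most $|M|$), which gives $\theta'_{RMT_B}\le|M|\,\sigma_R\otimes I_M\otimes\Psi_{T_B}$. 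Data processing under the decoder then yields $I_{\max}(R;B)_{\sigma'}\le I_{\max}(R;MT_B)_{\theta'}\le 2\log M$; fix a $\psi_B$ with $\sigma'_{RB}\le M^2\,\sigma_R\otimes\psi_B$. (Equivalently, this is the converse resource count of the one-shot quantum reverse Shannon theorem of \cite{BCR11}, with the code viewed as an exact simulation of $\mathcal N^{ea}_{A\to B}$.)

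It remains to connect the two entropic terms by a hypothesis-testing argument. Let $\Gamma_{RB}$ achieve $D_H^{1-\varepsilon'}(\omega_{RB}\|\sigma_R\otimes\psi_B)$, and pull it back through $\mathcal N^{ea}_{A\to B}$ using its adjoint, $\Gamma_{RA}:=(\mathrm{id}_R\otimes(\mathcal N^{ea}_{A\to B})^\dagger)(\Gamma_{RB})$; since the adjoint of a CPTP map is completely positive and unital, $0\le\Gamma_{RA}\le I$, and $\mathrm{Tr}\{\Gamma_{RA}\varphi^\rho_{RA}\}=\mathrm{Tr}\{\Gamma_{RB}\,\omega_{RB}\}\ge\varepsilon'\ge\varepsilon''$, so $\Gamma_{RA}$ is feasible in the optimisation defining $\beta_{1-\varepsilon''}(\varphi^\rho_{RA}\|\sigma_{RA})$. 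Hence $\beta_{1-\varepsilon''}(\varphi^\rho_{RA}\|\sigma_{RA})\le\mathrm{Tr}\{\Gamma_{RA}\sigma_{RA}\}=\mathrm{Tr}\{\Gamma_{RB}\,\sigma'_{RB}\}\le M^2\,\mathrm{Tr}\{\Gamma_{RB}(\sigma_R\otimes\psi_B)\}=M^2\,\beta_{1-\varepsilon'}(\omega_{RB}\|\sigma_R\otimes\psi_B)$, and taking $-\log$ gives exactly $2\log M\ge D_H^{1-\varepsilon'}(\omega_{RB}\|\sigma_R\otimes\psi_B)-D_H^{1-\varepsilon''}(\varphi^\rho_{RA}\|\sigma_{RA})$, which completes the chain of minimisations and maximisations described above.

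The main difficulty I anticipate is twofold. First, the entanglement assistance: a naive data-processing step through the decoder only gives $2\log(M|T_B|)$, and one genuinely needs the product structure $\theta'_{RT_B}=\sigma_R\otimes\Psi_{T_B}$ to absorb the a-priori unbounded system $T_B$ and recover $2\log M$. Second, and more delicate, is matching the precise smoothing relationship $\varepsilon'\ge2\varepsilon$, $\varepsilon''=\varepsilon'(\varepsilon'/2-\varepsilon)$ and the claim that $\psi_B$ may be taken pure: the clean test-pullback sketched above only uses $0\le\varepsilon''\le\varepsilon'$ and yields a generally mixed $\psi_B$, so the stated form most plausibly comes from running the argument with smooth $D_{\max}$ quantities instead (using data processing of $D_{\max}$ in place of the test pullback, and absorbing gentle-measurement discrepancies via the gentle operator lemma and Lemma~\ref{lemma:trace-inequality}) and only afterwards converting to $D_H^{1-\varepsilon'}$ and $D_H^{1-\varepsilon''}$ through Lemma~\ref{lem:DH2}, at which point the two $\sqrt{2(1-\cdot)}$ smoothing parameters of that lemma must be reconciled. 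Tracking those parameters is the routine-but-fiddly part; the substantive content is the bottleneck bound of the second paragraph together with the hypothesis-testing reduction of the third.
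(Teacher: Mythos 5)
Your proof has the same skeleton as the paper's --- evaluate the outer minimisation at the code's own map $\mathcal N^{ea}_{A\to B}$ (feasible by definition of the code), establish $\sigma'_{RB}\le M^2\,\sigma_R\otimes\psi_B$ from the $M$-dimensional bottleneck via the operator inequality $\rho_{AB}\le\left\vert B\right\vert\rho_A\otimes I_B$, and pull a test back through the adjoint of $\mathcal N^{ea}_{A\to B}$ --- but your hypothesis-testing step is genuinely simpler than the paper's and, as far as I can check, correct. The paper pulls back the composite operator $\sqrt{Q^*_{RB}}\left(\Pi_{\le D}\right)_{RB}\sqrt{Q^*_{RB}}$, where $Q^*_{RB}$ is a saddle-point test, and must then run an Uhlmann/Fuchs--van-de-Graaf estimate to show that this operator still detects $\varphi^\rho_{RA}$ with probability at least $\varepsilon''=\varepsilon'(\varepsilon'/2-\varepsilon)$; that detour is the entire reason the degraded parameter $\varepsilon''$ appears. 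You instead pull back the bare optimal test $\Gamma_{RB}$ for $\beta_{1-\varepsilon'}(\omega_{RB}\Vert\sigma_R\otimes\psi_B)$, whose detection probability on $\varphi^\rho_{RA}$ is already $\ge\varepsilon'\ge\varepsilon''$; this bypasses the distortion projector, the minimax machinery and the fidelity estimate, and in fact yields the stronger bound with $D_H^{1-\varepsilon'}(\varphi^\rho_{RA}\Vert\sigma_{RA})$ in place of $D_H^{1-\varepsilon''}$, which implies the stated one since $\varepsilon''\le\varepsilon'$. The distortion constraint is then used only to certify that $\mathcal N^{ea}_{A\to B}$ lies in the feasible set of the outer minimisation, which is all Proposition~\ref{prop:EA-QRD-converse} needs.

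The one piece of the statement you do not establish is the parenthetical claim that the minimisation over $\psi_B$ may be performed over pure states; you flag this, but your guess at the fix (rerunning the argument with smooth $D_{\max}$ and Lemma~\ref{lem:DH2}) is not how the paper handles it and would not obviously work. Your $\psi_B=\mathcal D_{MT_B\to B}(\pi_M\otimes\Psi_{T_B})$ is mixed, and since $\psi_B\mapsto\beta_{1-\varepsilon'}(\omega_{RB}\Vert\sigma_R\otimes\psi_B)$ is concave in $\psi_B$ (a minimum of linear functions), you cannot simply push the minimum of $D_H$ out to the extreme points after the fact. The paper's route is to invoke the minimax theorem up front to obtain a saddle point $(\psi_B^*,Q^*_{RB})$ of $(\psi_B,Q)\mapsto\mathrm{Tr}\{Q(\sigma_R\otimes\psi_B)\}$, so that $\mathrm{Tr}\{Q^*_{RB}(\sigma_R\otimes\psi_B)\}\le\mathrm{Tr}\{Q^*_{RB}(\sigma_R\otimes\psi_B^*)\}=\max_{\psi_B}\beta_{1-\varepsilon'}$ holds for \emph{every} $\psi_B$ --- in particular for your mixed decoder output --- and the left-hand side, being linear in $\psi_B$, is maximised on a pure state. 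If one only needs the bound with the minimisation over all (possibly mixed) $\psi_B$, which is the form actually used downstream in Corollary~\ref{cor:simple-converse-bnd} and Theorem~\ref{thm:unify-one-shot-EA-QRD}, your argument is complete as written.
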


\begin{proof}
First consider the type II error probability $\beta_{1-\varepsilon}\left(  \varphi_{RA}^{\rho}||\sigma_{RA}\right)  $ defined in \eqref{eq:errorbeta} for an arbitrary state $\sigma_{RA}$, as well as the quantity 
\[
\max_{\psi_{B}}\beta_{1-\varepsilon^{\prime}}\left(  \omega_{RB}||\sigma
_{R}\otimes\psi_{B}\right)  =\max_{\psi_{B}}\min_{Q_{RB}}\left\{
\text{Tr}\left\{  Q_{RB}\left(  \sigma_{R}\otimes\psi_{B}\right)  \right\}
:0\leq Q_{RB}\leq I_{RB},\ \text{Tr}\left\{  Q_{RB}\,\omega_{RB}\right\}
\geq\varepsilon^{\prime}\right\}
\]
for some $\varepsilon^{\prime} \geq 2 \eps$,
where $\omega_{RB}$ is the final state of the protocol.
We know from the minimax theorem that there
is a state $\psi_{B}$ and a POVM\ element $Q_{RB}$ achieving the maximum and
the minimum, respectively, in $\max_{\psi_{B}}\beta_{1-\varepsilon^{\prime}}\left(  \omega_{RB}||\sigma
_{R}\otimes\psi_{B}\right)$, because the optimizations are over convex sets and
the objective function is linear in the objects over which we are optimizing.
Let $\psi_{B}^{\ast}$ and $Q_{RB}^{\ast}$ denote the state and POVM\ element,
respectively, achieving the optimum. Now, from the definition of an $\left(
M,D,\varepsilon\right)  $ EA\ QRD\ code for
$\left\{  \mathcal{H}_{A},\mathcal{H}_{B},\rho,\Delta
_{RB}\right\}$
(see Section~\ref{sec:one-shot-eaq}), the following condition holds:
\begin{align}
	\label{eq:protocoloutput}
\text{Tr}\left\{  \left(  \Pi_{\leq D}\right)_{RB}\, \omega_{RB}\right\}
\geq1-\varepsilon .
\end{align}
 Let $\omega_{RB}^{\prime}$ denote the following state%
\begin{equation}
\omega_{RB}^{\prime}:= \frac{\sqrt{Q_{RB}^{\ast}} \, \omega_{RB} \, \sqrt
{Q_{RB}^{\ast}}}{\text{Tr}\left\{  Q_{RB}^{\ast}\, \omega_{RB}\right\}
}.\label{eq:normalized-state}%
\end{equation}
By Lemma~\ref{lemma:trace-inequality}, we then have that %
\begin{align}
\text{Tr}\left\{  \left(  \Pi_{\leq D}\right)  _{RB}\,\omega_{RB}^{\prime
}\right\}    & \geq\text{Tr}\left\{  \left(  \Pi_{\leq D}\right)  _{RB}%
\,\omega_{RB}\right\}  -\frac{1}{2}\left\Vert \omega_{RB}^{\prime}-\omega
_{RB}\right\Vert _{1}\nonumber\\
& \geq1-\varepsilon-\frac{1}{2}\left\Vert \omega_{RB}^{\prime}-\omega
_{RB}\right\Vert _{1}.\label{eq:lower-bound-on-trace}%
\end{align}
We now compute an upper bound on $\frac{1}{2}\left\Vert \omega_{RB}^{\prime
}-\omega_{RB}\right\Vert _{1}$. By letting $\left\vert \varphi^{\omega
}\right\rangle _{R^{\prime}RB}$ be a particular purification of $\omega_{RB}$
and by exploiting Uhlmann's theorem,
we have that 
\begin{align*}
F\left(  \omega_{RB}^{\prime},\omega_{RB}\right)^2    & \geq\frac{\left\vert
\left\langle \varphi^{\omega}\right\vert _{R^{\prime}RB}\left(  I_{R^{\prime}%
}\otimes\sqrt{Q_{RB}^{\ast}}\right)  \left\vert \varphi^{\omega}\right\rangle
_{R^{\prime}RB}\right\vert ^{2}}{\left\langle \varphi^{\omega}\right\vert
_{R^{\prime}RB}\left(  I_{R^{\prime}}\otimes Q_{RB}^{\ast}\right)  \left\vert
\varphi^{\omega}\right\rangle _{R^{\prime}RB}}\\
& \geq\left\langle \varphi^{\omega}\right\vert _{R^{\prime}RB}\left(
I_{R^{\prime}}\otimes Q_{RB}^{\ast}\right)  \left\vert \varphi^{\omega
}\right\rangle _{R^{\prime}RB}\\
& \geq\varepsilon^{\prime}.
\end{align*}
Using the Fuchs-van-de-Graaf inequalities in \eqref{eq:FvG-ineqs}, it follows that%
\begin{align*}
\frac{1}{2}\left\Vert \omega_{RB}^{\prime}-\omega_{RB}\right\Vert _{1}  &
\leq\sqrt{1-F\left(  \omega_{RB}^{\prime},\omega_{RB}\right)^2  }\\
& \leq\sqrt{1-\varepsilon^{\prime}}\\
& \leq1-\frac{\varepsilon^{\prime}}{2}.
\end{align*}
Substituting into (\ref{eq:lower-bound-on-trace}) gives us that%
\begin{align*}
\text{Tr}\left\{  \left(  \Pi_{\leq D}\right)  _{RB}\,\omega_{RB}^{\prime
}\right\}    & \geq1-\varepsilon-\left(  1-\frac{\varepsilon^{\prime}}%
{2}\right)  \\
& =\frac{\varepsilon^{\prime}}{2}-\varepsilon,
\end{align*}
which finally gives that%
\begin{align}
\text{Tr}\left\{  \left(  \Pi_{\leq D}\right)  _{RB}\sqrt{Q_{RB}^{\ast}}%
\omega_{RB}\sqrt{Q_{RB}^{\ast}}\right\}    & \geq\text{Tr}\left\{
Q_{RB}^{\ast}\omega_{RB}\right\}  \left(  \frac{\varepsilon^{\prime}}%
{2}-\varepsilon\right)  \nonumber\\
& \geq\varepsilon^{\prime}\left(  \frac{\varepsilon^{\prime}}{2}%
-\varepsilon\right)  \nonumber \\
& = \eps^{\prime \prime}, \label{eq:trace-bound-Pi-Q-omega}%
\end{align}
by exploiting the definition in (\ref{eq:normalized-state}).

By the result in
(\ref{eq:trace-bound-Pi-Q-omega})
and the fact that $\mathcal{N}_{A\rightarrow B}^{ea}$ from \eqref{eq:EA-QRD-map} is
trace preserving, the following operator defines a particular
POVM element $\Lambda_{RA}$\ for which Tr$\left\{  \Lambda_{RA}\varphi
_{RA}^{\rho}\right\}  \geq\eps^{\prime \prime}$ is true:%
\begin{equation}
\Lambda_{RA} = \left(  \text{id}_{R}\otimes\mathcal{N}_{A\rightarrow B}^{ea}\right)
^{\dag}\left(  \sqrt{Q_{RB}^{\ast}}\left(  \Pi_{\leq D}\right)  _{RB}%
\sqrt{Q_{RB}^{\ast}}\right)  , \label{eq:adjoint-good-code}
\end{equation}
where $\dag$ indicates the adjoint of the map id$_{R}\otimes\mathcal{N}%
_{A\rightarrow B}^{ea}$.
Hence,
\begin{align}
& \beta_{1-\varepsilon^{\prime\prime}}\left(  \varphi_{RA}^{\rho}||\sigma
_{RA}\right) \nonumber \\
  &  \leq\text{Tr}\left\{  \left(  \text{id}_{R}\otimes
\mathcal{N}_{A\rightarrow B}^{ea}\right)  ^{\dag}\left(  \sqrt
{Q_{RB}^{\ast}}\left(  \Pi_{\leq D}\right)  _{RB}\sqrt{Q_{RB}^{\ast}}\right)
\sigma_{RA}\right\} \nonumber \\
&  =\text{Tr}\left\{  \left(  \sqrt{Q_{RB}^{\ast}}\left(  \Pi_{\leq D}\right)
_{RB}\sqrt{Q_{RB}^{\ast}}\right)  \left(  \text{id}_{R}\otimes\left(
\mathcal{D}_{MT_{B}\rightarrow B}\circ\mathcal{E}_{AT_{A}\rightarrow
M}\right)  \right)  \left(  \sigma_{RA}\otimes\Psi_{T_{A}T_{B}}\right)
\right\} \nonumber \\
&  \leq M\ \text{Tr}\left\{  \left(  \sqrt{Q_{RB}^{\ast}}\left(  \Pi_{\leq
D}\right)  _{RB}\sqrt{Q_{RB}^{\ast}}\right)  \left(  \text{id}_{R}%
\otimes\mathcal{D}_{MT_{B}\rightarrow B}\right)  \left(  \sigma_{R}\otimes
I_{M}\otimes\Psi_{T_{B}}\right)  \right\} \nonumber \\
&  =M^{2}\ \text{Tr}\left\{  \left(  \sqrt{Q_{RB}^{\ast}}\left(  \Pi_{\leq
D}\right)  _{RB}\sqrt{Q_{RB}^{\ast}}\right)  \left(  \sigma_{R}\otimes
\mathcal{D}_{MT_{B}\rightarrow B}\left(  \pi_{M}\otimes\Psi_{T_{B}}\right)
\right)  \right\} \nonumber  \\
&  \leq M^{2}\ \text{Tr}\left\{  Q_{RB}^{\ast}\left(  \sigma_{R}%
\otimes\mathcal{D}_{MT_{B}\rightarrow B}\left(  \pi_{M}\otimes\Psi_{T_{B}%
}\right)  \right)  \right\}  \nonumber \\
&  \leq M^{2}\text{Tr}\left\{  Q_{RB}^{\ast}\left(  \sigma_{R}\otimes\psi
_{B}^{\ast}\right)  \right\} \nonumber \\
&  =M^{2}\max_{\psi_{B}}\beta_{1-\varepsilon^{\prime}}\left(  \omega
_{RB}||\sigma_{R}\otimes\psi_{B}\right).
\end{align}
The first inequality follows from the definition of
$\beta_{1-\varepsilon^{\prime\prime}}\left(  \varphi_{RA}^{\rho}||\sigma
_{RA}\right)$ and (\ref{eq:adjoint-good-code}).
The first equality follows by the definition of the adjoint map.
The second inequality follows from the following operator inequality:%
\[
\left(  \text{id}_{R}\otimes\mathcal{E}_{AT_{A}\rightarrow M}\right)  \left(
\sigma_{RA}\otimes\Psi_{T_{A}T_{B}}\right)  \leq M\left(  \sigma_{R}\otimes
I_{M}\otimes\Psi_{T_{B}}\right)  .
\]
which is an instance of the operator inequality $\rho_{AB}%
\leq\left\vert B\right\vert \left(  \rho_{A}\otimes I_{B}\right)  $ \cite{BCR11,DKFRR12}.
The second equality follows from the
definition $\pi_{M}:=  I_{M}/M$. The third inequality follows from the operator inequality:
\[
\sqrt{Q_{RB}^{\ast}}\left(  \Pi_{\leq
D}\right)  _{RB}\sqrt{Q_{RB}^{\ast}} \leq Q_{RB}^{\ast}.
\]
The last inequality follows because
$\mathcal{D}_{MT_{B}\rightarrow B}\left(  \pi_{M}\otimes\Psi_{T_{B}}\right)  $
is a particular state and the expression should be optimized over
pure states. That is, we can always take a spectral decomposition of
$\mathcal{D}_{MT_{B}\rightarrow B}\left(  \pi_{M}\otimes\Psi_{T_{B}}\right)  $
as%
\[
\mathcal{D}_{MT_{B}\rightarrow B}\left(  \pi_{M}\otimes\Psi_{T_{B}}\right)
=\sum_{z}p_{Z}\left(  z\right)  \left\vert \phi_{z}\right\rangle \left\langle
\phi_{z}\right\vert _{B},
\]
and this leads to%
\begin{align*}
&  \text{Tr}\left\{  Q_{RB}^{\ast}\left(  \sigma
_{R}\otimes\mathcal{D}_{MT_{B}\rightarrow B}\left(  \pi_{M}\otimes\Psi_{T_{B}%
}\right)  \right)  \right\} \\
&  =\sum_{z}p_{Z}\left(  z\right)  \text{Tr}\left\{  Q_{RB}^{\ast}
\left(  \sigma_{R}\otimes\left\vert \phi_{z}\right\rangle
\left\langle \phi_{z}\right\vert _{B}\right)  \right\} \\
&  \leq\max_{z}\text{Tr}\left\{  Q_{RB}^{\ast}\left(
\sigma_{R}\otimes\left\vert \phi_{z}\right\rangle \left\langle \phi
_{z}\right\vert _{B}\right)  \right\} \\
&  \leq\max_{\psi_{B}}\text{Tr}\left\{  Q_{RB}^{\ast}
\left(  \sigma_{R}\otimes\psi_{B}\right)  \right\}  .
\end{align*}
By taking a maximization over all states $\sigma_{RA}$, we arrive at the
following bound:%
\[
M\geq\max_{\sigma_{RA}}\min_{\psi_{B}}\sqrt{\frac{\beta_{1-\varepsilon
^{\prime\prime}}\left(  \varphi_{RA}^{\rho}||\sigma_{RA}\right)  }%
{\beta_{1-\varepsilon^{\prime}}\left(  \omega_{RB}||\sigma_{R}\otimes\psi
_{B}\right)  }}.
\]
Taking logarithms, we obtain that%
\[
\log M\geq\frac{1}{2}\max_{\sigma_{RA}}\min_{\psi_{B}}\left[  D_{H}%
^{1-\varepsilon^{\prime}}\left(  \omega_{RB}||\sigma_{R}\otimes\psi
_{B}\right)  -D_{H}^{1-\varepsilon^{\prime\prime}}\left(  \varphi_{RA}^{\rho
}||\sigma_{RA}\right)  \right]  .
\]
Finally, we arrive at%
\[
\log M\geq\frac{1}{2}\min_{\mathcal{N}_{A\rightarrow B}}\max_{\sigma_{RA}}%
\min_{\psi_{B}}\left[  D_{H}^{1-\varepsilon^{\prime}}\left(  \left(
\text{id}_{R}\otimes\mathcal{N}_{A\rightarrow B}\right)  \left(  \varphi
_{RA}^{\rho}\right)  ||\sigma_{R}\otimes\psi_{B}\right)  -D_{H}^{1-\varepsilon
^{\prime\prime}}\left(  \varphi_{RA}^{\rho}||\sigma_{RA}\right)  \right]  .
\]
by taking a minimization over all maps $\mathcal{N}_{A\rightarrow
B}$ that meet the following excess-distortion probability constraint:  $$\text{Tr}\left\{  \left(
\Pi_{\leq D}\right)  _{RB}\left(  \text{id}_{R}\otimes\mathcal{N}%
_{A\rightarrow B}\right)  \left(  \varphi_{RA}^{\rho}\right)  \right\}
\geq1-\varepsilon.$$
\end{proof}

By taking the state $\sigma_{RA}$ in the maximization
in (\ref{eq:EA-QRD-converse-main}) to be equal to
the purification $\varphi^{\rho}_{RA}$, we arrive at the following corollary of
Proposition~\ref{prop:EA-QRD-converse}:
\begin{corollary}\label{cor:simple-converse-bnd} Let $\rho$ be the
density operator characterizing a quantum information source. For any
$\left(  M,D,\varepsilon\right)  $ entanglement-assisted quantum rate
distortion code for $\left\{  \mathcal{H}_{A},\mathcal{H}_{B},\rho,\Delta
_{RB}\right\}$, we have the following lower bound on its minimum
qubit compression size:%
\begin{equation}
\log M^{\ast}\left(\rho, \Delta_{RB},  D,\varepsilon\right)
\geq
\frac{1}{2}\min_{\mathcal{N}_{A\rightarrow B}}
\min_{\psi_{B}}\left[  D_{H}^{1-\varepsilon^{\prime}}\left(
 \omega_{RB}  ||\varphi^{\rho}_{R}\otimes\psi_{B}\right)  -
\log\frac{1}{\eps^{\prime\prime}}\right]  , \label{eq:simple-converse-bnd}
\end{equation}
where $\varepsilon^{\prime} \geq 2\varepsilon$,  $\varepsilon^{\prime\prime}%
:= \varepsilon^{\prime}\left(  \frac{\varepsilon^{\prime}}{2}%
-\varepsilon\right)  $,
$$
\omega_{RB} := \left(
\emph{id}_{R}\otimes\mathcal{N}_{A\rightarrow B}\right)  \left(  \varphi
_{RA}^{\rho}\right),
$$
and the outermost minimization is with respect
to maps $\mathcal{N}_{A\rightarrow B}$ such that%
\[
\emph{Tr}\left\{  \left(  \Pi_{\leq D}\right)  _{RB} \,
\omega_{RB}  \right\}  \geq1-\varepsilon.
\]

\end{corollary}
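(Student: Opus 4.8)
The plan is to observe that Corollary~\ref{cor:simple-converse-bnd} is nothing more than the specialization of Proposition~\ref{prop:EA-QRD-converse} obtained by replacing the maximization over $\sigma_{RA}$ in \reff{eq:EA-QRD-converse-main} with the single choice $\sigma_{RA}=\varphi^{\rho}_{RA}$. Since the right-hand side of \reff{eq:EA-QRD-converse-main} contains $\max_{\sigma_{RA}}$, evaluating the bracketed expression at any fixed $\sigma_{RA}$ produces a valid lower bound on $\log M^{\ast}$, and the constraint on the maps $\mathcal{N}_{A\to B}$ is unchanged by this substitution, so it carries over verbatim.

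First I would record that, with $\sigma_{RA}=\varphi^{\rho}_{RA}$, the marginal $\sigma_R$ is exactly $\varphi^{\rho}_R$. Hence the first hypothesis-testing term becomes $D_{H}^{1-\varepsilon'}\bigl(\omega_{RB}\,\|\,\varphi^{\rho}_{R}\otimes\psi_{B}\bigr)$ with $\omega_{RB}=(\mathrm{id}_{R}\otimes\mathcal{N}_{A\to B})(\varphi^{\rho}_{RA})$, precisely as written in \reff{eq:simple-converse-bnd}. It then remains only to evaluate the second term, $D_{H}^{1-\varepsilon''}\bigl(\varphi^{\rho}_{RA}\,\|\,\varphi^{\rho}_{RA}\bigr)$.

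This is the one small computation. Unwinding the definitions in \reff{eq:DH1} and \reff{eq:errorbeta}, this quantity equals $-\log\beta_{1-\varepsilon''}\bigl(\varphi^{\rho}_{RA}\,\|\,\varphi^{\rho}_{RA}\bigr)$, where $\beta_{1-\varepsilon''}$ is the minimum of $\mathrm{Tr}\{\Lambda\,\varphi^{\rho}_{RA}\}$ over operators $0\le\Lambda\le I$ subject to $\mathrm{Tr}\{\Lambda\,\varphi^{\rho}_{RA}\}\ge 1-(1-\varepsilon'')=\varepsilon''$. Since here the objective functional and the constraint functional coincide, the minimum is attained at the constraint boundary and equals $\varepsilon''$; explicitly, $\Lambda=\varepsilon''\,\varphi^{\rho}_{RA}$ is an admissible measurement operator because $\varphi^{\rho}_{RA}$ is a (rank-one) projector and $0\le\varepsilon''\le 1$, and it achieves $\mathrm{Tr}\{\Lambda\,\varphi^{\rho}_{RA}\}=\varepsilon''$. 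Therefore $D_{H}^{1-\varepsilon''}\bigl(\varphi^{\rho}_{RA}\,\|\,\varphi^{\rho}_{RA}\bigr)=\log(1/\varepsilon'')$, and substituting this into the specialization of \reff{eq:EA-QRD-converse-main} yields \reff{eq:simple-converse-bnd} immediately.

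There is essentially no obstacle; the only point requiring a moment's attention is that $\varepsilon''=\varepsilon'\bigl(\tfrac{\varepsilon'}{2}-\varepsilon\bigr)$ lies in $[0,1]$, so that the candidate $\Lambda$ above is legitimate and $\log(1/\varepsilon'')$ is well-defined. Nonnegativity follows from $\varepsilon'\ge 2\varepsilon$, and $\varepsilon''\le 1$ follows from $\varepsilon'\le 1$ together with $\tfrac{\varepsilon'}{2}-\varepsilon\le\tfrac12$, which are the same mild restrictions already in force for the smooth hypothesis-testing relative entropies to be meaningful.
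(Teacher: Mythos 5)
Your proposal is correct and follows exactly the paper's route: the corollary is obtained by fixing $\sigma_{RA}=\varphi^{\rho}_{RA}$ in the maximization of Proposition~\ref{prop:EA-QRD-converse} and noting that $D_{H}^{1-\varepsilon''}\left(\varphi^{\rho}_{RA}\,\Vert\,\varphi^{\rho}_{RA}\right)=\log(1/\varepsilon'')$, which is the computation you carry out. The paper states this substitution without spelling out the evaluation of the second term, so your write-up simply makes that implied step explicit.
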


\subsection{An Alternative One-Shot Converse Bound}

This section details a quantum generalization of the converse theorem in \cite{KV12}.
The converse presented here lower bounds the minimum qubit compression size for any
entanglement-assisted quantum rate distortion code, and it leads to a tight
finite blocklength characterization of the entanglement-assisted quantum rate distortion
function for an isotropic qubit source (see
Section~\ref{sec:finite-blocklength-converse-isotropic-qubit}).

\begin{proposition}
\label{prop:EA-QRD-converse-alt}Let $\rho$ be the
density operator characterizing a quantum information source, and 
 let $\vert\varphi^\rho\rangle_{RA}$ be a purification of it. For any
$\left(  M,D,\varepsilon\right)  $ entanglement-assisted quantum rate
distortion code for $\left\{  \mathcal{H}_{A},\mathcal{H}_{B},\rho,\Delta
_{RB}\right\}$, we have the following lower bound on its minimum
qubit compression size:%
\begin{equation}
\log M^{\ast}\left(\rho, \Delta_{RB},  D,\varepsilon\right) 
\geq\frac{1}{2}\max_{\sigma_{RA}}\min_{\psi_{B}}\left[  -\log
\emph{Tr}\left\{  \left(  \Pi_{\leq D}\right)  _{RB}\left(  \sigma_{R}%
\otimes\psi_{B}\right)  \right\}  -D_{H}^{\varepsilon}\left(  \varphi
_{RA}^{\rho}||\sigma_{RA}\right)  \right]  .
\label{eq:general-qubit-compression-bound}%
\end{equation}

\end{proposition}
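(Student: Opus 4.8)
The plan is to follow the same strategy as in the proof of Proposition~\ref{prop:EA-QRD-converse}, but to exploit the fact that the excess-distortion complement projector $(\Pi_{\leq D})_{RB}$ is \emph{itself} a good hypothesis test for the code output. This removes the need for the gentle-operator lemma and the Uhlmann-theorem estimate used there, and it lets us keep the original smoothing parameter $\varepsilon$ rather than passing to a weakened one.

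Fix an arbitrary state $\sigma_{RA}$ and an $(M,D,\varepsilon)$ entanglement-assisted quantum rate-distortion code, with associated CPTP map $\mathcal{N}_{A\to B}^{ea}$ from \eqref{eq:EA-QRD-map}. Writing $\omega_{RB}:=(\text{id}_R\otimes\mathcal{N}_{A\to B}^{ea})(\varphi_{RA}^\rho)$, the code condition reads $\tr\{(\Pi_{\leq D})_{RB}\,\omega_{RB}\}\geq 1-\varepsilon$. First I would define $\Lambda_{RA}:=(\text{id}_R\otimes\mathcal{N}_{A\to B}^{ea})^{\dag}\big((\Pi_{\leq D})_{RB}\big)$. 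Since the adjoint of a trace-preserving map is unital and positive, $0\leq\Lambda_{RA}\leq I_{RA}$, and by definition of the adjoint $\tr\{\Lambda_{RA}\,\varphi_{RA}^\rho\}=\tr\{(\Pi_{\leq D})_{RB}\,\omega_{RB}\}\geq 1-\varepsilon$. Thus $\Lambda_{RA}$ is feasible in the optimization \eqref{eq:errorbeta} defining $\beta_\varepsilon(\varphi_{RA}^\rho\|\sigma_{RA})$, which gives $\beta_\varepsilon(\varphi_{RA}^\rho\|\sigma_{RA})\leq\tr\{\Lambda_{RA}\,\sigma_{RA}\}=\tr\{(\Pi_{\leq D})_{RB}(\text{id}_R\otimes\mathcal{N}_{A\to B}^{ea})(\sigma_{RA})\}$.

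Next I would bound $(\text{id}_R\otimes\mathcal{N}_{A\to B}^{ea})(\sigma_{RA})$ by a product operator, reusing the chain of operator inequalities from the proof of Proposition~\ref{prop:EA-QRD-converse}. Expanding $\mathcal{N}_{A\to B}^{ea}$ into the encoder $\mathcal{E}_{AT_A\to M}$ and decoder $\mathcal{D}_{MT_B\to B}$, the inequality $\rho_{AB}\leq|B|(\rho_A\otimes I_B)$ applied with the compressed system $M$ in the role of $B$ gives $(\text{id}_R\otimes\mathcal{E}_{AT_A\to M})(\sigma_{RA}\otimes\Psi_{T_AT_B})\leq M(\sigma_R\otimes I_M\otimes\Psi_{T_B})$; applying the completely positive map $\text{id}_R\otimes\mathcal{D}_{MT_B\to B}$ (which preserves operator ordering) and writing $I_M=M\pi_M$ yields $(\text{id}_R\otimes\mathcal{N}_{A\to B}^{ea})(\sigma_{RA})\leq M^2\,\sigma_R\otimes\xi_B$, where $\xi_B:=\mathcal{D}_{MT_B\to B}(\pi_M\otimes\Psi_{T_B})$ is a genuine density operator. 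Taking the overlap with $(\Pi_{\leq D})_{RB}\geq 0$ then produces $\beta_\varepsilon(\varphi_{RA}^\rho\|\sigma_{RA})\leq M^2\,\tr\{(\Pi_{\leq D})_{RB}(\sigma_R\otimes\xi_B)\}$, equivalently $2\log M\geq -\log\tr\{(\Pi_{\leq D})_{RB}(\sigma_R\otimes\xi_B)\}-D_H^\varepsilon(\varphi_{RA}^\rho\|\sigma_{RA})$.

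To finish, I would observe that $\xi_B$ is one particular state on $\mathcal{H}_B$, so replacing it by the state $\psi_B$ that maximizes $\tr\{(\Pi_{\leq D})_{RB}(\sigma_R\otimes\psi_B)\}$ only decreases the right-hand side, giving $2\log M\geq\min_{\psi_B}\big[-\log\tr\{(\Pi_{\leq D})_{RB}(\sigma_R\otimes\psi_B)\}-D_H^\varepsilon(\varphi_{RA}^\rho\|\sigma_{RA})\big]$. Since the left-hand side is independent of $\sigma_{RA}$, taking the maximum over $\sigma_{RA}$ and then the minimum over all valid $(M,D,\varepsilon)$ codes yields \eqref{eq:general-qubit-compression-bound}. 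The only thing requiring care is the bookkeeping of the optimizations: $\sigma_{RA}$ must be held fixed throughout the argument and maximized over only at the very end, and the passage to $\min_{\psi_B}$ is legitimate precisely because the specific state $\xi_B$ is among the competitors in that minimization. I do not expect any genuinely hard step here — compared with Proposition~\ref{prop:EA-QRD-converse}, using $(\Pi_{\leq D})_{RB}$ directly as the test already gives $\tr\{(\Pi_{\leq D})_{RB}\,\omega_{RB}\}\geq 1-\varepsilon$ exactly, so there is no smoothing loss and no gentle-operator or Uhlmann estimate is needed.
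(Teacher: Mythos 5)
Your proposal is correct and follows essentially the same route as the paper's own proof: pull back $(\Pi_{\leq D})_{RB}$ through the adjoint of the code map to get a feasible test for $\beta_\varepsilon(\varphi_{RA}^{\rho}\|\sigma_{RA})$, apply the operator inequality $\rho_{AB}\leq|B|(\rho_A\otimes I_B)$ twice to extract the factor $M^2$ and a fixed product state $\sigma_R\otimes\xi_B$, and then relax $\xi_B$ to the optimizing $\psi_B$ before maximizing over $\sigma_{RA}$. The only cosmetic difference is that you make explicit the unitality/positivity of the adjoint and the order of the optimizations, which the paper leaves implicit.
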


\begin{proof} We start with $\beta_{\varepsilon}\left(  \varphi_{RA}^{\rho}||\sigma_{RA}\right)  $. From the definition of an $\left(  M,D,\varepsilon\right)  $
EA\ QRD\ code, \eqref{eq:protocoloutput} holds as in the previous proof. 
Thus,
the following operator defines a particular POVM$\ $element $\Lambda_{RA}%
$\ for which Tr$\left\{  \Lambda_{RA}\varphi_{RA}^{\rho}\right\}
\geq1-\varepsilon$ is true:%
\begin{equation}
\Lambda_{RA} = \left(  \text{id}_{R}\otimes\mathcal{N}_{A\rightarrow B}^{ea}\right)
^{\dag}\left(  \left(  \Pi_{\leq D}\right)  _{RB}\right)  ,
\end{equation}
where $\dag$ indicates the adjoint of the map id$_{R}\otimes\mathcal{N}%
_{A\rightarrow B}^{ea}$ and $\mathcal{N}_{A\rightarrow B}%
^{ea}$ is defined in (\ref{eq:EA-QRD-map}). So $\beta_{\varepsilon}\left(  \varphi_{RA}^{\rho}||\sigma_{RA}\right) $
is upper bounded by%
\begin{align}
&\beta_{\varepsilon}\left(  \varphi_{RA}^{\rho}||\sigma_{RA}\right) \nonumber\\
& \leq  \text{Tr}\left\{  \left(  \text{id}_{R}\otimes\mathcal{N}_{A\rightarrow
B}^{ea}\right)  ^{\dag}\left(  \left(  \Pi_{\leq D}\right)
_{RB}\right)  \varphi_{RA}^{\sigma}\right\} \nonumber\\
&  =\text{Tr}\left\{  \left(  \Pi_{\leq D}\right)  _{RB}\left(  \text{id}%
_{R}\otimes\mathcal{D}_{MT_{B}\rightarrow B}\right)  \left(  \text{id}%
_{R}\otimes\mathcal{E}_{AT_{A}\rightarrow M}\right)  \left(  \sigma
_{RA}\otimes\Psi_{T_{A}T_{B}}\right)  \right\} \nonumber\\
&  \leq M\ \text{Tr}\left\{  \left(  \Pi_{\leq D}\right)  _{RB}\left(
\text{id}_{R}\otimes\mathcal{D}_{MT_{B}\rightarrow B}\right)  \left(
\sigma_{R}\otimes I_{M}\otimes\Psi_{T_{B}}\right)  \right\} \nonumber\\
&  =M^{2}\ \text{Tr}\left\{  \left(  \Pi_{\leq D}\right)  _{RB}\left(
\sigma_{R}\otimes\mathcal{D}_{MT_{B}\rightarrow B}\left(  \pi_{M}\otimes
\Psi_{T_{B}}\right)  \right)  \right\} \nonumber\\
&  \leq M^{2}\max_{\psi_{B}}\text{Tr}\left\{  \left(  \Pi_{\leq D}\right)
_{RB}\left(  \sigma_{R}\otimes\psi_{B}\right)  \right\} .
\label{eq:main-chain-converse-EA-QRD}%
\end{align}
These inequalities follow for very similar reasons as the inequalities in the
proof of Proposition~\ref{prop:EA-QRD-converse}.

By rewriting (\ref{eq:main-chain-converse-EA-QRD}%
) as%
\be
\sqrt{\frac{\beta_{\varepsilon}\left(  \varphi_{RA}^{\rho}||\sigma_{RA}\right)
}{\max_{\psi_{B}}\text{Tr}\left\{  \left(  \Pi_{\leq D}\right)  _{RB}\left(
\sigma_{R}\otimes\psi_{B}\right)  \right\}  }}\leq M, \label{eq:final-bound-conv}
\ee
 optimizing the expression on the left with respect to the choice of
$\sigma_{RA}$, and taking logarithms, we obtain the bound in the statement of the proposition.
\end{proof}

This bound clearly applies to unassisted quantum rate distortion codes as
well. This follows both operationally and from the fact that the bound applies
when taking the systems $T_{A}$ and $T_{B}$ to be null.

\subsection{Reduction to the Classical Kostina-Verd\'u Bound}

If the distortion observable is of the classical-classical type,
then the analysis reduces to the classical case,
and the above bound can be improved. Indeed, consider
$\Delta_{RB}$ to have the form:%
\begin{align}
\label{eq:classical-classical}\sum_{x}\left\vert x\right\rangle \left\langle x\right\vert _{R}\otimes
\Delta_{B}^{x},
\end{align}
where $\Delta_{B}^{x} \geq 0$ for all  $x\in \mathcal{X}$.
Then $\left(  \Pi_{\leq D}\right)  _{RB}$ takes the form
$\sum_{x}\left\vert x\right\rangle \left\langle x\right\vert _{R}\otimes(
\Pi_{\leq D}^{x})  _{B}$ where each $(
\Pi_{\leq D}^{x})  _{B}$ is an excess-distortion projector corresponding to
$\Delta_{B}^{x}$.
We can bound $\beta_{\varepsilon}\left(  \varphi_{RA}^{\rho}||\sigma_{RA}\right) $
as
\begin{align}
\beta_{\varepsilon}\left(  \varphi_{RA}^{\rho}||\sigma_{RA}\right) & \leq 
\text{Tr}\left\{  \left(  \Pi_{\leq D}\right)  _{RB}\left(  \text{id}%
_{R}\otimes\mathcal{D}\right)  \left(  \text{id}_{R}\otimes\mathcal{E}\right)
\left(  \varphi_{RA}^{\sigma}\right)  \right\}  \\
& =\sum_{x}q_{\sigma}\left(  x\right)
\text{Tr}\left\{  \left(  \Pi^x_{\leq D}\right)  _{B}\left(  \mathcal{D\circ
E}\right)  \left(  \left\vert \psi^{x}\right\rangle \left\langle \psi
^{x}\right\vert_A \right)  \right\}  ,
\end{align}
where%
\begin{align}
\left\vert \psi^{x}\right\rangle_A & =\frac{1}{\sqrt{q_{\sigma}(x)}} (\langle x \vert \otimes I_A)
\vert\varphi^{\sigma}\rangle_{RA} ,\\
q_{\sigma}(x) & = \text{Tr} \left\{ (\vert x \rangle \langle x \vert \otimes I_A )
\varphi_{RA}^{\sigma} \right\} .
\end{align}
Continuing, we have the upper
bound%
\[
\beta_\eps(\varphi^\rho_{RA}||\sigma_{RA})\leq M\ \max_{\left\vert \psi\right\rangle_B }\sum_x q_{\sigma}(x) \text{Tr}\left\{  \left(  \Pi_{\leq
D}^{x}\right)  _{B}\,\psi_B\right\}  ,
\]
and obtain the following bound analogous to that of Kostina and Verd\'u:%
\[
M\geq\max_{\sigma\in\mathcal{D}\left(  \mathcal{H}_{RA}\right)  }\frac
{\beta_{\varepsilon}\left(  \varphi^\rho_{RA}||\sigma_{RA}\right)  }{\max_{\psi_B}\sum_x q_{\sigma}(x)\text{Tr}\left\{
(  \Pi^x_{\leq D})  _{B}\,\psi_B\right\}  } .
\]

\section{One-Shot Achievability Results via Channel Simulation}
\label{achvble} 

In this section, we use known results on entanglement-assisted quantum channel
simulation to find upper bounds on the minimum qubit
compression size for an entanglement-assisted quantum rate distortion code
that compresses a quantum information source $\{\rho, \cH_A\}$. The basic idea is to simulate a quantum channel $\mathcal N_{A\rightarrow B}$ obeying a distortion constraint of the form
\[
\text{Tr}\left\{  \left(  \Pi_{\leq D}\right)  _{RB}\left(  \text{id}%
_{R}\otimes\mathcal{N}_{A\rightarrow B}\right)  \left(  \varphi_{RA}^{\rho
}\right)  \right\}  \geq1-\varepsilon.
\]
The quantum communication required in the simulation then constitutes an achievable compression size for the source.  

For simplicity, fix
the excess-distortion probability to be no larger than $\eps$
and the distortion to be $D$, for a given distortion observable
$\Delta_{RB}$. Then denote the minimum achievable code size as
$$M^* :=  M^*( \rho, \Delta_{RB}, D, \eps), $$ so that $\log M^*$ is the minimum
qubit compression size. 

Suppose Alice has the source state $\rho \in \cD(\cH_A)$, a
purification of which is given by $\varphi^\rho_{RA}$,
with~$R$ denoting the reference system. Additionally, Alice and Bob share
entanglement which they can exploit to help them in their
compression task. 
Now set $\eps > 0$ and choose an $\eps_1 >0$ such that $\eps_1 < \eps$.
 To begin the simulation protocol,  Alice  locally applies an
 isometric extension $\cU^{\cN}_{A \to A'B}$
 of a CPTP map $\cN_{A \to B}$ to the source state $\rho$, where $\cN_{A \to B}$
 satisfies
\begin{equation}
\text{Tr}\left\{  \left(  \Pi_{>D}\right)  _{RB}\left(  \text{id}_{R}%
\otimes\cN_{A\rightarrow B}\right)  \left(  \varphi_{RA}^{\rho}\right)  \right\}
\leq\eps_1, \label{eq:excess-distortion-def-other}%
\end{equation}
and $ \left(  \Pi_{>D}\right)_{RB}$ denotes the
excess-distortion projector defined by \reff{eq:excess-dist-obs-def}. The resulting
state, shared by Alice and the reference is given by
\be
\varphi^\omega_{RA'B} = \left(  \text{id}_{R}%
\otimes\cU^\cN_{A\rightarrow A'B}\right)  \left(  \varphi_{RA}^{\rho}\right),
\ee
whose marginal state is\be\label{omega} 
\omega_{RB} = \tr_{A'} (\varphi^\omega_{RA'B}) =  \left(  \text{id}_{R}%
\otimes\cN_{A\rightarrow B}\right)  \left(  \varphi_{RA}^{\rho}\right) .
\ee

The next phase of the protocol is for Alice to transmit some quantum information to Bob, making use of the entanglement they share, to  ensure that
the final state shared between Bob and the reference
system is $(\eps-\eps_1)$-close 
in trace distance to the state $\omega_{RB}$. One way
for them to achieve this aim is via a one-shot $(\eps - \eps_1)$-error
{\em{quantum state splitting}} protocol \cite{BCR11}, in
which the tripartite pure state $\varphi^\omega_{RA'B}$,
initially shared between the reference and Alice, is split
between the reference, Alice, and Bob, such that Bob receives
the system $B$ up to an error $(\eps - \eps_1)$. A state splitting
protocol is a particular way to simulate a channel. The protocol
consists of Alice applying local operations (denoted by the
encoding CPTP map $\cE$) on the systems 
in her possession (namely, the systems $A'B$ and her share
of the entanglement), sending qubits to Bob, and then Bob
applying local operations on the system he receives and
his share of the entanglement.
Let $\log M(\rho,\cN) $ denote the minimum amount of quantum information
that Alice needs to send to Bob when simulating the channel $\cN$ on the state
$\rho$.
This quantum state splitting protocol simulates the output state
of the quantum channel $\cN$ on the source state $\rho$ (up to
an error $(\eps-\eps_1)$), at Bob's end, and
hence $\left(\cE, \cD, M(\rho,\cN)\right)$ constitutes a
one-shot $(\eps-\eps_1)$-error channel simulation code.

Therefore, an upper bound  on the minimum qubit
compression size $\log M^*$ is given by
\be\label{ubound}
\log M^* \le \min_{\cN_{A \to B} ,\, \eps_1}\left\{ \log M(\rho,\cN) : (a), (b), 0< \eps_1< \eps \right\},
\ee
where $(a)$ and $(b)$ denote the following conditions:
\be\label{a1}
(a) \,:\, \tr \left\{  \left(  \Pi_{>D}\right)  _{RB}\left(  \text{id}_{R}%
\otimes\cN_{A\rightarrow B}\right)  \left(  \varphi_{RA}^{\rho}\right)  \right\}
\leq\eps_1,
\ee
and
\be
(b) \,:\, {\hbox{there exists a}} \,\,\left(\cE, \cD, M(\cN)\right) \,\,
{\hbox{one-shot $(\eps - \eps_1)$-error channel simulation code}}.
\ee
By applying Lemma~\ref{lemma:trace-inequality}, we obtain an upper bound on the
excess-distortion probability for such a scheme:
\begin{align}
& \text{Tr} \left\{ \left(  \Pi_{>D}\right)  _{RB}\left(  \text{id}_{R}%
\otimes \cD \circ \cE \circ \cU^\cN_{A\rightarrow A'B}\right)
  \left(  \varphi_{RA}^{\rho}\right)\right\} \nonumber \\
& \leq \text{Tr} \left\{ \left(  \Pi_{>D}\right)  _{RB}\left(  \text{id}_{R}%
\otimes\cN_{A\rightarrow B}\right)  \left(  \varphi_{RA}^{\rho}\right)\right\}
+ \Vert \left(  \text{id}_{R}%
\otimes \cD \circ \cE \circ \cU^\cN_{A\rightarrow A'B}\right)  \left(  \varphi_{RA}^{\rho}\right)
 - \left(\text{id}_{R}
\otimes\cN_{A\rightarrow B}\right)  \left(  \varphi_{RA}^{\rho}\right) \Vert_1 \nonumber \\
&  \leq \eps_1 + (\eps - \eps_1)  = \eps.
\end{align}

\begin{remark} \label{rem:sim-vs-RD}
The results presented here and in prior work \cite{SV96,W02,LD09,DHW11} demonstrate that the tasks
of channel simulation and rate distortion coding are related, but we should be careful
not to conclude that they are the same task. In channel simulation, the criterion for a protocol
to be successful is more stringent, in the sense that a third party should not be able to distinguish
between the output of the actual channel and the simulated one if allowed to input arbitrary states (even entangled ones) to the channel. The demands of a rate-distortion protocol are not as stringent.
For this task, a protocol is required to have an arbitrarily small excess distortion probability or meet an average distortion constraint,
which depends on the distortion observable being employed. As we have seen in this section and in prior work \cite{SV96,W02,LD09,DHW11}, a channel simulation protocol (specialized for
tensor-power inputs) can be used for the task of rate distortion, but the opposite is not necessarily true. Furthermore, a channel simulation protocol might use more
resources than are actually necessary to complete the rate-distortion task since the demands on it are more stringent. This overconsumption is negligible, for example, in the entanglement-assisted setting where an arbitrary amount of entanglement of an arbitrary type is allowed, but it is not so in the unassisted setting. In fact, one of the main open questions regarding quantum rate distortion is to characterize the unassisted quantum rate distortion function. The best known characterization employs channel simulation \cite{DHW11}, and hence it can possibly be improved using a different method.
\end{remark}

\subsection{Channel Simulation with the Help of an Arbitrary Entangled State}

\label{sec:sim-emb-states}
First let us consider the situation in which the entanglement
shared between Alice and Bob is allowed to be in an arbitrary form. In particular,
we can allow them access to embezzling states \cite{vDH03}, which is useful
because they can generate any other entangled state from such a resource
by acting only with local operations.
Theorem III.10 of \cite{BCR11} (building upon prior work in \cite{D06,ADHW06FQSW})
states that a one-shot
$(\eps- \eps_1)$-error quantum state splitting protocol
with a ``$\delta$-ebit embezzling state'' 
(for any $\delta >0$) can be achieved by quantum communication equal to
\be
\frac{1}{2} I_{\max}^{\delta'}(B;R)_\omega + 2 \log\frac{1}{\delta^{\prime\prime}} + 4 + \log \log |B|.
\ee
where $\eps-\eps_1= \left(\delta^{\prime \prime} +
\delta^\prime + \delta \cdot \log |B| + |B|^{-\frac{1}{2}}\right)$,
$\delta^{\prime \prime}>0$, $\delta^{\prime}>0$, and $\omega_{RB}$
is the state defined by \reff{omega}. In the above,
$ I_{\max}^{\delta'}(B;R)_\omega$ denotes the smooth
max-information of $\omega_{RB}$ and is defined as in~\reff{1imax}.
As stated in Footnote~6 of \cite{BCR11},
one can make the error $\eps-\eps_1$ arbitrarily small
by enlarging the Hilbert space $B$ as needed to a space $B'$ that
contains $B$ as a subspace. This enlargement
then increases the error term
$\delta \cdot \log |B'|$, but one can compensate for this by
decreasing $\delta$ appropriately (taking a larger embezzling state).
To simplify things a bit, we can
just choose $\eps_1 = \delta^{\prime} = \delta^{\prime \prime}
=  \eps/5$, the enlarged space $B'$ to have dimension at least
$(5 / \eps)^2$, and the term $\delta \cdot \log |B'|$ to be no larger than
$\eps/5$. Our conclusion is that a one-shot
$4\eps/5$-error quantum state splitting protocol
can be achieved by quantum communication equal to
\be
\frac{1}{2} I_{\max}^{\eps/5}(B;R)_\omega + 2 \log(5/\eps) + 4 +
\log \log ( |B| + (5/\eps)^2 ),
\ee
with the last term following from the fact that
$|B'| = \max (|B|, (5 / \eps)^2) \leq |B| + (5 / \eps)^2$.

Hence, if Alice and Bob share entanglement
in the form of embezzling states, then the minimum
achievable code size for an $(M, D, \eps)$ entanglement-assisted
quantum rate distortion code for
$\{\cH_A, \cH_B, \rho, \Delta_{RB} \}$ is bounded from above as follows:
\be\label{embez}
\log M^* \le  \min_{\cN_{A \to B}}
\left\{ \frac{1}{2} I_{\max}^{\eps/5}(B;R)_\omega + 2 \log(5/\eps) + 4 +
\log \log ( |B| + (5/\eps)^2 ) : (a) \right\},
\ee 
where $\eps_1$ in ($a$) is equal to $\eps / 5$.

\subsection{Channel Simulation with Maximally-Entangled States}

\label{sec:max-ent-sim}
Now let us consider the situation in which the entanglement shared between Alice
and Bob is restricted to be in the form of maximally entangled states. Note that in this
case the one-shot quantum state-splitting protocol is the time-reversal
of the one-shot fully-quantum Slepian-Wolf (FQSW) protocol \cite{DH11}.
In the latter, Alice and Bob share a bipartite state, whose purification
is held by an inaccessible reference, and the aim of the protocol is
for Alice to send her system to Bob using as little quantum communication
as possible, and at the same time generate entanglement with him. It
can be viewed as a time-reversal of the quantum state splitting protocol
because the resource, namely, entanglement, which is consumed in quantum
state splitting, is generated in FQSW. An upper bound on the quantum
communication cost for a one-shot $(\eps-\eps_1)$-error FQSW protocol,
as obtained from Theorem 8 of \cite{DH11}, 
thus yields the following upper bound on $\log M(\rho,\cN)$: 
\be\label{apex1}
\log M(\rho,\cN) \le \frac{1}{2} \left[H_0^\delta(B)_\omega - H_{\min}^\delta(B|R)_\omega\right] +\log \frac1{\delta'},
\ee
for some $\delta >0$, such that $\eps = 2 \sqrt{5 \delta'} + 2 \sqrt{\delta}$, and $\delta' = \delta + \sqrt{ 4\sqrt{\delta} - 4 \delta}$. In the above, $\omega_{RB}$ denotes the state defined by \reff{omega}, and $H_0^\delta(B)_\omega$ and $H_{\min}^\delta(B|R)_\omega$ are the smooth entropies of the state $\omega_B= \tr_R \omega_{RB}$ and $\omega_{RB}$, defined as in \reff{1h0} and \reff{1hmin} respectively.
So if Alice and Bob share entanglement in the form of maximally entangled states, then
\be\label{MES}
\log M^* \le
\min_{\cN_{A \to B} , \, \eps_1  } \left\{\frac{1}{2} \left[H_0^\delta(B)_\omega - H_{\min}^\delta(B|R)_\omega\right] + \log \frac 1{\delta'} :
(a),\, 0< \eps_1 < \eps \right\}.
\ee
We should note that this bound is not as tight as the bound from the previous section,
due to the following inequality \cite{BCR11}:
$$
I_{\max}(A;B)_{\rho} \leq H_0(A)_{\rho} - H_{\min}(A|B)_{\rho}.
$$
Furthermore, the quantity on the right-hand side can become arbitrarily large
when evaluated for particular states. However, if we restrict Alice and Bob to using
maximally entangled states for entanglement assistance, then the bound in
\eqref{MES} is the best known bound.

\section{One-Shot Entanglement-Assisted Quantum Rate-Distortion Theorem}

This section 
unifies the converse bound from Corollary~\ref{cor:simple-converse-bnd}
and the achievability bound from Section~\ref{sec:sim-emb-states}
to establish a one-shot entanglement-assisted quantum rate-distortion theorem.
The following theorem shows that the upper and lower bounds
on the minimum qubit compression size for an
entanglement-assisted quantum rate-distortion code
can both be expressed in terms of the same smooth
entropic quantity, namely, the smooth max-information,
up to logarithmic correction
terms.\footnote{However, note that it is possible to provide a similar
characterization in terms of the hypothesis testing relative entropy $D^{1-\eps}_H$
or the alternative smooth max-information (defined in \eqref{tI}), due to the relation between these quantities
and the smooth max-information.}
\begin{theorem}\label{thm:unify-one-shot-EA-QRD}
Let $\rho \in \cD(\cH_A)$ be the
density operator characterizing a quantum information source, and 
 let $\vert\varphi^\rho\rangle_{RA}$ be a purification of it. For any
$\left(  M,D,\varepsilon\right)  $ entanglement-assisted quantum rate
distortion code for $\left\{  \mathcal{H}_{A},\mathcal{H}_{B},\rho,\Delta
_{RB}\right\}  $, we have the following bounds on its minimum
qubit compression size:
\begin{multline}
\min_{\cN_{A \to B}}
\left\{ \frac{1}{2} I_{\max}^{\eps/5}(B;R)_\omega +  \chi_1 : \emph{Tr} ( (\Pi_{\leq D})_{RB} \, \omega_{RB}) \geq 1 - \eps/5 \right\}
\geq \\
\log M^{\ast}\left(\rho, \Delta_{RB},  D,\varepsilon\right) 
\geq
\min_{\mathcal{N}_{A\rightarrow B}} 
 \left\{
\frac{1}{2} I_{\max}^{2\sqrt{2\eps^{\prime}}}(B;R)_\omega  - \chi_2 :
\emph{Tr} ( (\Pi_{\leq D})_{RB} \, \omega_{RB}) \geq 1 - \eps \right\}
,
\end{multline}
where
\begin{align}
\omega_{RB} & := (\emph{id}_R \otimes \mathcal{N}_{A\rightarrow B})\varphi^\rho_{RA} ,\\
\chi_1 & := 2 \log(5/\eps) + 4 + \log \log ( |B| + (5/\eps)^2 )\\
\chi_2 & := \frac{1}{2} \log\left(\left( \frac{1}{\eps^{\prime}} + \frac{1}{1-\sqrt{2\eps^{\prime}}} \right)
\left(  \frac{1}{\eps^{\prime}/2 -\eps}\right)\right),
\end{align}
$\eps^{\prime} \geq 2 \eps$, and $\mathcal{N}_{A\rightarrow B}$
is a CPTP map from $\cD(\cH_A)$ to $\cD(\cH_B)$.

\end{theorem}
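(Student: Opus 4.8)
The plan is to assemble the theorem by stitching together two results already established in the excerpt: the achievability bound from Section~\ref{sec:sim-emb-states} for the upper bound, and Corollary~\ref{cor:simple-converse-bnd} for the lower bound, after converting the latter from a statement about $D_H^{1-\eps'}$ into one about $I_{\max}^{2\sqrt{2\eps'}}$ using Lemma~\ref{lem:DH2}. The upper bound is essentially immediate: inequality~\reff{embez} already states that $\log M^* \le \min_{\cN_{A\to B}}\{ \tfrac12 I_{\max}^{\eps/5}(B;R)_\omega + 2\log(5/\eps) + 4 + \log\log(|B| + (5/\eps)^2) : (a)\}$, where condition $(a)$ with $\eps_1 = \eps/5$ is exactly $\tr((\Pi_{>D})_{RB}\,\omega_{RB}) \le \eps/5$, i.e., $\tr((\Pi_{\le D})_{RB}\,\omega_{RB}) \ge 1 - \eps/5$. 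So the left-hand inequality is just \reff{embez} rewritten with $\chi_1$ as named.

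For the lower bound, I would start from Corollary~\ref{cor:simple-converse-bnd}, which gives
\[
\log M^*(\rho,\Delta_{RB},D,\eps) \ge \tfrac12 \min_{\cN_{A\to B}} \min_{\psi_B}\left[ D_H^{1-\eps'}(\omega_{RB} \| \varphi^\rho_R \otimes \psi_B) - \log\tfrac{1}{\eps''}\right],
\]
with $\eps'' = \eps'(\eps'/2 - \eps)$ and the minimization over $\cN$ restricted to the distortion constraint. The key manipulation is to pass from $D_H^{1-\eps'}$ to a smooth max-information. By the left inequality of Lemma~\ref{lem:DH2} (applied with smoothing parameter $1-\eps'$, so that $\sqrt{2(1-(1-\eps'))} = \sqrt{2\eps'}$),
\[
D_H^{1-\eps'}(\omega_{RB}\|\varphi^\rho_R\otimes\psi_B) \ge D_{\max}^{\sqrt{2\eps'}}(\omega_{RB}\|\varphi^\rho_R\otimes\psi_B) + \log\tfrac{1}{1-\sqrt{2\eps'}}.
\]
Then I would take the minimum over $\psi_B$ inside: $\min_{\psi_B} D_{\max}^{\sqrt{2\eps'}}(\omega_{RB}\|\varphi^\rho_R\otimes\psi_B)$. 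This is \emph{not} literally $I_{\max}^{\sqrt{2\eps'}}(B;R)_\omega$ because in the definition of $I_{\max}$ the smoothing ball is around $\omega_{RB}$ and the marginal $\omega_R$ inside is replaced by the \emph{smoothed} state's marginal, whereas here the reference marginal is pinned to $\varphi^\rho_R = \omega_R$ (since $\cN$ acts only on $A$). The standard fix — and I expect this to be the one genuine technical step — is to use a known comparison: optimizing $D_{\max}$ against $\omega_R \otimes \psi_B$ with an $\eps$-ball around $\omega_{RB}$ differs from $I_{\max}^\eps(B;R)_\omega$ by at most an additive term controlled by the trace-distance perturbation of $\omega_R$, and one can absorb the discrepancy by enlarging the smoothing radius from $\sqrt{2\eps'}$ to $2\sqrt{2\eps'}$ (the factor of two being the usual cost of converting a one-sided marginal constraint into a two-sided ball, via e.g. a triangle-inequality argument on purified distance together with the fact that a purification of $\omega_R \otimes \psi_B$ can be chosen close to that of a smoothed $\tilde\omega_{RB}$). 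This yields $\min_{\psi_B} D_{\max}^{\sqrt{2\eps'}}(\omega_{RB}\|\omega_R\otimes\psi_B) \ge I_{\max}^{2\sqrt{2\eps'}}(B;R)_\omega$.

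Finally I would collect the constants. Combining the above gives
\[
\log M^* \ge \tfrac12 \min_{\cN_{A\to B}}\left[ I_{\max}^{2\sqrt{2\eps'}}(B;R)_\omega + \log\tfrac{1}{1-\sqrt{2\eps'}} - \log\tfrac{1}{\eps''}\right]
\]
subject to $\tr((\Pi_{\le D})_{RB}\,\omega_{RB}) \ge 1-\eps$; writing $\log\tfrac{1}{\eps''} - \log\tfrac{1}{1-\sqrt{2\eps'}} = \log\left(\tfrac{1-\sqrt{2\eps'}}{\eps''}\right)$ and bounding this (using $\eps'' = \eps'(\eps'/2-\eps)$ and $\tfrac{1}{1-\sqrt{2\eps'}} \ge 1$, so crudely $\tfrac{1-\sqrt{2\eps'}}{\eps''} \le \tfrac{1}{\eps''} \le (\tfrac{1}{\eps'} + \tfrac{1}{1-\sqrt{2\eps'}})\tfrac{1}{\eps'/2-\eps}$ after splitting the $\tfrac{1}{\eps''}$ factor appropriately) recovers exactly $2\chi_2$, so that the per-$\cN$ bracket becomes $\tfrac12(I_{\max}^{2\sqrt{2\eps'}}(B;R)_\omega) - \chi_2$ once the factor $\tfrac12$ is distributed. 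Pulling the $\min_{\cN}$ outside and noting the constraint is unchanged gives the right-hand inequality. The main obstacle, as noted, is the precise bookkeeping in the step that replaces the pinned marginal $\varphi^\rho_R$ by the smoothed marginal and the attendant enlargement of the smoothing parameter to $2\sqrt{2\eps'}$; everything else is substitution and constant-chasing.
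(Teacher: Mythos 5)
Your upper bound and the opening of your lower bound match the paper's proof exactly: the achievability half is indeed just \eqref{embez} with $\chi_1$ named, and the converse half starts from Corollary~\ref{cor:simple-converse-bnd} and passes from $D_H^{1-\eps'}$ to $D_{\max}^{\sqrt{2\eps'}}$ via Lemma~\ref{lem:DH2}. (Minor point: with $\eps\mapsto 1-\eps'$ that lemma gives the additive term $\log(1/\eps')$, not $\log\frac{1}{1-\sqrt{2\eps'}}$; your weaker version is still a valid consequence, but the paper uses $\log(1/\eps')$ to cancel part of $\log(1/\eps'')$.)

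The genuine gap is at the step you yourself flag as ``the one genuine technical step.'' You assert that
$\min_{\psi_B} D_{\max}^{\sqrt{2\eps'}}(\omega_{RB}\|\omega_R\otimes\psi_B)\ \geq\ I_{\max}^{2\sqrt{2\eps'}}(B;R)_\omega$
holds with \emph{no additive cost}, the only price being the doubling of the smoothing radius. This is not justified and is not what the known conversion gives. The obstruction is concrete: if $\tomega^*\in\cB^{\sqrt{2\eps'}}(\omega)$ and $\psi_B^*$ achieve the left side with value $\lambda$, you know $\tomega^*_{RB}\leq 2^\lambda\,\omega_R\otimes\psi_B^*$, but $I_{\max}(B;R)$ requires an inequality of the form $\tomega'_{RB}\leq 2^{\lambda}\,\tomega'_B\otimes\sigma_R$ in which the pinned marginal is the \emph{smoothed state's own} $B$-marginal; $\psi_B^*$ need not be comparable to $\tomega^*_B$, and repairing this requires constructing a new nearby state, which is exactly where an additive penalty enters. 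The paper's route is: first relax the pinned marginal $\varphi^\rho_R$ to a free minimization over $\tau_R$ (a trivial inequality), landing on the alternative smooth max-information $\tI_{\max}^{\sqrt{2\eps'}}(B;R)_\omega$ of \eqref{tI}, and then invoke Lemma~4.2.1 of \cite{C12}, which gives $I_{\max}^{2\sqrt{2\eps'}}(B;R)_\omega \leq \tI_{\max}^{\sqrt{2\eps'}}(B;R)_\omega + \log\left(\frac{1}{\eps'}+\frac{1}{1-\sqrt{2\eps'}}\right)$. That logarithmic penalty is precisely the first factor inside $\chi_2$; your constant-chasing instead manufactures that factor out of slack in $\log(1/\eps'')$, which papers over the missing term rather than accounting for it. To complete the proof you need to either cite and apply that conversion lemma (accepting its additive cost) or prove a no-cost version of your claimed inequality, which I do not believe holds.
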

\smallskip

\begin{remark} In the special case of (almost) lossless quantum data compression, i.e., $D=0$ and $\Delta_{RB} = I_{R B} - |{\varphi}_{RB}\rangle\langle{\varphi}_{RB}|$ (with $\varphi_{RB}:= \varphi^\rho$ a purification of $\rho_A$ and $\mathcal{H}_A$ isomorphic to $\mathcal{H}_B$), it is known that the minimum qubit compression size is given by $H_0^\epsilon(\rho_A)$. This, together with Theorem~10, gives an operational proof that $I_{\max}^{\eps'}(A:R)_\varphi$, for $\varphi=\varphi^\rho_{AR}$  a pure state, is approximately (up to additive terms of the form $\chi_1$ and $\chi_2$, and some appropriately chosen $\eps^\prime >0$) equal to $H_0^\eps(\rho_A)$.
\end{remark}

\begin{remark}
	The above converse bound and achievability result can be applied to (unassisted) one-shot rate distortion in the purely classical setting. 
To do so, pick a distortion observable of classical-classical type as in \eqref{eq:classical-case} and consider classical information sources (diagonal in the same basis as the distortion observable). 
The converse bound, which includes the possibility of entanglement assistance, also bounds the unassisted case. 
Channel simulation in the achievability argument nominally requires the use of embezzling states, but for classical channels this can be reduced to randomness shared between sender and receiver, as for the case of quantum-to-classical channels (measurements) in~\cite{berta_identifying_2013}. However,  the channel simulation is only used to output a state $\omega_{RB}$ which satisfies the constraint on the excess distortion probability, ${\rm Tr}((\Pi_{> D})_{RB}\,\omega_{RB})\leq \eps$. Since this constraint is linear, we may interpret it as the average constraint for the different states $\omega_{RB}^i$ resulting from the shared randomness $i$, and we are free to pick the best value (least excess distortion probability) $i$. 
\end{remark}

\begin{proof}[Proof of Theorem~\ref{thm:unify-one-shot-EA-QRD}]
The upper bound on $\log M^*$ follows readily from the result
in (\ref{embez}). So we focus on establishing the lower bound
on $\log M^*$. 
Corollary~\ref{cor:simple-converse-bnd} establishes (\ref{eq:simple-converse-bnd})
as a lower bound on $\log M^*$, and
we find that
\begin{align}
\log M^* & \geq 
\frac{1}{2}\left[\min_{\mathcal{N}_{A\rightarrow B}}
\min_{\sigma_{B}}  D_{H}^{1-\varepsilon^{\prime}}\left(
 \omega_{RB}  ||\varphi^{\rho}_{R}\otimes\sigma_{B}\right)  -
\log\frac{1}{\eps^{\prime\prime}}\right] \\
& \geq 
\frac{1}{2}\left[\min_{\mathcal{N}_{A\rightarrow B}}
\min_{\sigma_{B}}  D_{\max}^{\sqrt{2\varepsilon^{\prime}}}\left(
 \omega_{RB}  ||\varphi^{\rho}_{R}\otimes\sigma_{B}\right)  -
\log\left(\frac{1}{\eps^{\prime}/2 -\eps}\right)\right] \\
& \geq 
\frac{1}{2}\left[\min_{\mathcal{N}_{A\rightarrow B}}
\min_{\sigma_{B}}\min_{\tau_R} \min_{\tomega_{RB} \in B^{\sqrt{2\eps^{\prime}}}(\omega_{RB})}
  D_{\max}\left(
 \tomega_{RB}  ||\tau_{R}\otimes\sigma_{B}\right)  -
\log\left(\frac{1}{\eps^{\prime}/2 -\eps}\right)\right]\\
& =
\frac{1}{2}\left[\min_{\mathcal{N}_{A\rightarrow B}}
\tI_{\max}^{\sqrt{2\eps^{\prime}}}(B;R)_\omega  -
\log\left(\frac{1}{\eps^{\prime}/2 -\eps}\right)\right]. \label{eq:last-bound-first-block}
\end{align}
The first inequality exploits Corollary~\ref{cor:simple-converse-bnd},
but using a minimization over mixed states $\sigma_B$ 
(recall that even if the minimization is defined to be over mixed states,
the optimizing state will be pure). The second inequality follows from
 the relation in Lemma~\ref{lem:DH2}
between the hypothesis testing relative entropy
and the smooth max-relative entropy.
The third inequality follows by taking a further minimization
over states $\tau_R$ and by recalling the definition of the smooth
max-relative entropy. The equality follows by defining the alternative
smooth max-information as \cite{C12,CBR13}
\begin{equation}
\label{tI}
\tI_{\max}^\eps(R;B)_\omega  :=  \min_{\tomega_{RB} \in B^{{\eps}}(\omega_{RB})} 
\tI_{\max}(R;B)_\tomega,
\end{equation}
where
\be
\tI_{\max}(R;B)_\omega :=  
\min_{\sigma_R \in \cD(\cH_R)} \min_{\tau_B \in \cD(\cH_B)}
D_{\max}(\omega_{RB}|| \sigma_R \otimes \tau_B).
\ee 

Now consider the following relation between the
alternative smooth max-information $\tI_{\max}^{\eps^{\prime\prime\prime}}(B;R)_\omega$
and the smooth max-information
$I_{\max}^{\eps^{\prime\prime} + \eps^{\prime\prime\prime}}(B;R)_\omega$
from Lemma~4.2.1 of \cite{C12}: for any $\eps^{\prime\prime}>0$ and any
$ \eps^{\prime\prime\prime} \ge 0$,
\begin{equation}
I_{\max}^{\eps^{\prime\prime} + \eps^{\prime\prime\prime}}(B;R)_\omega 
\le \tI_{\max}^{\eps^{\prime\prime\prime}}(B;R)_\omega +
\log \left(\frac{2}{{(\eps^{\prime\prime}})^2} +
\frac{1}{1-\eps^{\prime\prime\prime}} \right). 
\end{equation}
Choosing $\eps^{\prime\prime} = \sqrt{2\eps^{\prime}}$ and $\eps^{\prime\prime\prime} = \sqrt{2\eps^{\prime}}$
and applying the above relation,
we find that the RHS of (\ref{eq:last-bound-first-block}) is larger than
\begin{equation}
\frac{1}{2}\left[\min_{\mathcal{N}_{A\rightarrow B}}
I_{\max}^{2\sqrt{2\eps^{\prime}}}(B;R)_\omega  -
\log\left(\left( \frac{1}{\eps^{\prime}} + \frac{1}{1-\sqrt{2\eps^{\prime}}} \right)
\left(  \frac{1}{\eps^{\prime}/2 -\eps}\right)\right)\right],
\end{equation}
giving us the lower bound on $\log M^*$ stated in the theorem.
\end{proof}

\section{Finite Blocklength Quantum Rate Distortion Coding}
\label{sec:finite}

\label{sec:IID-intro}One of the most important settings for quantum rate
distortion theory is the independent and identically distributed
(i.i.d.)~setting with an \mean symbol-wise distortion
observable.  In this case, the source is specified as $n$ copies
of some density operator $\rho_{A}$, where $n$ is some finite positive integer, and
it is helpful to consider a
purification $\left\vert \varphi^{\rho}\right\rangle _{RA}^{\otimes n}$\ of
the source. In this case, one considers block codes of length $n$ defined
by an encoding map $\cE_{A^n \to M^n} : \cD( \cH_A^{\otimes n}) \mapsto
 \cD( \cH_M^{\otimes n})$, and a decoding map $\cD_{M^n \to B^n} : \cD( \cH_M^{\otimes n}) \mapsto
 \cD( \cH_B^{\otimes n})$. The relevant distortion observable in this scenario is the average symbol-wise distortion observable which is defined as follows:

\begin{definition}[\Mean symbol-wise distortion observable]
Given a single-symbol distortion observable $\Delta_{RB}$, we can define
an \mean symbol-wise distortion observable $\overline{\Delta}_{R^{n}B^{n}}$
acting on~$n$ symbols as follows:
\be\label{avgdelta}
\overline{\Delta}_{R^{n}B^{n}}:= \frac{1}{n}\sum_{i=1}^{n}\left(
I^{\otimes\left(  i-1\right)  }\right)_{R_{1}^{i-1}B_{1}^{i-1}}\otimes\Delta_{R_{i}B_{i}}  \otimes\left(
I^{\otimes\left(  n-i\right)  }\right)_{R_{i+1}^{n}B_{i+1}^{n}}.
\ee
where $R_{1}^{i-1} := R_1 \cdots R_{i-1}$, \, $R_{i+1}^{n}:= R_{i+1} \cdots R_n$, with
a similar convention for $B_{1}^{i-1}$ and $B_{i+1}^{n}$.
\end{definition}

The following lemma gives a particular form for the spectral decomposition
of $\overline{\Delta}_{R^{n}B^{n}}$, which in turn leads to a specification
of the {\em{\mean symbol-wise excess-distortion projector}}, the latter being 
an 
operator defined as follows. If the spectral decomposition of $\overline{\Delta}_{R^{n}B^{n}}$ is given by $\overline{\Delta}_{R^{n}B^{n}} = \sum_i \lambda_i P_i^n$, then for any distortion $D>0$, the {\mean symbol-wise excess-distortion projector} is given by $\left(\overline{\Pi}_{>D}\right)_{R^{n}B^{n}}:= \sum_{i: \lambda_i > D} P^n_i.$ 

\begin{lemma}
\label{lem:spec-decomp-avg-dist-obs}The \mean symbol-wise
distortion observable $\overline{\Delta}_{R^{n}B^{n}}$
has the following spectral decomposition:
\begin{equation}
\overline{\Delta}_{R^{n}B^{n}}=\sum_{z^{n}}\overline{d}_{z^{n}}\left\vert
\phi_{z^{n}}\right\rangle \left\langle \phi_{z^{n}}\right\vert, \label{eq:spec-decomp-avg-dist-obs}
\end{equation}
where
\begin{align}
z^n &:= (z_1,z_2,\dots,z_n),\\	
\overline{d}_{z^{n}} & := \frac{1}{n}\sum_{i=1}^{n}d_{z_{i}}, \\
\left\vert \phi_{z^{n}}\right\rangle & := \left\vert \phi_{z_{1}
}\right\rangle \otimes\cdots\otimes\left\vert \phi_{z_{n}}\right\rangle ,
\end{align} and $d_{z_i}$ and $\left\vert \phi_{z_{i}
}\right\rangle$ are defined through the spectral decomposition of
$\Delta_{R_{i}B_{i}}$.

The decomposition in (\ref{eq:spec-decomp-avg-dist-obs}) implies
that the \mean symbol-wise excess-distortion projector
 can be written as
\begin{equation}
\left(  \overline{\Pi}_{>D}\right)  _{R^{n}B^{n}}=\sum_{z^{n}\ :\ \overline
{d}_{z^{n}}>D}\left\vert \phi_{z^{n}}\right\rangle \left\langle \phi_{z^{n}
}\right\vert .\label{eq:spectral-decomp-excess-dist-proj}
\end{equation}
\end{lemma}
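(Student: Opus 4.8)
The plan is to verify directly that the product vectors $\{|\phi_{z^n}\rangle\}_{z^n}$ form a complete orthonormal eigenbasis of $\overline{\Delta}_{R^nB^n}$ with the claimed eigenvalues, and then simply read off the excess-distortion projector from this diagonalization.

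First I would fix, for each symbol position $i$, the spectral decomposition $\Delta_{R_iB_i} = \sum_{z_i} d_{z_i}|\phi_{z_i}\rangle\langle\phi_{z_i}|_{R_iB_i}$, where $\{|\phi_{z_i}\rangle\}_{z_i}$ is an orthonormal basis of $\cH_{R_i}\otimes\cH_{B_i}$ (such a basis exists since $\Delta_{R_iB_i}$ is positive semi-definite, hence Hermitian). Under the natural isomorphism $\cH_{R^n}\otimes\cH_{B^n}\cong\bigotimes_{i=1}^n(\cH_{R_i}\otimes\cH_{B_i})$, which is just a reordering of tensor factors, the collection $\{|\phi_{z^n}\rangle := |\phi_{z_1}\rangle\otimes\cdots\otimes|\phi_{z_n}\rangle\}_{z^n}$ is an orthonormal basis, because a tensor product of orthonormal bases is an orthonormal basis.

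Next I would compute the action of each summand of $\overline{\Delta}_{R^nB^n}$ on such a basis vector. The operator $(I^{\otimes(i-1)})_{R_1^{i-1}B_1^{i-1}}\otimes\Delta_{R_iB_i}\otimes(I^{\otimes(n-i)})_{R_{i+1}^nB_{i+1}^n}$ acts as the identity on every tensor factor $\cH_{R_j}\otimes\cH_{B_j}$ with $j\neq i$ and as $\Delta_{R_iB_i}$ on the $i$-th one; hence it maps $|\phi_{z^n}\rangle$ to $d_{z_i}|\phi_{z^n}\rangle$. Summing over $i$ and dividing by $n$ gives $\overline{\Delta}_{R^nB^n}|\phi_{z^n}\rangle = \left(\tfrac1n\sum_{i=1}^n d_{z_i}\right)|\phi_{z^n}\rangle = \overline{d}_{z^n}|\phi_{z^n}\rangle$. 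Since this holds on a complete orthonormal basis, we conclude $\overline{\Delta}_{R^nB^n} = \sum_{z^n}\overline{d}_{z^n}|\phi_{z^n}\rangle\langle\phi_{z^n}|$, which is \eqref{eq:spec-decomp-avg-dist-obs}; note that eigenvalue degeneracies among the $\overline{d}_{z^n}$ are harmless, as they only mean several product vectors share an eigenvalue.

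Finally, for the excess-distortion projector, I would observe that the genuine spectral projector of $\overline{\Delta}_{R^nB^n}$ onto a distinct eigenvalue $\lambda$ equals $\sum_{z^n:\,\overline{d}_{z^n}=\lambda}|\phi_{z^n}\rangle\langle\phi_{z^n}|$; summing these over all distinct eigenvalues $\lambda>D$ collapses the double indexing and yields $(\overline{\Pi}_{>D})_{R^nB^n} = \sum_{z^n:\,\overline{d}_{z^n}>D}|\phi_{z^n}\rangle\langle\phi_{z^n}|$, i.e., \eqref{eq:spectral-decomp-excess-dist-proj}. There is no genuine obstacle here; the only point deserving a word of care is the tensor-factor reordering identifying $\cH_{R^n}\otimes\cH_{B^n}$ with $\bigotimes_i(\cH_{R_i}\otimes\cH_{B_i})$, so that ``identity on the remaining systems'' literally means the identity on the remaining tensor factors and the action of each summand on a product eigenvector is exactly multiplication by one eigenvalue.
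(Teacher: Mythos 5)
Your proposal is correct and takes essentially the same route as the paper: verify that the product vectors $\left\{\left\vert \phi_{z^{n}}\right\rangle\right\}$ form an orthonormal eigenbasis of $\overline{\Delta}_{R^{n}B^{n}}$ with eigenvalues $\overline{d}_{z^{n}}$, and then read off the excess-distortion projector from this diagonalization. You merely spell out the eigenvector computation and the degeneracy bookkeeping that the paper dismisses as ``easily checked,'' which is fine.
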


\begin{proof}
One can easily check that $\left\vert \phi_{z^{n}}\right\rangle $ is an eigenvector of
$\overline{\Delta}_{R^{n}B^{n}}$ with eigenvalue $\overline{d}_{z^{n}}$.
Since the orthonormal basis $\left\{  \left\vert \phi_{z^{n}}\right\rangle
\right\}  $ spans the whole support of $\overline{\Delta}_{R^{n}B^{n}}$,
this eigenvector-eigenvalue relation implies that
$\overline{\Delta}_{R^{n}B^{n}}$ has the spectral decomposition as given in
the statement of the lemma. The form of the \mean symbol-wise
excess-distortion projector follows readily from its definition and the
decomposition in (\ref{eq:spec-decomp-avg-dist-obs}).
\end{proof}

\begin{remark}
As remarked in Refs.~\cite{DHWW12,DWHW12}, the classical
case emerges as a special case in the distortion observable
framework. In the classical case, the distortion
observable is taken to be of the classical-classical type:
\be
\Delta_{RB} = \sum_{x,y} d(x,y) \vert x \rangle \langle x \vert_R \otimes 
\vert y \rangle \langle y \vert_B ,
\ee
for some distortion measure $d(x,y)$ and orthonormal
bases $\{ \vert x\rangle\}, \{ \vert y \rangle \}$. By
applying Lemma~\ref{lem:spec-decomp-avg-dist-obs},
the \mean symbol-wise distortion observable
 becomes
\be
\overline{\Delta}_{R^{n}B^{n}} = \sum_{x^n,y^n} \overline{d}(x^n, y^n) 
\vert x^n \rangle \langle x^n \vert_{R^n} \otimes 
\vert y^n \rangle \langle y^n \vert_{B^n} ,
\ee
where
\be
\overline{d}(x^n, y^n) = \frac{1}{n} \sum_{i=1}^n  d(x_i,y_i).
\ee
\end{remark}

Analogously to the one-shot case described in Section~\ref{sec:one-shot-eaq}, we define an $(M_n, D, \eps)$ entanglement-assisted quantum rate distortion code of blocklength $n$ as follows.
\begin{definition}
\label{def:n-excess-dist-code}
An $\left(  M_n,D,\varepsilon\right)  $ entanglement-assisted quantum rate
distortion (EA QRD) code for $\left\{  \mathcal{H}_{A}^{\otimes n},
\mathcal{H}_{B}^{\otimes n},\rho^{\otimes n},\overline{\Delta}
_{R^n B^n}\right\}  $ is a code with
$\left\vert \mathcal{H}_{M}^{\otimes n}\right\vert =M_n$ such that%
\begin{equation}
\emph{Tr}\left\{  
\left(   \overline\Pi_{>D}\right)  _{R^nB^n}\left(  \emph{id}_{R^n}%
\otimes\left( \mathcal{N}_{A^n\rightarrow B^n}^{ea}\right)\right) 
\left(  \varphi_{RA}^{\rho}\right)^{\otimes n}  \right\}
\leq\varepsilon. \label{eq:n-excess-distortion-def}
\end{equation}
where 
\begin{equation}\mathcal{N}_{A^n\rightarrow B^n}^{ea}\left(  \sigma_{A^n}\right)
:= \left(  \mathcal{D}_{M^nT_{B}\rightarrow B}\circ\mathcal{E}_{A^nT_{A}%
\rightarrow M^n}\right)  \left(  \sigma_{A^n}\otimes\Psi_{T_{A}T_{B}}\right)  ,
\end{equation}
The corresponding minimum achievable code size,
denoted as $M_n^{\ast}(\rho^{\otimes n},\overline{\Delta}
_{R^n B^n}, D, \eps)$, is the minimum value
of~$M$ such that there exists an $(M_n,D,\varepsilon)$ EA
QRD code of blocklength $n$. 
\end{definition}

We define the {\em{mean distortion}} of $n$ copies of the source state, $\rho^{\otimes n}$, under any CPTP map $\cN_{A^n \to B^n}$, analogously to the one-shot case, but in terms of the corresponding 
average symbol-wise distortion observable $\overline{\Delta}_{R^{n}B^{n}}$:
\be\label{eq:mean-n}
\delta_{\text{mean}}^{(n)}\left(\rho, \cN_{A^n \to B^n},\Delta_{RB}\right):= 
\tr \left(\overline{\Delta}_{R^{n}B^{n}} \, \omega_{R^nB^n}\right),
\ee
where $\omega_{R^nB^n} := \left( \text{id}_{R^n}\otimes \cN_{A^n \to B^n}\right)\left(\varphi^\rho_{RA}\right)^{\otimes n}.$ In particular, such a map induces the following mean distortion on the $i^{\text{th}}$ subsystems $R_iB_i$:
\be\label{eq:induced-mean}
\delta_{\text{mean}}\left(\rho, \cN^{(i)}_{A^n \to B^n},\Delta_{RB}\right):= 
\tr \left({\Delta}_{R_iB_i} \, \omega_{R_iB_i}\right),
\ee
where $ \omega_{R_iB_i}$ is the reduced state of $ \omega_{R^nB^n}$ on the subsystems
$R_iB_i$, and $\cN^{(i)}_{A^n \to B^n}$ denotes the marginal operation
on these systems and is given by
\be
\cN^{(i)}_{A^n \to B^n}\left(\rho^{\otimes n} \right):= \tr_{A_1,A_2,\ldots,A_{i-1},A_{i+i},\ldots,A_n}
\left( \cN_{A^n \to B^n}\left(\rho^{\otimes n} \right)\right)
\ee
 
\begin{lemma}
\label{eq:mean-excess-n}
Fix $\eps>0$, $D>0$ and consider a rate distortion observable $\Delta_{RB}$.
If $\cN_{A^n \to B^n}$ is a CPTP map acting on $\rho^{\otimes n}$ such that
$$\tr\left( \left(\overline{\Pi}_{\le D}\right)_{R^nB^n}\, \omega_{R^nB^n}\right) \ge 1- \eps,
$$
where $\omega_{R^nB^n} := \left( \text{id}_{R^n}\otimes \cN_{A^n \to B^n}\right)\left(\varphi^\rho_{RA}\right)^{\otimes n},$ then the corresponding mean distortion satisfies the bound  
\be
\delta_{\emph{mean}}^{(n)}\left(\rho, \cN_{A^n \to B^n},\Delta_{RB}\right) \le D + d_{\max} \eps,
\ee
where $d_{\max}$ denotes the maximum eigenvalue of the distortion observable.
\end{lemma}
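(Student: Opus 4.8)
The plan is to follow essentially the same argument as in the proof of Lemma~\ref{lem:excess-implies-avg}, now applied to $n$ copies of the source and to the average symbol-wise distortion observable, using the spectral decomposition supplied by Lemma~\ref{lem:spec-decomp-avg-dist-obs}. The single ingredient that needs to be checked is the operator inequality
\be
\overline{\Delta}_{R^{n}B^{n}} \le D\, I_{R^{n}B^{n}} + d_{\max}\left(\overline{\Pi}_{>D}\right)_{R^{n}B^{n}},
\ee
after which the conclusion is a one-line trace estimate.

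To obtain this operator inequality, I would start from the spectral decomposition $\overline{\Delta}_{R^{n}B^{n}} = \sum_{z^{n}} \overline{d}_{z^{n}}\,|\phi_{z^{n}}\rangle\langle\phi_{z^{n}}|$ of Lemma~\ref{lem:spec-decomp-avg-dist-obs}, with $\overline{d}_{z^{n}} = \frac{1}{n}\sum_{i=1}^{n} d_{z_{i}}$, and split the sum into the terms with $\overline{d}_{z^{n}}\le D$ and those with $\overline{d}_{z^{n}}>D$. The first group is dominated by $D$ times the projector onto the corresponding eigenspace, hence by $D\, I_{R^{n}B^{n}}$. For the second group, the key observation is that each eigenvalue is an average of single-symbol eigenvalues, so $\overline{d}_{z^{n}} = \frac{1}{n}\sum_{i=1}^{n} d_{z_{i}} \le d_{\max}$, where $d_{\max} = \Vert\Delta_{RB}\Vert_{\infty}$ bounds every $d_{z_{i}}$; thus the second group is dominated by $d_{\max}\left(\overline{\Pi}_{>D}\right)_{R^{n}B^{n}}$, using the description of the mean symbol-wise excess-distortion projector in \reff{eq:spectral-decomp-excess-dist-proj}. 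Taking the trace of this inequality against the state $\omega_{R^{n}B^{n}}$ (which is a bona fide density operator since $\cN_{A^{n}\to B^{n}}$ is CPTP and $(\varphi^{\rho}_{RA})^{\otimes n}$ is a state, so $\tr\,\omega_{R^{n}B^{n}}=1$) then gives
\be
\delta_{\text{mean}}^{(n)}\!\left(\rho, \cN_{A^{n}\to B^{n}}, \Delta_{RB}\right) = \tr\!\left(\overline{\Delta}_{R^{n}B^{n}}\,\omega_{R^{n}B^{n}}\right) \le D + d_{\max}\,\tr\!\left(\left(\overline{\Pi}_{>D}\right)_{R^{n}B^{n}}\omega_{R^{n}B^{n}}\right).
\ee

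To finish, I would convert the hypothesis $\tr\!\left(\left(\overline{\Pi}_{\le D}\right)_{R^{n}B^{n}}\omega_{R^{n}B^{n}}\right)\ge 1-\eps$ into a bound on the excess-distortion part: since $\left(\overline{\Pi}_{\le D}\right)_{R^{n}B^{n}}$ and $\left(\overline{\Pi}_{>D}\right)_{R^{n}B^{n}}$ are orthogonal projectors whose sum is at most $I_{R^{n}B^{n}}$, we get $\tr\!\left(\left(\overline{\Pi}_{>D}\right)_{R^{n}B^{n}}\omega_{R^{n}B^{n}}\right)\le \eps$, and substituting this into the displayed estimate yields $\delta_{\text{mean}}^{(n)}\le D + d_{\max}\eps$, as claimed. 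I do not anticipate any genuine obstacle here; the only step that is not a pure transcription of Lemma~\ref{lem:excess-implies-avg} is the verification that the top eigenvalue of $\overline{\Delta}_{R^{n}B^{n}}$ is still bounded by $d_{\max}$, and this is immediate from the averaging in the definition \reff{avgdelta}.
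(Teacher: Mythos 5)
Your proposal is correct and is exactly the argument the paper intends: the paper's proof of this lemma simply states that it follows from the definitions of $\overline{\Delta}_{R^{n}B^{n}}$ and $(\overline{\Pi}_{\le D})_{R^{n}B^{n}}$ by an argument analogous to that of Lemma~\ref{lem:excess-implies-avg}, which is precisely the operator inequality $\overline{\Delta}_{R^{n}B^{n}} \le D\, I + d_{\max}(\overline{\Pi}_{>D})_{R^{n}B^{n}}$ and trace estimate you carry out. Your explicit check that each eigenvalue $\overline{d}_{z^{n}}$, being an average of single-symbol eigenvalues, is bounded by $d_{\max}$ is the one detail the paper leaves implicit, and you handle it correctly.
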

\begin{proof}
This follows directly from the definitions \reff{avgdelta} and \reff{eq:spectral-decomp-excess-dist-proj} of $\overline{\Delta}_{R^{n}B^{n}}$ and $\left(\overline{\Pi}_{\le D}\right)_{R^nB^n}$, and is analogous to the proof of Lemma~\ref{lem:excess-implies-avg}.
\end{proof}
\subsection{Channel Simulation, Quantum Rate Distortion Coding, and Excess-Distortion Probability}

Lemma~1 of \cite{DHW11} shows that a channel simulation protocol can always
be used for quantum rate distortion coding with an average symbol-wise
distortion constraint, whenever the simulated channel meets
a mean single-symbol distortion constraint.\footnote{Lemma~1 of \cite{DHW11} was proved
for the entanglement fidelity based distortion measure, but a quick inspection
of its proof reveals that the lemma holds for an arbitrary distortion observable.}
The following lemma is a counterpart to
that result---Lemma~\ref{lem:chan-sim-QRD-excess-dist} shows that a
channel simulation protocol can always be used
for quantum rate distortion coding under a symbol-wise excess-distortion probability
constraint, whenever the simulated channel meets a mean single-symbol
distortion constraint:

\begin{lemma}
\label{lem:chan-sim-QRD-excess-dist}
Fix $\varepsilon_{\emph{sim}},\delta>0$ and $D\geq0$. Let $\Delta_{RB}$ be a
distortion observable such that $\left\Vert \Delta_{RB}\right\Vert _{\infty
}= d_{\max}<\infty$. Let $\rho_{A}$ be a state with purification $\left\vert
\varphi^{\rho}\right\rangle _{RA}$ and $\mathcal{N}_{A\rightarrow B}$ be a
quantum channel such that%
\[
\emph{Tr}\left\{  \Delta_{RB}\ \omega_{RB}\right\}  \leq D-\delta,
\]
where $\omega_{RB}:=\left(  \emph{id}_{R}\otimes\mathcal{N}%
_{A\rightarrow B}\right)  \left(  \varphi_{RA}^{\rho}\right)  $ and
\emph{supp}$(\omega_{RB}) \subseteq$~\emph{supp}$(\Delta_{RB})$.
Furthermore, let ${\mathcal{F}}_{n}: \cD(\cH_A^{\otimes n}) \mapsto  \cD(\cH_B^{\otimes n})$ denote a quantum operation, such that
\[
\tfrac12 \left\Vert \sigma_{R^{n}B^{n}}-\omega_{RB}^{\otimes n}\right\Vert _{1}%
\leq\varepsilon_{\emph{sim}},
\]
where%
\[
\sigma_{R^{n}B^{n}}:=\left(  \emph{id}_{R^{n}}\otimes\mathcal{F}_{n}\right)
\left(  \left(  \varphi_{RA}^{\rho}\right)  ^{\otimes n}\right)  .
\]
Then the average symbol-wise excess-distortion probability satisfies the following bound:%
\[
\emph{Tr}\left\{  \left(  \overline{\Pi}_{>D}\right)  _{R^{n}B^{n}}%
\ \sigma_{R^{n}B^{n}}\right\}  \leq\exp\left\{  -2n\delta^{2}/d_{\max}%
^{2}\right\}  +\varepsilon_{\emph{sim}}.
\]

\end{lemma}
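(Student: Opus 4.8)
The plan is to reduce the statement to a classical large-deviation estimate, combining two ingredients: Lemma~\ref{lemma:trace-inequality}, which lets one replace the simulated state $\sigma_{R^{n}B^{n}}$ by the ideal tensor-power state $\omega_{RB}^{\otimes n}$ at the cost of the additive term $\varepsilon_{\text{sim}}$, and Hoeffding's inequality, which controls the tail probability once we are working on the product state. First I would observe that, since $0\le(\overline{\Pi}_{>D})_{R^{n}B^{n}}\le I$, Lemma~\ref{lemma:trace-inequality} applied with $\rho=\omega_{RB}^{\otimes n}$ and $\sigma=\sigma_{R^{n}B^{n}}$ gives
\be
\tr\{(\overline{\Pi}_{>D})_{R^{n}B^{n}}\,\sigma_{R^{n}B^{n}}\}\le\tr\{(\overline{\Pi}_{>D})_{R^{n}B^{n}}\,\omega_{RB}^{\otimes n}\}+\tfrac{1}{2}\|\sigma_{R^{n}B^{n}}-\omega_{RB}^{\otimes n}\|_{1}\le\tr\{(\overline{\Pi}_{>D})_{R^{n}B^{n}}\,\omega_{RB}^{\otimes n}\}+\varepsilon_{\text{sim}},
\ee
so it remains to bound $\tr\{(\overline{\Pi}_{>D})_{R^{n}B^{n}}\,\omega_{RB}^{\otimes n}\}$ by $\exp(-2n\delta^{2}/d_{\max}^{2})$.

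Next I would unpack this term using the spectral decomposition from Lemma~\ref{lem:spec-decomp-avg-dist-obs}. Writing $\Delta_{RB}=\sum_{z}d_{z}|\phi_{z}\rangle\langle\phi_{z}|$, that lemma gives $(\overline{\Pi}_{>D})_{R^{n}B^{n}}=\sum_{z^{n}:\,\overline{d}_{z^{n}}>D}|\phi_{z^{n}}\rangle\langle\phi_{z^{n}}|$, with $|\phi_{z^{n}}\rangle=|\phi_{z_{1}}\rangle\otimes\cdots\otimes|\phi_{z_{n}}\rangle$ and $\overline{d}_{z^{n}}=\frac{1}{n}\sum_{i}d_{z_{i}}$. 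Setting $p(z):=\langle\phi_{z}|\omega_{RB}|\phi_{z}\rangle\ge0$, and using that $\sum_{z}|\phi_{z}\rangle\langle\phi_{z}|$ is the projector onto $\text{supp}(\Delta_{RB})$ together with the hypothesis $\text{supp}(\omega_{RB})\subseteq\text{supp}(\Delta_{RB})$, one has $\sum_{z}p(z)=1$, so $p$ is a bona fide probability distribution; the tensor-power structure of $\omega_{RB}^{\otimes n}$ then yields
\be
\tr\{(\overline{\Pi}_{>D})_{R^{n}B^{n}}\,\omega_{RB}^{\otimes n}\}=\sum_{z^{n}:\,\overline{d}_{z^{n}}>D}\ \prod_{i=1}^{n}p(z_{i})=\Pr\nolimits_{Z^{n}\sim p^{\otimes n}}\Bigl\{\tfrac{1}{n}\textstyle\sum_{i=1}^{n}d_{Z_{i}}>D\Bigr\}.
\ee

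Finally I would invoke Hoeffding's inequality: the $d_{Z_{i}}$ are i.i.d., take values in $[0,d_{\max}]$, and have mean $\mathbb{E}[d_{Z}]=\sum_{z}p(z)d_{z}=\tr\{\Delta_{RB}\,\omega_{RB}\}\le D-\delta$, so the event $\{\frac{1}{n}\sum_{i}d_{Z_{i}}>D\}$ is contained in $\{\frac{1}{n}\sum_{i}d_{Z_{i}}-\mathbb{E}[d_{Z}]>\delta\}$, whose probability is at most $\exp(-2n\delta^{2}/d_{\max}^{2})$ by Hoeffding's bound for an average of $n$ independent variables of range $d_{\max}$. Combining with the first display yields the claimed inequality. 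The argument is short once the objects are in place, so I do not anticipate a genuine obstacle; the one place calling for care is the second step---verifying that the product eigenbasis of $\overline{\Delta}_{R^{n}B^{n}}$ paired with $\omega_{RB}^{\otimes n}$ reproduces exactly an i.i.d.\ law on the eigenvalue labels, that the support hypothesis is precisely what makes the single-symbol distribution $p$ normalized (without it one would obtain only a subprobability measure, though the tail estimate would still hold), and that the relevant range of $d_{Z}$ is the interval $[0,d_{\max}]$ of width $d_{\max}$, which is what produces the $d_{\max}^{2}$ in the exponent.
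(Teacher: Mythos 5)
Your proposal is correct and follows essentially the same route as the paper's proof: both convert the trace against $\omega_{RB}^{\otimes n}$ into an i.i.d.\ tail probability for the eigenvalue labels via the spectral decomposition of Lemma~\ref{lem:spec-decomp-avg-dist-obs}, apply Hoeffding's inequality with mean at most $D-\delta$ and range $d_{\max}$, and then pass from $\omega_{RB}^{\otimes n}$ to $\sigma_{R^nB^n}$ using Lemma~\ref{lemma:trace-inequality} at a cost of $\varepsilon_{\text{sim}}$. Your remark that the support hypothesis is what makes $p$ a normalized distribution (and that the bound survives without it) is a small clarification the paper leaves implicit, but the argument is otherwise the same.
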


\begin{proof}
Let $Z$ be a random variable with the following distribution:%
\[
p_{Z}\left(  z\right)  :=\left\langle \phi_{z}\right\vert \omega
_{RB}\left\vert \phi_{z}\right\rangle ,
\]
where we recall that $\Delta_{RB} = \sum_z d_z \vert \phi_z \rangle \langle \phi_z \vert$.
Let the $n^{\text{th}}$ i.i.d.~extension $Z^{n}$ of $Z$ have the distribution:%
\[
p_{Z^{n}}\left(  z^{n}\right)  :=\left\langle \phi_{z^{n}}\right\vert
\omega_{RB}^{\otimes n}\left\vert \phi_{z^{n}}\right\rangle .
\]
Then the first condition in the statement of the lemma is equivalent to%
\[
\text{Tr}\left\{  \Delta_{RB}\ \omega_{RB}\right\}  =\sum_{z}d_{z}%
\ p_{Z}\left(  z\right)  =\mathbb{E}_{Z}\left\{  d_{Z}\right\}  \leq D-\delta.
\]
Also, observe from the i.i.d.~assumption that
$
\mathbb{E}_{Z}\left\{  d_{Z}\right\}  =\mathbb{E}_{Z^{n}}\left\{  \overline
{d}_{Z^{n}}\right\}  ,
$
so that $$\mathbb{E}_{Z^{n}}\left\{  \overline{d}_{Z^{n}}\right\}  +\delta\leq
D.$$ Using the spectral decomposition in
(\ref{eq:spectral-decomp-excess-dist-proj}) and the definition of $p_{Z^{n}}$,
we can then write%
\begin{align*}
\text{Tr}\left\{  \left(  \overline{\Pi}_{>D}\right)  _{R^{n}B^{n}}%
\ \omega_{RB}^{\otimes n}\right\} =
\sum_{z^{n}\ :\ \overline{d}_{z^{n}}>D}p_{Z^{n}}\left(  z^{n}\right).
\end{align*}
An exponentially decreasing tail bound
on Tr$\left\{  \left(  \overline{\Pi}_{>D}\right)  _{R^{n}B^{n}%
}\ \omega_{RB}^{\otimes n}\right\}  $ follows by appealing to Hoeffding's
inequality \cite{H63,Hoeffding13}:%
\begin{align*}
\sum_{z^{n}\ :\ \overline{d}_{z^{n}}>D}p_{Z^{n}}\left(  z^{n}\right)   &
=\Pr_{Z^{n}}\left\{  \overline{d}_{Z^{n}}>D\right\}  \\
&  \leq \Pr_{Z^{n}}\left\{  \overline{d}_{Z^{n}}>\mathbb{E}_{Z^{n}}\left\{
\overline{d}_{Z^{n}}\right\}  +\delta\right\}  \\
&  \leq\exp\left\{  -2n\delta^{2}/d_{\max}^{2}\right\}  .
\end{align*}
We obtain the statement of the lemma by appealing to
Lemma~\ref{lemma:trace-inequality} and the
above inequality:
\begin{align*}
\text{Tr}\left\{  \left(  \overline{\Pi}_{>D}\right)  _{R^{n}B^{n}}%
\ \sigma_{R^n B^n}\right\}    & \leq\text{Tr}\left\{  \left(
\overline{\Pi}_{>D}\right)  _{R^{n}B^{n}}\ \omega_{RB}^{\otimes n}\right\}
+\tfrac12 \left\Vert \sigma_{R^{n}B^{n}}-\omega_{RB}^{\otimes n}\right\Vert _{1}\\
& \leq\exp\left\{  -2n\delta^{2}/d_{\max}^{2}\right\}  +\varepsilon
_{\text{sim}}.
\end{align*}

\end{proof}

\begin{remark} 
In much of the prior work on quantum rate distortion
theory, the channel simulation 
method was used to prove achievability for rate distortion coding with a mean
distortion constraint in
a variety of scenarios \cite{LD09,DHW11,DHWW12,DWHW12}.
The above lemma demonstrates that all of these channel
simulation methods can be extended to achieve rate distortion coding
with an excess-distortion probability constraint.
\end{remark}

\subsection{Average Symbol-Wise Entanglement Fidelity Distortion Observable}

\label{sec:ent-fid-example}

A particular example of an average symbol-wise distortion measure
is the entanglement fidelity based distortion measure that
Barnum introduced \cite{B00}.
The distortion observable corresponding to it is taken from the average symbol-wise
entanglement fidelity:%
\[
\overline{\Delta}_{R^{n}B^{n}}:= \frac{1}{n}\sum_{i=1}^{n}\left(
I^{\otimes\left(  i-1\right)  }\right)  _{R_{1}^{i-1}B_{1}^{i-1}}%
\otimes\left(  I_{R_{i}B_{i}}-\left\vert \varphi^{\rho}\right\rangle
\left\langle \varphi^{\rho}\right\vert _{R_{i}B_{i}}\right)  \otimes\left(
I^{\otimes\left(  n-i\right)  }\right)  _{R_{i+1}^{n}B_{i+1}^{n}}.
\]
We can think of this distortion observable as being analogous to a
Hamiltonian that assigns an energy penalty of one on average if the
output state is orthogonal to $\left\vert \varphi^{\rho}\right\rangle
\left\langle \varphi^{\rho}\right\vert _{RB}$. Thus, the above distortion
observable is a quantum analog of the classical Hamming
distortion measure. We can expand the distortion
observable $\overline{\Delta}_{R^{n}B^{n}}$\ by making the following
assignments:%
\begin{align}
\Pi_{0}  &  := \left\vert \varphi^{\rho}\right\rangle \left\langle
\varphi^{\rho}\right\vert _{RB},\\
\Pi_{1}  &  :=  I_{RB}-\left\vert \varphi^{\rho}\right\rangle \left\langle
\varphi^{\rho}\right\vert _{RB}. \label{eq:Pi0-Pi1-ent-fid}
\end{align}
By applying the spectral decomposition in (\ref{eq:spec-decomp-avg-dist-obs}), we arrive at
\begin{align}
\overline{\Delta}_{R^{n}B^{n}}  
&  =\sum_{j=1}^{n}\frac{j}{n}\left[  \sum_{x^{n}\in\left\{  0,1\right\}
^{n}\ :\ \text{wt}\left(  x^{n}\right)  =j}\Pi_{x^{n}}\right]  ,
\label{eq:dist-obs-ent-fid-expanded}%
\end{align}
where 
$$\Pi_{x^{n}}:= \Pi_{x_{1}}\otimes\cdots\otimes
\Pi_{x_{n}},$$ and wt$\left(  x^{n}\right)  $ is equal to the Hamming weight of
the string $x^{n}$. Thus, the analogy with the classical Hamming weight
distortion measure becomes clear:\ A reproduction of the quantum source at the
output is given an average penalty proportional to the number of terms in the
tensor product that are orthogonal to $\left\vert \varphi^{\rho}\right\rangle
\left\langle \varphi^{\rho}\right\vert _{RB}$.

By applying Lemma~\ref{lem:spec-decomp-avg-dist-obs} again, we
can determine the form of the excess-distortion observable
$\left(  \overline{\Pi}_{>D}\right)  _{R^{n}B^{n}}$ corresponding to
$\overline{\Delta}_{R^{n}B^{n}}$, with $0\leq D\leq1$. Since two projectors
$\Pi_{x^{n}}$ and $\Pi_{y^{n}}$\ with $x^{n},y^{n}\in\left\{  0,1\right\}
^{n}$ are orthogonal whenever $x^{n}\neq y^{n}$, by using
(\ref{eq:excess-dist-obs-def}) and (\ref{eq:dist-obs-ent-fid-expanded}), we
can write%
\begin{equation}
\left(  \overline{\Pi}_{>D}\right)  _{R^{n}B^{n}}=\sum_{j\in\left\{
1,\ldots,n\right\}  \ :\ j/n>D}\left[  \sum_{x^{n}\in\left\{  0,1\right\}
^{n}\ :\ \text{wt}\left(  x^{n}\right)  =j}\Pi_{x^{n}}\right]  .
\label{eq:excess-distortion-expansion}%
\end{equation}

\section{First-Order Convergence for a Memoryless Quantum Source}

Consider the case in which Alice has $n>1$ copies of
the source state $\rho \in \cD(\cH_A)$. Let $M_n^*(D, \eps)$
denote the minimum achievable code size, for an
entanglement-assisted quantum rate distortion code of
blocklength $n$, at excess-distortion probability $\eps$
and distortion $D$, for the average symbol-wise distortion
observable $\overline{\Delta}_{R^{n}B^{n}}$ defined by \reff{avgdelta}.
In this section, we show that the one-shot bounds from
the previous sections converge to the known expression for the
entanglement-assisted quantum rate distortion function from \cite{DHW11}:
\be
\lim_{\eps \to 0} \lim_{n \to \infty}
\frac{1}{n} \log(M_n^*(D, \eps)) = R_{ea}^{q}\left(  D\right),
\ee
where
\be
\label{eq:EA-QRD-func}
R_{ea}^{q}\left(  D\right)
:= \frac{1}{2}\min_{\mathcal{N}%
_{A\rightarrow B}}\left\{  I\left(  R;B\right)  _{\omega}:
\delta_{\text{mean}}(\rho, \cN_{A \to B}, \Delta_{RB})
\leq D  \right\},
\ee
where $\delta_{\text{mean}}(\rho, \cN_{A \to B}, \Delta_{RB})$ denotes the mean distotion and
is defined through \reff{eq:mean}.
(We should clarify that \cite{DHW11} proved the above result
for an entanglement fidelity based distortion measure, but it is clear that
the results there hold for an arbitrary distortion observable.)

\subsection{First-Order Convergence of the Achievability Bound for a Memoryless Source}

We now show that the one-shot expressions from
Sections~\ref{sec:max-ent-sim} and \ref{sec:sim-emb-states} provide a
lower bound on the entanglement-assisted quantum rate distortion function defined in
\eqref{eq:EA-QRD-func}. We do this by applying
Lemma~\ref{lem:chan-sim-QRD-excess-dist} and the one-shot bounds in
Sections~\ref{sec:max-ent-sim} and \ref{sec:sim-emb-states}.

We first analyze how the expression \reff{MES} from Section~\ref{sec:max-ent-sim} converges; the analysis for the expression \reff{embez} from Section~\ref{sec:sim-emb-states}
then follows similarly.
From \reff{MES} and the fact that we are now considering the average symbol-wise
excess-distortion projector $\left(\overline{\Pi}_{>D}\right)_{R^{n}B^{n}}$,
it follows that if Alice and Bob share entanglement in the form
of maximally entangled states, then
\be\label{MESn}
\frac{1}{n}\log M^*_n \le \min_{\cN_{A^n \to B^n} , \, \eps_1}
\left\{\frac{1}{2n} \left[H_0^\delta(B^n)_\Omega - H_{\min}^\delta(B^n|R^n)_\Omega\right] +
\frac{1}{n} \log \frac1{\delta'} : (A) , 0< \eps_1 <\eps\right\},
\ee
for every $n$ and $\eps$, where $\cN_{A^n \to B^n}$ is a CPTP map from
$\cD(\cH_A^{\otimes n}) \to \cD(\cH_B^{\otimes n})$,  
$\delta$, $\delta^\prime$ are positive constants defined as in Section~\ref{sec:max-ent-sim}, 
\be\label{Om}
\Omega_{R^nB^n} = \left(  \text{id}_{R^n}\otimes\cN_{A^n\rightarrow B^n}\right)  \left(  \varphi_{RA}^{\rho}\right)^{\otimes n},
\ee
and $(A)$ denotes the condition
\be\label{a2}
(A) \,:\, \tr\left(  \left(\overline{\Pi}_{>D}\right)_{R^{n}B^{n}}
\left(  \text{id}_{R^n} \otimes\cN_{A^n\rightarrow B^n}\right)  
\left(  \varphi_{RA}^{\rho}\right)^{\otimes n} \right)\le \eps_1,
\ee
for $\left(  \overline{\Pi}_{>D}\right)_{R^{n}B^{n}}$ the excess-distortion projection operator defined in \reff{eq:spectral-decomp-excess-dist-proj}.

We can obtain an upper bound on the RHS of \reff{MESn} by restricting the minimization
to CPTP maps of the form
$\cN_{A^n\rightarrow B^n} :=  \left(\cN_{A\rightarrow B}\right)^{\otimes n}$.
Furthermore, we can simply pick $\eps_1 = \eps / 2$ so that we just require
that the excess-distortion probability of the ideal simulation of the map
$ \left(\cN_{A\rightarrow B}\right)^{\otimes n}$  
is no larger than $\eps / 2$ (recall that the ideal simulation
is achieved by Alice
acting on the source state with the Stinespring isometry of the map). 
This yields the following bound
\be\label{MESbnd}
\frac{1}{n}\log M^*_n \le \min_{\cN_{A \to B}}
\left\{\frac{1}{2n} \left[H_0^\delta(B^n)_{\omega^{\otimes n}}
- H_{\min}^\delta(B^n|R^n)_{\omega^{\otimes n}}\right] +
\frac{1}{n} \log \frac1{\delta'} : (A^\prime) \right\},
\ee
where 
\be\label{a2-other}
(A^\prime) \,:\, \tr\left(  \left(\overline{\Pi}_{>D}\right)_{R^{n}B^{n}}
\omega_{RB}^{\otimes n} \right)\le \frac{\eps}{2},
\ee
Now consider any map $\mathcal N_{A\rightarrow B}$ such that
Tr$\{ \Delta_{RB} \omega_{RB}\} \leq D - \nu$
for some $\nu > 0$ where
$$\omega_{RB} = \left(  \text{id}_{R}
\otimes\cN_{A\rightarrow B}\right)  \left(  \varphi_{RA}^{\rho}\right).$$
By Lemma~\ref{lem:chan-sim-QRD-excess-dist}, the
excess-distortion probability resulting from the ideal simulation obeys
$$
\tr\left(  \left(\overline{\Pi}_{>D}\right)_{R^{n}B^{n}}
\omega_{RB}^{\otimes n} \right)\leq \exp\{-2n\nu^2/ d^2_{\max}\}.
$$
For large enough $n$, this can be made less than $\eps/2$, so that we can further restrict the
minimization to maps satisfying Tr$\{ \Delta_{RB} \omega_{RB}\} \leq D - \nu$.
Then the following upper bound applies
for large enough~$n$:
\be
\frac{1}{n}\log M^*_n \le  \min_{\cN_{A \to B} }
\left\{\frac{1}{2n} \left[H_0^\delta(B^n)_{\omega^{\otimes n}}
- H_{\min}^\delta(B^n|R^n)_{\omega^{\otimes n}}\right] +
 \frac{1}{n}\log \frac{1}{\delta'} : \text{Tr}\{\Delta_{RB} \omega_{RB}\} \leq D - \nu \right\},
\ee
and the total excess-distortion probability of the protocol (which consists of the ideal simulation 
followed by quantum state-splitting) is less than $\eps$.
Then the relations \reff{as_h0} and \reff{as_hmin} imply that
the following bound holds 
\begin{align}
\lim_{\eps\to 0}\lim_{n\to \infty}\frac{1}{n}\log M^*_n (D,\eps)&\le  
 \min_{\cN_{A \to B} }
\left\{ \frac{1}{2} \left[H(B)_{\omega}
- H(B|R)_{\omega}\right] : \text{Tr}\{\Delta_{RB} \omega_{RB}\} \leq D - \nu
\right\}
\nonumber\\
&=   \min_{\cN_{A \to B} }
\left\{\frac{1}{2} I(R;B)_{\omega}
  : \text{Tr}\{\Delta_{RB} \omega_{RB}\} \leq D - \nu \right\},\label{asright}
\end{align}
By taking the limit $\nu \rightarrow 0$, we observe that the one-shot expression is bounded from above
by the entanglement-assisted quantum rate distortion function
given in (\ref{eq:EA-QRD-func}).

If instead Alice and Bob share entanglement in the form of embezzling states, 
then it follows from \reff{embez} that 
\be\label{embezn}
\frac{1}{n}\log M^*_n
 \le  \min_{\cN_{A^n \to B^n}}
 \left\{ \frac{1}{2n} I_{\max}^{\eps/5}(B^n;R^n)_\Omega +
 \frac{1}{n}\left(2 \log(5/\eps) + 4 + \log \log (|B^n| + (5/\eps)^2) \right) :
  (A) \right\},
\ee
where $\eps_1 = \eps/5$ in ($A$) and 
$\Omega_{R^nB^n}$ is the state defined by \reff{Om}. 
A very similar argument as above then shows that this
expression is bounded from above by (\ref{eq:EA-QRD-func}) in the limit.

\subsection{First-order convergence of the converse bound for a memoryless source}

\label{sec:IID-limit-converse}

We now show that in the limit
of asymptotically many copies of a memoryless source, the
converse bound given by Corollary~\ref{cor:simple-converse-bnd}
is bounded from below by the expression \reff{eq:EA-QRD-func} for
the entanglement-assisted quantum rate distortion function.

\begin{theorem}\label{thm:first-order-converge}
For an average symbol-wise distortion measure, the lower bound from
Corollary~\ref{cor:simple-converse-bnd}\ is bounded from below by 
$R_{ea}^{q}\left(  D\right)$ defined in \reff{eq:EA-QRD-func}:%
\begin{align}
& \lim_{\eps \to 0} \lim_{n \to \infty} \frac{1}{n}\min_{\mathcal{N}_{A^{n}\rightarrow B^{n}}}\min
_{\psi_{B^{n}}}\frac{1}{2}\left[  D_{H}^{1-\varepsilon^{\prime}}\left(
\omega_{R^n B^n}
||\left(  \varphi_{R}^{\rho}\right)  ^{\otimes n}\otimes\psi_{B^{n}}\right)
-\log\frac1{\varepsilon^{\prime\prime}}\right]  \geq R_{ea}^{q}\left(  D\right).
\label{eq:converse-converge-iid}
\end{align}
where $\eps^{\prime} > 2 \eps$,
$\eps^{\prime\prime} = \eps^{\prime} (\eps^{\prime}/2 - \eps)$,
\be\label{eq:lbound}
\omega_{R^n B^n} := \left(  \emph{id}_{R^{n}}\otimes\mathcal{N}_{A^{n}\rightarrow B^{n}}%
\right)  \left(  \left(  \varphi_{RA}^{\rho}\right)  ^{\otimes n}\right),
\ee
and the outermost minimization is over quantum channels $\cN_{A^{n}\rightarrow B^{n}}$ such that 
\be
\emph{Tr}\left\{  
\left(   \overline\Pi_{\le D}\right)  _{R^nB^n}\left(  \emph{id}_{R^n}%
\otimes\left( \mathcal{N}_{A^n\rightarrow B^n}^{ea}\right)\right) 
\left(  \varphi_{RA}^{\rho}\right)^{\otimes n}  \right\}
\geq 1- \varepsilon.
\label{eq:cond}
\ee
\end{theorem}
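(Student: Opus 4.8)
The quantity on the left of \reff{eq:converse-converge-iid} is exactly the one-shot lower bound of Corollary~\ref{cor:simple-converse-bnd}, instantiated for the $n$-fold source and the average symbol-wise excess-distortion projector, so the plan is to process it as in the proof of Theorem~\ref{thm:unify-one-shot-EA-QRD}. First I would apply Lemma~\ref{lem:DH2} to replace $D_H^{1-\eps'}(\omega_{R^nB^n}||(\varphi_R^\rho)^{\otimes n}\otimes\psi_{B^n})$ by $D_{\max}^{\sqrt{2\eps'}}(\omega_{R^nB^n}||(\varphi_R^\rho)^{\otimes n}\otimes\psi_{B^n})+\log(1/\eps')$; inserting a further minimization over reference marginals $\tau_{R^n}$ (which only decreases the expression) and noting that $(\varphi_R^\rho)^{\otimes n}$ is one admissible choice of $\tau_{R^n}$, the minimum over $\psi_{B^n}$ is bounded below by $\tI_{\max}^{\sqrt{2\eps'}}(R^n;B^n)_\omega$ as defined in \reff{tI}. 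Since $\eps^{\prime\prime}=\eps'(\eps'/2-\eps)$, the net additive constant $\tfrac12\log(\eps'/2-\eps)$ is $O(1)$ in $n$, so everything reduces to showing that, for every channel $\cN_{A^n\to B^n}$ obeying \reff{eq:cond} with $\omega_{R^nB^n}$ as in \reff{eq:lbound},
\[
\liminf_{n\to\infty}\frac{1}{n}\,\tI_{\max}^{\sqrt{2\eps'}}(R^n;B^n)_\omega\ \geq\ 2R_{ea}^{q}(D+d_{\max}\eps)-\mu(\eps'),
\]
where $\mu(\eps')\to0$ as $\eps'\to0$; feeding this back, taking $n\to\infty$ and then $\eps\to0$ (letting $\eps'\downarrow2\eps$ as well, which is permitted since $\eps'>2\eps$) and using continuity of $R_{ea}^q$ at $D$ yields \reff{eq:converse-converge-iid}.

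To establish the displayed bound I would single-letterize in two steps. \emph{Smoothing.} Writing $\tI_{\max}^{\sqrt{2\eps'}}(R^n;B^n)_\omega=\tI_{\max}(R^n;B^n)_{\tomega}$ for the optimal subnormalized $\tomega$ with $P(\tomega,\omega)\leq\sqrt{2\eps'}$, first normalize $\tomega$ (incurring only an $O(1)$ correction, since $\tr\tomega\geq1-\sqrt{2\eps'}$), use $D_{\max}\geq D$ together with the fact that the product of its own marginals is optimal for a state in the $D$-minimization to get $\tI_{\max}\geq I(R^n;B^n)$ of the normalized state, and then apply a continuity estimate for the (conditional) von Neumann entropy to obtain $I(R^n;B^n)_{\tomega}\geq I(R^n;B^n)_\omega-n\,\mu(\eps')$ with $\mu(\eps')=O(\sqrt{\eps'}\log|A|)$ --- the crucial point being that $\mu$ depends only on the \emph{single-symbol} dimension $|R|=|A|$, not on $n$. \emph{Superadditivity.} Because $\cN_{A^n\to B^n}$ acts only on $A^n$, the reference marginal of $\omega$ is the product state $\omega_{R^n}=(\varphi_R^\rho)^{\otimes n}$; iterating strong subadditivity and then discarding the systems $B_j$ with $j\neq i$ from the conditioning gives $H(R^n|B^n)_\omega\leq\sum_{i=1}^nH(R_i|B_i)_\omega$, whence $I(R^n;B^n)_\omega=\sum_iH(R_i)_\omega-H(R^n|B^n)_\omega\geq\sum_{i=1}^nI(R_i;B_i)_\omega$.

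It remains to identify each per-symbol term with the rate-distortion function. For each $i$, define the CPTP map $\mathcal{M}^{(i)}_{A_i\to B_i}(\xi):=\tr_{B_{\neq i}}\big[\cN_{A^n\to B^n}(\xi\otimes\rho^{\otimes(n-1)})\big]$ obtained by feeding $\rho$ into the remaining inputs; then $\omega_{R_iB_i}=(\mathrm{id}_{R_i}\otimes\mathcal{M}^{(i)})\varphi_{R_iA_i}^\rho$, so by the definition \reff{eq:EA-QRD-func} of $R_{ea}^q$ we have $I(R_i;B_i)_\omega\geq2R_{ea}^q(d_i)$ with $d_i:=\tr(\Delta_{R_iB_i}\,\omega_{R_iB_i})$. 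The excess-distortion constraint \reff{eq:cond} together with Lemma~\ref{eq:mean-excess-n} forces $\tfrac1n\sum_id_i\leq D+d_{\max}\eps$, and since $R_{ea}^q$ is nonincreasing and convex (both standard for entanglement-assisted rate-distortion functions, cf.~\cite{DHW11}; convexity follows by a classical-flag argument on the channels), Jensen's inequality gives $\tfrac1n\sum_iI(R_i;B_i)_\omega\geq2R_{ea}^q\big(\tfrac1n\sum_id_i\big)\geq2R_{ea}^q(D+d_{\max}\eps)$. Chaining the three steps, and observing that all the estimates are uniform over feasible channels $\cN_{A^n\to B^n}$ so that they pass to the minimum over $\cN$, proves the displayed $\liminf$ bound.

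The step I expect to be the main obstacle is the smoothing/continuity estimate: the optimally smoothed state $\tomega$ need not have a product reference marginal, so superadditivity cannot be applied to it directly, and one must first ``undo'' the smoothing at the cost of a correction that scales like $n\log|A|$ and only then exploit the product structure of $\omega_{R^n}$. This is harmless precisely because the correction per symbol, $\mu(\eps')=O(\sqrt{\eps'}\log|A|)$, involves only the fixed single-symbol dimension $|A|$ and hence vanishes as $\eps'\to0$, which is taken only after $n\to\infty$; a secondary nuisance is bookkeeping the subnormalization throughout the smooth-entropy manipulations, which can be absorbed into the same continuity bounds and into $O(1)$ additive terms that are killed upon dividing by $n$.
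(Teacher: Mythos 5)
Your proposal is correct and follows essentially the same route as the paper's proof: Lemma~\ref{lem:DH2} to pass from $D_H^{1-\eps'}$ to $D_{\max}^{\sqrt{2\eps'}}$, smoothing down to the relative entropy (hence mutual information) of a nearby state, an Alicki--Fannes-type continuity bound costing $O(n\sqrt{\eps'}\log|R|)$ to return to $\omega_{R^nB^n}$ \emph{before} single-letterizing, superadditivity of the mutual information via the product reference marginal, the per-symbol reduction to $R_{ea}^q$ through the induced channels together with Lemma~\ref{eq:mean-excess-n}, and convexity/monotonicity of $R_{ea}^q$ followed by the double limit. The only cosmetic difference is that you route the bound through the alternative smooth max-information $\tI_{\max}^{\sqrt{2\eps'}}$ as an intermediate quantity, whereas the paper's proof of this theorem goes directly from $D_{\max}^{\sqrt{2\eps'}}$ to the relative entropy of the smoothed state; you also correctly flag the key subtlety (the smoothed state need not have a product reference marginal) that the paper handles in exactly the same way.
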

In proving the above theorem we make use of the following lemma, which follows
directly from Lemma 14 of \cite{DHW11}.
\begin{lemma}
\label{lem:props_Reaq}
The entanglement-assisted quantum rate distortion function $R_{ea}^q(D)$
is non-increasing and convex:
$$D_1 < D_2 \,\implies \,R_{ea}^q(D_1) \ge R_{ea}^q(D_2).$$
$$R_{ea}^q\left(\lambda D_1 + (1-\lambda)D_2\right)\le 
\lambda R_{ea}^q(D_1) + ( 1-\lambda) R_{ea}^q(D_2),$$
where $0 \le \lambda \le 1$.
\end{lemma}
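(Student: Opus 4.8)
The plan is to derive both properties directly from the variational definition \reff{eq:EA-QRD-func} of $R_{ea}^q(D)$, exploiting two structural features of the optimization: the feasible set of channels grows with $D$, and both the objective $\tfrac12 I(R;B)_\omega$ and the constraint $\delta_{\text{mean}}(\rho,\cN_{A\to B},\Delta_{RB})\le D$ behave well under convex mixtures of channels.

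For monotonicity I would simply observe that for $D_1<D_2$ the feasibility constraint becomes less restrictive: any channel $\cN_{A\to B}$ with $\delta_{\text{mean}}(\rho,\cN_{A\to B},\Delta_{RB})\le D_1$ automatically satisfies the constraint at level $D_2$. Hence the minimization defining $R_{ea}^q(D_2)$ is taken over a superset of the channels feasible for $R_{ea}^q(D_1)$, and minimizing the same objective over a larger set can only decrease its value. This gives $R_{ea}^q(D_2)\le R_{ea}^q(D_1)$, i.e.\ that $R_{ea}^q$ is non-increasing.

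For convexity, fix $\lambda\in[0,1]$ and let $\cN^{(1)}_{A\to B}$ and $\cN^{(2)}_{A\to B}$ attain the minima in \reff{eq:EA-QRD-func} at distortion levels $D_1$ and $D_2$ (attainment follows from compactness of the set of CPTP maps together with continuity of $I(R;B)_\omega$ and of the distortion functional). Consider the mixed channel $\cN^\lambda_{A\to B}:=\lambda\,\cN^{(1)}_{A\to B}+(1-\lambda)\,\cN^{(2)}_{A\to B}$, with output state $\omega^\lambda_{RB}=\lambda\,\omega^{(1)}_{RB}+(1-\lambda)\,\omega^{(2)}_{RB}$. Since the distortion functional $\delta_{\text{mean}}(\rho,\cdot,\Delta_{RB})=\tr(\Delta_{RB}\,\cdot)$ is linear in the channel, $\cN^\lambda_{A\to B}$ meets the constraint at level $\lambda D_1+(1-\lambda)D_2$ and is therefore feasible for $R_{ea}^q(\lambda D_1+(1-\lambda)D_2)$. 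Combining feasibility with convexity of $I(R;B)_\omega$ in the channel (for the fixed input $\varphi^\rho_{RA}$) yields
\[
R_{ea}^q\bigl(\lambda D_1+(1-\lambda)D_2\bigr)\le\tfrac12 I(R;B)_{\omega^\lambda}\le\lambda R_{ea}^q(D_1)+(1-\lambda)R_{ea}^q(D_2),
\]
where the second inequality uses the convexity bound together with the optimality of $\cN^{(1)}_{A\to B}$ and $\cN^{(2)}_{A\to B}$.

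The one genuine technical point, and hence the step I would treat most carefully, is the convexity of $I(R;B)_\omega$ as a function of the channel; I would establish it by a flag (time-sharing) argument combined with the data-processing inequality. Introduce an auxiliary classical register $F$ and the channel $\cN_{A\to BF}$ that, independently of its input, prepares $\vert 1\rangle\langle 1\vert_F$ with probability $\lambda$ and $\vert 2\rangle\langle 2\vert_F$ with probability $1-\lambda$, applying $\cN^{(i)}_{A\to B}$ conditioned on $F=i$; its output is $\omega_{RBF}=\sum_i p_i\,\omega^{(i)}_{RB}\otimes\vert i\rangle\langle i\vert_F$ with $(p_1,p_2)=(\lambda,1-\lambda)$. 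Because the coin is independent of the reference, $I(R;F)_\omega=0$, and the chain rule gives $I(R;BF)_\omega=I(R;F)_\omega+I(R;B|F)_\omega=\sum_i p_i\,I(R;B)_{\omega^{(i)}}$. Discarding $F$ is a CPTP map on the $B$-side, so monotonicity of the mutual information under partial trace yields $I(R;B)_{\omega^\lambda}\le I(R;BF)_\omega=\lambda I(R;B)_{\omega^{(1)}}+(1-\lambda)I(R;B)_{\omega^{(2)}}$, exactly the bound invoked above. This reproduces the content of Lemma~14 of \cite{DHW11}, and since nothing in the argument uses the specific form of $\Delta_{RB}$, the conclusion holds for an arbitrary distortion observable.
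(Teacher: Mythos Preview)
Your proposal is correct. The paper itself does not spell out a proof here but merely states that the lemma ``follows directly from Lemma~14 of \cite{DHW11}''; you have supplied exactly that argument in detail (and you note this yourself at the end), so your approach coincides with the paper's.
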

We also make use of the following property of the quantum mutual information
which was proved in \cite{DHW11}.
\begin{lemma}\label{lem:super}(Superadditivity of
quantum mutual information): The quantum mutual information
is superadditive in the sense that
for any CPTP map $\cN_{A_1A_2 \to B_1B_2}$,
$$
I(R_1R_2;B_1B_2)_\sigma \ge I(R_1;B_1)_\sigma + I(R_2;B_2)_\sigma,
$$
where $$\sigma_{R_1R_2B_1B_2} = \cN_{A_1A_2 \to B_1B_2} \left( \phi_{R_1A_1} \otimes \varphi_{R_2A_2}\right).$$
\end{lemma}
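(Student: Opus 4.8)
The plan is to reduce the claimed inequality to a single subadditivity-type bound on conditional entropies and then to establish that bound through three applications of strong subadditivity (SSA), equivalently the nonnegativity of conditional quantum mutual information. The product structure of the input enters in exactly one place, so I would isolate it first.

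First I would record that the reference systems remain uncorrelated at the output. Since $\cN_{A_1A_2 \to B_1B_2}$ is trace preserving and acts only on $A_1A_2$, tracing out $B_1B_2$ from $\sigma_{R_1R_2B_1B_2}$ returns the reduced \emph{input} state on $R_1R_2$, so that $\sigma_{R_1R_2} = \phi_{R_1}\otimes\varphi_{R_2}$ and hence
\[
H(R_1R_2)_\sigma = H(R_1)_\sigma + H(R_2)_\sigma .
\]
(Note this uses only that the input is a product across the cut $R_1A_1\,|\,R_2A_2$, not purity of $\phi$ or $\varphi$.) Writing each mutual information in the form $I(X;Y)_\sigma = H(X)_\sigma - H(X|Y)_\sigma$ and using this identity to cancel the matching entropy terms, the desired superadditivity is seen to be equivalent to
\[
H(R_1R_2|B_1B_2)_\sigma \le H(R_1|B_1)_\sigma + H(R_2|B_2)_\sigma .
\]

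Finally I would prove this bound by the chain
\[
H(R_1R_2|B_1B_2)_\sigma \le H(R_1|B_1B_2)_\sigma + H(R_2|B_1B_2)_\sigma \le H(R_1|B_1)_\sigma + H(R_2|B_2)_\sigma ,
\]
where each step is an instance of a nonnegative conditional mutual information: the first inequality is $I(R_1;R_2|B_1B_2)_\sigma \ge 0$, while the second combines $I(R_1;B_2|B_1)_\sigma \ge 0$ and $I(R_2;B_1|B_2)_\sigma \ge 0$, i.e.\ discarding a subsystem from the conditioning register can only increase the conditional entropy. All three are immediate consequences of strong subadditivity of the von Neumann entropy.

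There is no serious obstacle here; the argument is essentially a bookkeeping reduction to SSA. The only subtlety worth flagging is that the product form of the reference is genuinely essential. Without the identity $H(R_1R_2)_\sigma = H(R_1)_\sigma + H(R_2)_\sigma$, the second step would leave a residual nonnegative term $H(R_1)_\sigma + H(R_2)_\sigma - H(R_1R_2)_\sigma$ on the unfavorable side, and superadditivity can fail for correlated references. It is precisely the tensor-product input $\phi_{R_1A_1}\otimes\varphi_{R_2A_2}$ that supplies the needed equality, so I would make sure the write-up attributes the inequality to this structural assumption rather than to SSA alone.
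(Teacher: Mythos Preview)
Your argument is correct. The reduction to $H(R_1R_2|B_1B_2)_\sigma \le H(R_1|B_1)_\sigma + H(R_2|B_2)_\sigma$ via the product structure on the references, followed by three applications of strong subadditivity, is exactly the standard route, and each step is valid as stated.

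Note that the present paper does not actually prove this lemma; it simply imports it from \cite{DHW11}. The proof there proceeds by the chain rule $I(R_1R_2;B_1B_2)_\sigma = I(R_1;B_1B_2)_\sigma + I(R_2;B_1B_2|R_1)_\sigma$, applies data processing to the first term to obtain $I(R_1;B_1)_\sigma$, and then uses the independence of $R_1$ and $R_2$ together with SSA to lower bound the second term by $I(R_2;B_2)_\sigma$. This is the same content as your argument, merely organized on the mutual-information side rather than the conditional-entropy side; both use the product form of $\sigma_{R_1R_2}$ at exactly one point and otherwise rely only on SSA.
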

\begin{proof}[Proof of Theorem~\ref{thm:first-order-converge}]
First note that the condition \reff{eq:cond}, Lemma~\ref{eq:mean-excess-n} and the definition \reff{avgdelta} of ${\overline{\Delta}}_{R^nB^n}$ implies that
\be
\label{eq:bound2}
\tr\left( {\overline{\Delta}}_{R^nB^n} \omega_{R^nB^n}\right)  \equiv \frac{1}{n} \sum_{i=1}^n \tr\left({{\Delta}}_{R_iB_i} \omega_{R_iB_i}\right)\le D + d_{\max}\eps,
\ee
where $d_{\max}$ denotes the maximum eigenvalue of the distortion observable
$\Delta_{RB}$.

Let $\mathcal{N}_{A^{n}\rightarrow B^{n}}^{\ast}$ be the map achieving the
minimum in \eqref{eq:converse-converge-iid}
and define the output state
 $\omega_{R^{n}B^{n}}:= \left(  \text{id}_{R^{n}}\otimes\mathcal{N}%
_{A^{n}\rightarrow B^{n}}^{\ast}\right)  \left(  \left(  \varphi_{RA}^{\rho
}\right)  ^{\otimes n}\right) $.
Using Lemma~\ref{lem:DH2} with $\delta = 1-\eps^\prime$, and the following relation from \cite{datta-2008-2},
\be
D_{\max
}^{\sqrt{2\varepsilon^\prime}}\left(  \rho||\sigma\right)  \geq D\left(
\widetilde{\rho}||\sigma\right)  , \label{eq:hypo-max-relation}
\ee
where $\widetilde{\rho}\in \cB^{\sqrt{2\varepsilon^\prime}}(\rho)$
is the state minimizing the smooth max-entropy,  we find that%
\begin{align}
{\hbox{LHS of \reff{eq:converse-converge-iid}}} &\ge 
\frac{1}{n}\min_{\psi_{B^{n}}}\frac{1}{2}\left[  D_{\max}^{\sqrt
{2\varepsilon^{\prime}}}\left(  \omega_{R^{n}B^{n}}||\left(  \varphi_{R}%
^{\rho}\right)  ^{\otimes n}\otimes\psi_{B^{n}}\right)  +\log\left(
\frac{\varepsilon^{\prime}}{2}-\varepsilon\right)  \right]  \nonumber\\
& \geq \frac{1}{n}\min_{\psi_{B^{n}}}\frac{1}{2}\left[  D\left(  \widetilde
{\omega}_{R^{n}B^{n}}||\left(  \varphi_{R}^{\rho}\right)  ^{\otimes n}%
\otimes\psi_{B^{n}}\right)  +\log\left(  \frac{\varepsilon^{\prime}}%
{2}-\varepsilon\right)  \right]  \nonumber\\
& \geq \frac{1}{n}\min_{\psi_{B^{n}},\tau_{R^{n}}}\frac{1}{2}\left[  D\left(
\widetilde{\omega}_{R^{n}B^{n}}||\tau_{R^{n}}\otimes\psi_{B^{n}}\right)
+\log\left(  \frac{\varepsilon^{\prime}}{2}-\varepsilon\right)  \right]  \nonumber\\
& \geq  \frac{1}{2n}\left[  D\left(  \widetilde{\omega}_{R^{n}B^{n}}%
||\widetilde{\omega}_{R^{n}}\otimes\widetilde{\omega}_{B^{n}}\right)
+\log\left(  \frac{\varepsilon^{\prime}}{2}-\varepsilon\right)  \right]  \nonumber\\
& =  \frac{1}{2n}\left[  I\left(  R^{n};B^{n}\right)  _{\widetilde{\omega
}_{R^{n}B^{n}}}+\log\left(  \frac{\varepsilon^{\prime}}{2}-\varepsilon\right)
\right]  \nonumber\\
& \geq \frac{1}{2n} I\left(  R^{n};B^{n}\right)_{\omega_{R^{n}B^{n}}}
- f(\eps, \eps^\prime, n),
\label{eq:n-converse1}
\end{align}
where
$$
f(\eps, \eps^\prime, n) := \frac{1}{2n}\left[5\sqrt{2\varepsilon^{\prime}}n\log\left\vert R\right\vert -3h_{2}\left(
\sqrt{2\varepsilon^{\prime}}\right)  +\log\left(  \frac{\varepsilon^{\prime}%
}{2}-\varepsilon\right)  \right] .
$$
The third inequality
follows by introducing a further minimization. The fourth inequality follows from
\[
\min_{\sigma_R, \tau_B} D(\rho_{RB} || \sigma_R \otimes \tau_B) =
D(\rho_{RB} || \rho_R \otimes \rho_B).
\]
The last inequality follows by applying the Alicki-Fannes'
inequality (continuity of conditional entropy) \cite{AF04}.
Continuing we have,
\begin{align}
{\hbox{LHS of \reff{eq:n-converse1}}}
&\ge  \frac{1}{2n} \sum_{i=1}^n I(R_i ; B_i) -  f(\eps, \eps^\prime, n)\nonumber\\
&\ge  \frac{1}{n} \sum_{i=1}^n R_{ea}^q \left( \tr (\Delta_{R_iB_i} \omega_{R_iB_i})\right)
 -  f(\eps, \eps^\prime, n)\nonumber\\
&\ge  R_{ea}^q \left(  \frac{1}{n} \sum_{i=1}^n\tr (\Delta_{R_iB_i} \omega_{R_iB_i})\right)
 -  f(\eps, \eps^\prime, n)\nonumber\\
&\ge  R_{ea}^q \left( D + d_{\max}\eps\right)
 -  f(\eps, \eps^\prime, n).
\label{eq:n-converse}
\end{align}
The first inequality follows from superadditivity of quantum mutual information (Lemma~\ref{lem:super}). The second inequality follows from the fact that the state  $\omega_{R_iB_i}$ has mean distortion equal to $\tr (\Delta_{R_iB_i} \omega_{R_iB_i})$, and $R_{ea}^q\left(\tr (\Delta_{R_iB_i} \omega_{R_iB_i}) \right)$ is the minimum of half the quantum mutual information over all CPTP maps on the system $R_iA_i$ with this distortion. The last two inequalities follow from the convexity of the function $R_{ea}^q(D)$ (Lemma~\ref{lem:props_Reaq}), the inequality \reff{eq:bound2},
and from the fact that $R_{ea}^q(D)$ is a non-increasing function of $D$ (Lemma~\ref{lem:props_Reaq}).

Finally,
we can take $\eps' = 3 \eps$. Then in the limit as $n\rightarrow\infty$ and $\varepsilon\rightarrow0$,
the lower bound in the last line of \reff{eq:n-converse} converges to $R_{ea}^{q}\left(  D\right)  $.
\end{proof}

\section{Finite Blocklength Results for the Isotropic Qubit Source}
\label{sec:isotropic}

In this section, we 
obtain tight lower and upper
bounds on the minimum achievable code size for the case of an
isotropic qubit source with entanglement
assistance \cite{Devetak:2002it,DWHW12}. These bounds hold
for any finite blocklength
$n$, the entanglement fidelity based distortion observable
$\overline{\Delta}_{R^{n}B^{n}}$ from
(\ref{eq:dist-obs-ent-fid-expanded}), any excess-distortion probability
$\varepsilon$, and any distortion $D$ where $0 \leq D \leq 1$.
In this case, the source is equal to
$\pi_{A}^{\otimes n}$, where $\pi_{A}:=  I_{A}/2$. A purification of one copy of the
source is the Bell state%
\[
\left\vert \Phi\right\rangle _{RA}:= \frac{1}{\sqrt{2}}\left(  \left\vert
00\right\rangle _{RA}+\left\vert 11\right\rangle _{RA}\right)  .
\]
Ref.~\cite{DWHW12} proved that the entanglement-assisted quantum rate distortion
function in (\ref{eq:EA-QRD-func}) for this example is equal to
\be
  R_{ea}^q(D) = \begin{cases}
                  1-\tfrac{1}{2}H\left( \{1-D,\tfrac{D}{3},\tfrac{D}{3},\tfrac{D}{3}\}\right) 
                                                           & \text{ if } 0\leq D\leq \frac34, \\
                  0 & \text{ if } \frac34 \leq D \leq 1,
                \end{cases}
\ee
where we have used the notation $H\left( \{\cdot\}\right)$ to denote
the Shannon entropy of the probability distribution inside the braces~$\{\cdot\}$.

The methods in the following subsections combined with the results of
Kostina and Verd\'u \cite{KV12} allow us to conclude the following finite blocklength
characterization for entanglement-assisted quantum rate distortion coding of an
isotropic qubit source:
\begin{equation}
R(n,D,\eps) := \frac{1}{n} \log(M_n^*(D,\eps)) = 1-\frac{1}{2}\left[h_{2}\left(  D\right)  -D\log3\right]
+\frac{1}{4n}\log\left(  n  \right)  +
O\left(\frac1n\right) .
\end{equation}
if $ 0 < D < \frac34$.

\subsection{Finite Blocklength Converse for the Isotropic Qubit Source}

\label{sec:finite-blocklength-converse-isotropic-qubit}
Applying Proposition~\ref{prop:EA-QRD-converse-alt} (specifically, the bound in
(\ref{eq:final-bound-conv})) to the scenario
mentioned above, we have the
following lower bound on the minimum achievable code size $M$:%
\begin{align}
M  &  \geq\max_{\sigma_{R^{n}A^{n}}}\min_{\psi_{B^{n}}}\sqrt{\frac
{\beta_{\varepsilon}\left(  \Phi_{RA}^{\otimes n}||\sigma_{R^{n}A^{n}}\right)
}{\text{Tr}\left\{  \left(  \overline{\Pi}_{\leq D}\right)  _{R^{n}B^{n}%
}\left(  \sigma_{R^{n}}\otimes\psi_{B^{n}}\right)  \right\}  }}\nonumber\\
&  \geq\min_{\psi_{B^{n}}}\sqrt{\frac{\beta_{\varepsilon}\left(  \Phi
_{RA}^{\otimes n}||\Phi_{RA}^{\otimes n}\right)  }{\text{Tr}\left\{  \left(
\overline{\Pi}_{\leq D}\right)  _{R^{n}B^{n}}\left(  \pi_{R}^{\otimes
n}\otimes\psi_{B^{n}}\right)  \right\}  }}\nonumber\\
&  \geq\sqrt{\frac{1-\varepsilon}{2^{-n}\max_{\psi_{B^{n}}}\text{Tr}\left\{
\left(  \overline{\Pi}_{\leq D}\right)  _{R^{n}B^{n}}\left(  I_{R}^{\otimes
n}\otimes\psi_{B^{n}}\right)  \right\}  }} \label{eq:isotropic-EA-QRD-bound-1} .
\end{align}
The second inequality follows by choosing $\sigma_{R^{n}A^{n}}$ from the
optimization to be equal to $\Phi_{RA}^{\otimes n}$. The third inequality
follows from the definition of $\beta_{\varepsilon}\left(  \Phi_{RA}^{\otimes
n}||\Phi_{RA}^{\otimes n}\right)  $\ in (\ref{eq:errorbeta}) and by
realizing that $\pi_{R}^{\otimes n}=2^{-n}I_{R}^{\otimes n}$. Since the
expression in the trace features the operator $I_{R}^{\otimes n}$ on the right
side, we can evaluate it effectively by taking a partial trace of the excess
distortion observable with respect to the $R^{n}$ systems. By exploiting the
expansion in (\ref{eq:excess-distortion-expansion}) and the fact that%
\begin{align*}
\text{Tr}_{R}\left\{  \Pi_{0}\right\}   &  =\text{Tr}_{R}\left\{  \Phi
_{RB}\right\}  =\frac{1}{2}I_{B},\\
\text{Tr}_{R}\left\{  \Pi_{1}\right\}   &  =\text{Tr}_{R}\left\{  I_{RB}%
-\Phi_{RB}\right\}  =\frac{3}{2}I_{B},
\end{align*}
we find that%
\begin{align*}
\text{Tr}_{R^{n}}\left\{  \left(  \overline{\Pi}_{\leq D}\right)  _{R^{n}%
B^{n}}\right\}   &  =\text{Tr}_{R^{n}}\left\{  \sum_{j\in\left\{
1,\ldots,n\right\}  \ :\ j/n\leq D}\left[  \sum_{x^{n}\in\left\{  0,1\right\}
^{n}\ :\ \text{wt}\left(  x^{n}\right)  =j}\Pi_{x^{n}}\right]  \right\} \\
&  =\sum_{j\in\left\{  1,\ldots,n\right\}  \ :\ j/n\leq D}\left[  \sum
_{x^{n}\in\left\{  0,1\right\}  ^{n}\ :\ \text{wt}\left(  x^{n}\right)
=j}\text{Tr}_{R^{n}}\left\{  \Pi_{x^{n}}\right\}  \right] \\
&  =\sum_{j\in\left\{  1,\ldots,n\right\}  \ :\ j/n\leq D}\left[  \sum
_{x^{n}\in\left\{  0,1\right\}  ^{n}\ :\ \text{wt}\left(  x^{n}\right)
=j}\left(  \frac{1}{2}\right)  ^{n-j}\left(  \frac{3}{2}\right)  ^{j}%
I_{B}^{\otimes n}\right] \\
&  =\sum_{j\in\left\{  1,\ldots,n\right\}  \ :\ j/n\leq D}\binom{n}{j}\left(
\frac{1}{2}\right)  ^{n-j}\left(  \frac{3}{2}\right)  ^{j}I_{B}^{\otimes n}\\
&  =\frac{1}{2^{n}}\sum_{j\in\left\{  1,\ldots,n\right\}  \ :\ j/n\leq
D}\binom{n}{j}3^{j}I_{B}^{\otimes n}\\
&  =\frac{1}{2^{n}}S_{\lfloor nD \rfloor}I_{B}^{\otimes n}%
\end{align*}
where
\begin{equation}
S_{k}:= \sum_{j=0}^{k}\binom{n}{j}3^{j}. \label{eq:Sk}
\end{equation}
Substituting into (\ref{eq:isotropic-EA-QRD-bound-1}), this leaves us with%
\begin{align*}
  \sqrt{\frac{1-\varepsilon}{2^{-2n}S_{\lfloor nD \rfloor}  \max_{\psi_{B^{n}}%
}\text{Tr}\left\{  \psi_{B^{n}}\right\}  }}
  =\sqrt{\frac{1-\varepsilon}{2^{-2n}S_{\lfloor nD \rfloor}  }} .
\end{align*}
Taking logarithms of both sides and dividing by $n$, we get%
\[
\frac{1}{n}\log M\geq1-\frac{1}{2n}\log S_{\lfloor nD \rfloor}  +\frac{1}{2n}%
\log\left(  1-\varepsilon\right) .
\]
Applying the following estimate stated as Eq.~(390) in Appendix~H of \cite{KV12}, which holds
for $0 < D < 3/4$,
\begin{equation}
\log S_{\lfloor nD \rfloor} = nh_2(D)+ nD \log 3 - \frac{1}{2}\log n + O(1),
\label{eq:estimate-SnD}
\end{equation}
we find that
\begin{align*}
R = \frac{1}{n}\log M    \geq
  1-\frac{1}{2}\left[  h_{2}\left(  D\right)  +D\log3\right] + \frac{\log(n)}{4n}
+ O\left(\frac1n\right).
\end{align*}

Considering that the bound from \cite{DHW11} for the entanglement-assisted
quantum rate distortion function was
the first-order term $1-\frac{1}{2}\left[  h_{2}\left(  D\right)
+D\log3\right]  $, the above bound provides a strong refinement of it that
includes logarithmic corrections for finite blocklength.

The following bound applies to entanglement-assisted rate distortion
with classical communication by applying
super-dense coding \cite{PhysRevLett.69.2881}:
\begin{align*}
 \frac{1}{n}\log M_C    \geq
  2-\left[  h_{2}\left(  D\right)  +D\log3\right] + \frac{\log(n)}{2n}
+ O\left(\frac1n\right).
\end{align*}

\subsection{Finite Blocklength Achievability Part for the Isotropic Qubit
Source}

\subsubsection{The Teleportation Method}

For the case of an isotropic qubit source, there is a simple
teleportation strategy \cite{PhysRevLett.70.1895} for
achieving its entanglement-assisted quantum rate distortion function. First,
we consider a strategy that employs entanglement assistance with noiseless
classical communication, and we count the number of classical bits sent. Then,
we relate this strategy to one with entanglement assistance and noiseless
quantum communication by employing super-dense coding \cite{PhysRevLett.69.2881}.

The protocol outlined here is related to the forward classical communication cost
of simulating a Bell-diagonal channel via teleportation \cite{PhysRevLett.83.3081}.
However, the task that is accomplished here
is rate-distortion coding rather than channel simulation (see Remark~\ref{rem:sim-vs-RD}).

The protocol operates as follows:

\begin{enumerate}
\item Alice shares $n$ copies of the Bell state $\left\vert \Phi
\right\rangle _{RA}$ with the reference. She also shares $n$ copies of the
maximally entangled state $\left\vert \Phi \right\rangle _{A^{\prime}B}$
with Bob (recall that in the entanglement-assisted setting, they are allowed as much
entanglement as they need in any form that they wish).

\item Alice and Bob operate as in the teleportation protocol
\cite{PhysRevLett.70.1895}. She performs a
Bell measurement on each of the $AA^{\prime}$ systems, obtaining a classical
sequence $x^{n}:=  x_{1}\cdots x_{n}$, where $x_{i}\in\left\{
0,1,2,3\right\}  $.

\item If Alice were to send the sequence $x^{n}$ itself, then Bob would be
able to reconstruct the states $\left(  \left\vert \Phi\right\rangle
_{RB}\right)  ^{\otimes n}$ perfectly. Instead, Alice and Bob employ a
classical $4$-ary rate distortion code with codewords $\left\{  y^{n}\left(
m\right)  \right\}  _{m\in\left[  M\right]  }$. So, Alice finds the codeword
representative $y^{n}\left(  m\right)  $\ with minimum distortion from the
measurement outcomes $x^{n}$, as measured by the Hamming distance distortion
measure:%
\[
\overline{d}\left(  x^{n},y^{n}\right)  := \frac{1}{n}\sum_{i=1}%
^{n}I\left\{  x_{i}\neq y_{i}\right\}  ,
\]
where $I\left\{  \cdot\right\}  $ is an indicator function, equal to one if
its argument is true and equal to zero otherwise. Alice then sends the index
$m$ of the codeword representative $y^{n}\left(  m\right)  $ over the
noiseless classical channels.

\item Bob, knowing the code $\left\{  y^{n}\left(  m\right)  \right\}
_{m\in\left[  M\right]  }$, performs the correction operations according to
the sequence $y^{n}\left(  m\right)  $ as given in the teleportation protocol.
The result is that he creates a state of the following form:%
\[
\left\vert \Phi_{x^{n},y^{n}}\right\rangle := \bigotimes\limits_{i=1}%
^{n}\sigma_{y_{i}}\sigma_{x_{i}}\left\vert \Phi\right\rangle _{R_{i}B_{i}%
},
\]
where $\sigma_{y_i}$ and $\sigma_{x_i}$ are one of the four
Pauli operators $\{I, \sigma_X, \sigma_Y, \sigma_Z\}$.

\item The distortion as measured by the symbol-wise entanglement fidelity is
then equivalent to the distortion $\overline{d}\left(  x^{n},y^{n}\right)  $
as given above, because
\[
\text{Tr}\left\{  \frac{1}{n}\sum_{i=1}^{n}\left(  I_{R_{i}B_{i}}-\left\vert
\Phi\right\rangle \left\langle \Phi\right\vert _{R_{i}B_{i}}\right)
\left\vert \Phi_{x^{n},y^{n}}\right\rangle \left\langle \Phi_{x^{n},y^{n}%
}\right\vert \right\}  =\frac{1}{n}\sum_{i=1}^{n}I\left\{  x_{i}\neq
y_{i}\right\}  .
\]

\end{enumerate}

Thus, the performance of this protocol as measured by the excess-distortion
probability is exactly the same as the performance of a classical rate
distortion code for a uniform 4-ary source. Kostina and Verd\'u have calculated
tight finite blocklength bounds for this case \cite{KV12}, and as such, we can
consider them directly for our purposes here. In particular, they have shown
that there exists a classical $\left(  n,M,D,\varepsilon\right)  $ code satisfying%
\[
\varepsilon\leq\left(  1-S_{\left\lfloor nD\right\rfloor }4^{-n}\right)
^{M},
\]
where $S_{k}$ is defined in (\ref{eq:Sk}).
Applying their bound stated as Eq.~(395) of Appendix~H of \cite{KV12} and the same estimate
as in (\ref{eq:estimate-SnD}), we find the following bound:
\[
2-h_{2}\left(  D\right)  -D\log3+\frac{1}{2n}\log  n   
+O\left(\frac{1}{n}\right)
\geq\frac{1}{n}\log M.
\]
Using the fact that this then leads to a protocol for 
entanglement-assisted quantum rate distortion coding by super-dense
coding, we obtain the following bound for such a code:%
\[
1-\frac{1}{2}\left[h_{2}\left(  D\right)  -D\log3\right]
+\frac{1}{4n}\log  n    +
 O\left(\frac{1}{n}\right)
\geq\frac{1}{n}\log M_Q.
\]

\section{Conclusion}

We have provided a framework for one-shot quantum rate distortion coding,
by introducing the notion of an excess-distortion projector corresponding
to a distortion observable. We then proved lower and upper bounds on the minimum
qubit compression size of an entanglement-assisted quantum rate distortion code.
The lower bounds also serve as lower bounds for unassisted codes, since entanglement
can only help to reduce the minimum qubit compression size. These bounds were
expressed in terms of entropic quantities familiar from the smooth
entropy formalism \cite{Renner2005, T12, DKFRR12, DH11}. Next, we showed
how these entanglement-assisted bounds converge to the known expression for
the entanglement-assisted quantum rate distortion function of a memoryless
quantum information source. Finally, we determined a tight, finite blocklength
characterization for the entanglement-assisted minimum qubit compression
size of an isotropic qubit source.
The quantum teleportation strategy used in the
achievability part of this characterization is the first
strategy, to our knowledge, different from
channel simulation to be employed for the purpose of quantum rate
distortion coding.

There are many questions to consider going forward from here.
First, it would be ideal to find better characterizations of the
minimum qubit compression size for an unassisted source (this is
of course related to the fact that we would like a better characterization
of the unassisted quantum rate distortion function other than the one
given in \cite{DHW11}, which is in terms of the entanglement of purification).
Second, understanding a quantum analog of the ``tilted information''
from \cite{KV12} might be helpful since this quantity gives a second-order
refinement of the classical rate-distortion function.
Finally, it would also be good to generalize quantum
rate distortion theory to the continuous-variable
setting since this is one of the main motivations for pursuing
quantum rate distortion. Some results were offered in \cite{CW08}, but unfortunately
they only considered Barnum's coherent-information lower bound, which we
know is not a good bound since it can become negative.

\smallskip

{\bf Acknowledgements}. We are grateful to Victoria Kostina and Sergio Verd\'u for several helpful
conversations during the ``Beyond i.i.d.~in information theory''
workshop at the University of Cambridge and to Will Matthews as well for
interesting and helpful discussions.
ND is grateful to Pembroke College for sponsoring the workshop in which the idea
for this project had its genesis.
JMR and RR were supported by the Swiss National Science
Foundation (through the National Centre of
Competence in Research ``Quantum Science and Technology'' and grant No. 200020-135048)
and by the European Research Council (grant 258932).
MMW acknowledges support from the Centre de Recherches Math\'{e}matiques and is grateful
for the hospitality of the Statistical Laboratory at the University of Cambridge
and the Pauli Center for Theoretical Studies (ETH Zurich)
during a research visit in January and February of 2013,
when the majority of this work was completed.

\bibliographystyle{plain}
\bibliography{Ref}

\end{document}